\def\nccz {$\text{NCC}_0$}
\def\polylog {\text{polylog}}
\def\whp {{\text{w.h.p.}}}
\newcommand{\Poly}[1]{poly(#1)}
\newcommand{\expect}[1]{\mathbb{E}\left[#1\right]}
\newcommand{\prob}[1]{\textbf{\textsc{Pr}}\left(#1\right)}
\newcommand{\Spartan}{\text{Spartan}}
\newcommand{\Ll}{\mathcal{L}}
\newcommand{\LL}[1]{\mathcal{L}^{(#1)}}
\newcommand{\A}{\mathcal{A}}
\newcommand{\B}{\mathcal{B}}
\newcommand{\BB}[1]{\mathcal{B}^{(#1)}}
\newcommand{\C}{\mathcal{C}}
\newcommand{\CC}[1]{\mathcal{C}^{(#1)}}
\newcommand{\Oo}{\mathcal{O}}
\newcommand{\M}{\mathcal{M}}
\newcommand{\W}{\mathcal{W}}
\newcommand{\V}{\mathcal{V}}
\newcommand{\X}{\mathcal{S}}
\newcommand{\XX}[1]{\mathcal{S}^{(#1)}}
\newcommand{\skippl}{\texttt{SKIP}$^+$}
\newcommand{\neig}[4]{N_{#1}^{(#2,#3)}(#4)}
\newcommand{\Polylog}[1]{\text{polylog}{(#1)}}
\newcommand{\lmax}[2]{\ell_{\max}^{#1}(#2)}
\newcommand{\Lmax}[1]{\ell_{\max}(#1)}
\newcommand{\childrenLevel}[3]{\Gamma_{#1}^{#2}(#3)}
\newcommand{\ChildrenLevel}[2]{\Gamma_{#1}{(#2)}}
\newcommand{\Children}[1]{\Gamma{(#1)}}
\newcommand{\nosemic}{\renewcommand{\@endalgocfline}{\relax}}
\newcommand{\dosemic}{\renewcommand{\@endalgocfline}{\algocf@endline}}
\let\oldnl\nl
\newcommand{\nonl}{\renewcommand{\nl}{\let\nl\oldnl}}
\def\ShowComment{True}
\newcommand{\shortOnly}[1]{\ifthenelse{\boolean{short}}{#1}{}}
\newcommand{\onlyShort}[1]{\ifthenelse{\boolean{short}}{}{#1}}
\newcommand{\longOnly}[1]{\ifthenelse{\boolean{short}}{}{#1}}
\newcommand{\onlyLong}[1]{\ifthenelse{\boolean{short}}{}{#1}}
\title{Highly Dynamic and Fully Distributed Data Structures} 
 \author{John Augustine}{Indian Institute of Technology Madras, Chennai, India }{augustine@cse.iitm.ac.in}{https://orcid.org/0000-0003-0948-3961}{}
\author{Antonio Cruciani}{Aalto University, Espoo, Finland }{antonio.cruciani@aalto.fi}{https://orcid.org/0000-0002-9538-4275}{This work was supported in part by the Research Council of Finland, Grant 363558 and Partially supported by the Cryptography, Cybersecutiry, and Distributed
	Trust laboratory at IIT Madras (Indian Institute of Technology Madras) while visiting the institute.}
\author{Iqra Altaf Gillani}{National Institute of Technology Srinagar, Srinagar, India }{iqraaltaf@nitsri.ac.in}{https://orcid.org/0000-0001-8656-4023}{}
\authorrunning{Augustine et al.} 
\keywords{Peer-to-peer network, dynamic network, data structure, churn, distributed algorithm, randomized algorithm.} 
\begin{document}

\maketitle

\begin{abstract}

We study robust and efficient distributed algorithms for building and maintaining distributed data structures in dynamic Peer-to-Peer (P2P) networks. P2P networks are characterized by a high level of dynamicity with abrupt heavy node \emph{churn} (nodes that join and leave the network continuously over time). We present a novel algorithmic framework to build and maintain, with high probability, a skip list for $\text{poly}(n)$ rounds despite a \emph{churn rate} of $O(n/\log n)$, which is the number of nodes joining and/or leaving \emph{per round}; $n$ is the stable network size. We assume that the churn is controlled by an oblivious adversary that has complete knowledge and control of what nodes join and leave and at what time and has unlimited computational power, but is oblivious to the random choices made by the algorithm. Importantly, the maintenance overhead in any interval of time (measured in terms of the total number of messages exchanged and the number of edges formed/deleted) is (up to log factors) proportional to the churn rate. Furthermore, the algorithm is scalable in that the messages are small (i.e., at most $\text{polylog}(n)$ bits) and every node sends and receives at most $\text{polylog}(n)$ messages per round.

To the best of our knowledge, our work provides the first-known fully-distributed data structure and associated algorithms that provably work under highly dynamic settings (i.e., high churn rate that is near-linear in $n$). Furthermore, the nodes operate in a localized manner.

    Our framework crucially relies on new distributed and parallel algorithms to merge two $n$-element skip lists and delete a large subset of items, both in $\Oo(\log n)$ rounds with high probability. These procedures may be of independent interest due to their elegance and potential applicability in other contexts in distributed data structures.

Finally, we believe that our framework can be generalized to other distributed and dynamic data structures including graphs, potentially leading to stable distributed computation despite heavy churn.

\end{abstract}

\newpage
\tableofcontents
\newpage
\section{Introduction}
Peer-to-peer (P2P) computing is emerging as one of the key networking technologies in recent years with many application systems. These have been used to provide distributed resource sharing, storage, messaging, and content streaming, e.g., Gnutella~\cite{Gnutella}, Skype~\cite{Skype}, BitTorrent~\cite{Bittorrent}, ClashPlan~\cite{Crashplan}, Symform~\cite{Sysform}, and Signal~\cite{Signal}. P2P networks are intrinsically highly dynamic networks characterized by a high degree of node \emph{churn} i.e., nodes continuously joining and leaving the network. Connections (edges) may be added or deleted at any time and thus the topology abruptly changes. Moreover, empirical measurements of real-world P2P networks~\cite{Falkner_2007,Gummadi_2002,Sen_2004,Stutzbach_2006} show that the churn rate is very high: nearly $50\%$ of the peers in real-world networks are replaced within an hour. Interestingly, despite a large churn rate, these measurements show that the size of the network remains relatively \emph{stable}. 

P2P networks and algorithms have been proposed for a wide variety of tasks such as data storage and retrieval~\cite{Peddemors_2010,Rowstron_2001,Druschel_2001,Hasan_2005}, collaborative filtering~\cite{Canny_2002}, spam detection~\cite{Cloudmark}, data mining~\cite{Datta_2006}, worm detection and suppression~\cite{Vlachos_2004,Malan_2005}, privacy protection of archived data~\cite{Geambasu_2009}, and for cloud computing services~\cite{Sysform,Babaoglu_2012}. Several works proposed  efficient implementations of distributed data structures with low maintenance time (searching, inserting, and deleting elements) and congestion. These include different versions of distributed hash tables (DHT) like CAN~\cite{Ratnasamy_2001}, Chord~\cite{Stoica_2003}, Pastry~\cite{Rowstron_2001}, and Tapestry~\cite{Zhao_2002}. Such distributed data structures have good load-balancing properties but offer no control over where the data is stored. Also, these show partial resilience to node failures. 

To deal with more structured data in P2P networks several distributed data structures have been developed such as Skip Graphs (Aspnes and Shah~\cite{Aspnes_2007}), SkipNets (Harvey et al.,~\cite{Harvey_2003}), Rainbow Skip graphs (Goodrich et al.,~\cite{Goodrich_2006}), and Skip$^+$ (Jacob et al.,~\cite{Jacob_2014}). They have been formally shown to be resilient to a limited number of faults (or equivalently small amounts of churn).
However, none of these data structures have theoretical guarantees of being able to work in a dynamic network with a very high adversarial churn rate, which can be as much as near-linear (in the network size) per round. This can be seen as a major bottleneck in the implementation and use of data structures for P2P systems. Furthermore, several works deal with the problem of the maintenance of a specific graph topology~\cite{Augustine_2015_b,Pandurangan_2016,Drees_2016,Augustine_2021}, solve the agreement problem~\cite{Augustine_2013_a}, elect a leader~\cite{Augustine_2015_a}, and storage and search of data~\cite{Augustine_2013_b} under adversarial churn. Unfortunately, these structures are not conducive for efficient searching and querying.

In this paper, we take a step towards designing provably robust and scalable distributed data structures and concomitant algorithms for large-scale dynamic P2P networks.  More precisely, we focus on the fundamental problem of maintaining a distributed skip list data structure in P2P networks. Many distributed implementations of data structures inspired by skip lists have been proposed to deal with nodes leaving and joining the network. Unfortunately, a common major drawback among all these approaches is the lack of provable resilience against heavy churn. The problem is especially challenging since the goal is to guarantee that, under a high churn rate, the data structure must (i) be able to preserve its overall structure (ii) quickly update the structure after insertions/deletions, and (iii) correctly answer queries. In such a highly dynamic setting, it is non-trivial to even guarantee that a query can ``go through'' the skip list, the churn can simply remove a large fraction of nodes in just one time-step and stop or block the query. On the other hand, it is prohibitively expensive to rebuild the data structure from scratch whenever large number of nodes leave and a new set of nodes join. Thus we are faced with the additional challenge of ensuring that the maintenance overhead is proportional to the number of nodes that leave/join. In a nutshell, our goal is to design and implement distributed data structures that are resilient to heavy adversarial churn without compromising simplicity or scalability.

\subsection{Model: Dynamic Networks with Churn}
Before we formally state our main result, we discuss our dynamic network with churn (DNC) model, which is used in previous works to model peer-to-peer networks in which nodes can be added and deleted at each round by an adversary (see e.g.~\cite{Augustine_2012,Augustine_2015_b}).


We consider a synchronous dynamic network controlled by an \emph{oblivious} adversary, i.e., the adversary does not know the random choices made by the nodes.
The adversary fixes a dynamically changing sequence of sets of network nodes $\V =\left(V_0,V_1,V_2,\dots, V_t,\dots\right)$ where $V_t\subset U$, for some universe of nodes $U$ and $t\geq 1$, denotes the set of nodes present in the network during round $t$. A node $u$ such that $u\in V_t$ and $u\notin V_{t+1}$ is said to be \emph{leaving} at time $t+1$. Similarly, a node $v\notin V_t$ and $v\in V_{t+1}$ is said to be \emph{joining} the network at time $t+1$. Each node has a unique ID and we simply use the same notation (say, $u$) to denote both the node as well as its ID. The lifetime of a node $u$ is (adversarially chosen to be) a pair $(s_u, t_u)$, where $s_u$ refers to its {\em start time} and $t_u$ refers to its \emph{termination time}. The size of the vertex set is assumed to be stable $|V_t| = n$ for all $t$; this assumption can be relaxed to consider a network that can shrink and grow arbitrarily as discussed in Section~\ref{sec:churn_resilient_network}. Each node in $U$ is assumed to have a unique ID chosen from an $ID$ space of size polynomial in $n$. Moreover, for the first $B = \beta \log n$ rounds (for a sufficiently large constant $\beta>0$), called the \emph{bootstrap phase}, the adversary is \emph{silent}, i.e., there is no churn; more precisely, $V_0 =V_1 =\dots =V_B$. We can think of the bootstrap phase as an initial period of stability during which the protocol prepares itself for a harsher form of dynamism. Subsequently, the network is said to be in \emph{maintenance phase} during which $\V$ can experience churn in the sense that a large number of nodes might join and leave dynamically at each time step.

Communication is via message passing. Nodes can send messages of size $\Oo(\Polylog{n})$ bits to each other if they know their IDs, but no more than $\Oo(\Polylog{n})$ incoming and outgoing messages are allowed at each node per round.
Furthermore, nodes can create and delete edges over which messages can be sent/received. This facilitates the creation of structured communication networks. A bidirectional edge $e=(u,v)$ is formed when one of the end points $u$ sends $v$ an invitation message followed by an acceptance message from $v$. The edge $e$ can be deleted when either $u$ or $v$ sends a delete message. Of course, $e$ will be deleted if either $u$ or $v$ leaves the network.

During the maintenance phase, the adversary can apply a churn up to $\Oo(n/\log n)$ nodes per round.
More precisely, for all $t\geq B$, $|V_{t}\setminus V_{t+1}| = |V_{t+1}\setminus V_t| \in \Oo(n/\log n)$ and 
the adversary is only required to ensure that any new node that joins the network must be connected to a distinct pre-existing node in the network; this is to avoid too many nodes being attached to the same node, thereby causing congestion issues. 

Each node can store data items. We wish to maintain them in the form of a suitable dynamic and distributed data structure. For simplicity in exposition, we assume that each node $u$ has one data item which is also its ID. This coupling of the node, its ID, and its data allows us to refer to them interchangeably as either node $u$ or data item $u$. In Section~\ref{sec:generalize_keys}, we will discuss how this coupling assumption can be relaxed to allow a node to contain multiple disparate data items.

\subsection{Problem Statement}
Our primary goal  is to build and maintain an \emph{approximate distributed data structure} of all the items/nodes in the network. The data structure takes at most $I$ rounds to insert a data item $u$ and at most $R$ rounds to remove it. $I$ and $R$ are parameters that we would like to minimize. The distributed data structure should be able to answer membership queries.
Specifically, each query $q(x, r, s)$ is initiated at some source node $s$ in round $r$ and asks whether data item $x$ is present in the network currently. This implies that node $s$ now wishes to know if some node in the network has the value $x$. The query should be answered by round $r+Q$ where $Q$ is the \emph{query time}, i.e., the time to respond to queries. If $t_s > r+Q$, i.e., node $s$ is in the network until round $r+Q$, it must receive the response. We are required to give the guarantee that the query will be answered correctly as long as either (i) there is a node with associated value $x$ whose effective lifetime subsumes the time range $[r, r+Q]$ (in which case the query must be answered affirmatively) or (ii) there is no node with value $x$ whose effective lifetime has any overlap with $[r, r+Q]$ (in which case the query must be answered negatively). In all other cases, we allow queries to be answered arbitrarily. 

Unlike classical data structures, which guarantee strong consistency, our construction deliberately prioritizes scalability and liveness under high churn. As a result, it provides \emph{eventual consistency}, meaning that queries will reflect the correct state once the system has stabilized, but temporary inconsistencies may arise during periods of high churn. We refer to this design as an \emph{approximate distributed data structure}, reflecting its structural inspiration and relaxed consistency guarantees. This tradeoff is necessary to maintain efficiency and fault-tolerance in the presence of high adversarial churn.


In addition, all our algorithms must satisfy a dynamic notion of \emph{resource-competitiveness}~\cite{Bender_2015}. Let $T$ be any interval between two time instants $t_s$ and $t_e$. We require the work in the interval $T$ to be proportional (within $\Polylog{n}$ factors) to the amount of churn experienced from time $t_s-\Oo(\log n)$ to $t_e$. Formally, define the amount of work $W_t$ as the overall number of exchanged messages plus newly formed edges among nodes at time $t$. Let $\W(t_s,t_e) = \sum_{i = {t_s}}^{t_e} W_i$ and $C(t_s,t_e)$ to be the \emph{amount of work} and churn experienced in the interval of time between $t_s$ and $t_e$, respectively. We require that $\W(t_s,t_e) \in \tilde{\Oo}\left(C(t_s-\Oo(\log n),t_e)\right)$ \whp\ for any interval of time $t_s$, $t_e$, where with the notation $\tilde{\Oo}(\cdot)$ we ignore $\Polylog{n}$ factors. Formally,
\begin{figure}[htb!]
	\centering
	\includegraphics[scale = 0.6]{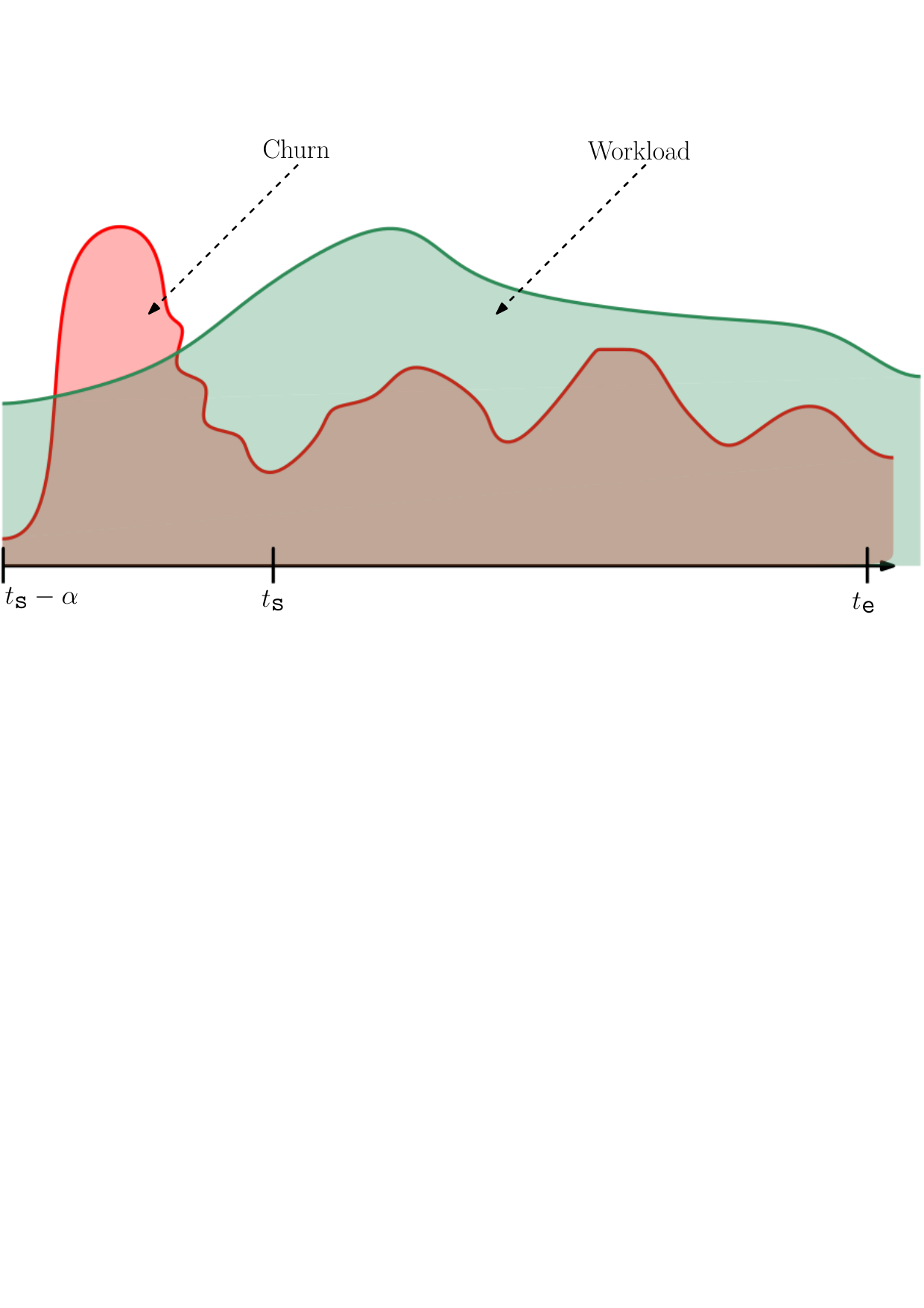}
\caption{Visual representation of our definition of dynamic work efficiency.}\label{fig:dynamic_work_efficiency}
\end{figure}
\begin{definition}[Dynamic Resource Competitiveness]\label{def:dynamic_resource_comp}
	An algorithm $\mathcal{A}$ is $(\alpha,\beta)$-dynamic resource competitive if for any time instants $t_s$ and $t_e$ such that $t_e> t_s$, $\W(t_s,t_e)\in\Oo(\beta C(t_s-\alpha,t_e))$.
\end{definition}
Figure~\ref{fig:dynamic_work_efficiency} is a visual representation of the $(\alpha,\beta)$-dynamic resource competitiveness. The red area represents the amount of churn experienced over time while the green area represent the workload of the algorithm to cope with the churn experienced so far. Figure~\ref{fig:dynamic_work_efficiency} illustrates a spike in churn rate just prior to $t_s$ that resulted in an increased workload after $t_s$. Thus, to capture such delayed effects, the dynamic resource competitiveness definition considers  $C(t_s-\alpha,t_e))$ instead of just $C(t_s,t_e))$. In this work, whenever we refer to an algorithm as dynamic resource competitive, we mean that the algorithm is $(\alpha,\beta)$-dynamic resource competitive with $\alpha =\Oo(\log n)$ and $\beta = \Polylog{n}$. 

\subsection{Our Contributions}\label{sec:contributions}
\subsubsection[Main Result]{\textbf{Main Result}}
We address the problem by presenting a rigorous theoretical framework for the
construction and maintenance of distributed skip lists in highly dynamic distributed systems that can experience heavy churn. 

Any query to a data structure will require some response time, which we naturally wish to minimize. Furthermore, queries will be  imprecise to some extent if the data structure is dynamic. To see why, consider a membership query of the form ``is node $u$ in the dynamic skip list?''  initiated in some round $r$ asking if a node $u$ is in the network or not. Consider a situation where $u$ 
 is far away from the node at which the query was initiated. Suppose $u$ is then deleted shortly thereafter. Although $u$ was present at round $r$, it may or may not be gone when the query procedure reached $u$. Such ambivalences are inevitable, but we wish to limit them. Thus, we define an efficiency parameter $Q$ such that (i) queries raised at round $r$ are answered by round $r+Q$ and (ii) response must be correct in the sense that it must be ``Yes'' (resp., ``No'')  if $u$ was present (resp., not present) from round $r$ to $r+Q$. If $u$ was only present for a portion of the time between $r$ and $r+Q$, then either of the two answers is acceptable. Quite naturally, we wish to minimize $Q$ and for our dynamic data structure, we show that $Q \in \Oo(\log n)$.

Our algorithms 
ensure that the resilient skip list is maintained effectively for at least $\Poly{n}$ rounds with high probability (i.e., with probability $1-1/n^{\Omega(1)}$) even under high adversarial churn. 
Moreover, the overall communication and computation cost incurred by our algorithms is proportional (up to $\Polylog{n}$ factors) to the churn rate, and every node sends and receives at most $\Oo(\Polylog{n})$ messages per round.
In particular, we present the following results (the in-depth descriptions are given in Section~\ref{sec:architecture}):
\begin{enumerate}
	\item An algorithm that constructs and maintains a skip list in a dynamic P2P network with an adversarial churn rate up to $\Oo(n/\log n)$ \emph{per round}.
	\item A novel distributed and parallel algorithm to merge a skip list $\B$ with a base skip list $\C$ in logarithmic time, logarithmic number of messages at every round and an overall amount of work proportional to the size of $\B$, i.e., to the skip list that must be merged with the base one.  While this merge procedure serves as a crucial subroutine in our maintenance procedure, we believe that it is also of independent interest and could potentially find application in other contexts as well. For example, it could be used to speed up the insertion of a batch of elements in skip list-like data structures. Similarly, we designed an efficient distributed and parallel algorithm to delete a batch of elements from a skip list in logarithmic time with overhead proportional to the size of the batch. 
	\item A general framework that we illustrate using skip lists, but can serve as a building block for other complex distributed data structures in highly dynamic networks (see Section~\ref{sec:generalize_ds} for more details).
\end{enumerate}
To the best of our knowledge, our approach is the first fully-distributed skip list data structure that works under highly dynamic settings (high churn rates per step). Furthermore, all the proposed algorithms are localized, easy to implement and scalable. Our major contribution can be summarized in the following theorem.
\begin{theorem}[Main Theorem]\label{thm:overall_result}
	Given a dynamic set of peers initially connected in some suitable manner (e.g., as a single path) that is stable for an initial period of $\Oo(\log n)$ rounds (i.e., the so-called {\em bootstrap phase}) and subsequently experiencing heavy adversarial churn at a churn rate of up to $\Oo(n/\log n)$ nodes joining/leaving per round, there exists a resilient skip list, a distributed data structure that can withstand heavy adversarial churn at a rate of up to $\Oo(n/\log n)$ nodes joining/leaving per round. We provide the following algorithms to support this data structure.
 \begin{itemize}
     \item 
     An $\Oo(\log n)$ round algorithm to construct the resilient skip list during the bootstrap phase,

 \item a fully distributed algorithm that maintains the resilient skip list as nodes are inserted into or deleted from the data structure with the guarantee that the data structure reflects the updates within $\Oo(\log n)$ rounds,
 
 \item and a membership query algorithm that any peer can invoke and  can be answered with efficiency parameter $Q \in \Oo(\log n)$.
   \end{itemize}
 All nodes send and receive at most $\Oo(\Polylog{n})$ messages per round, each comprising at most $\Oo(\Polylog{n})$ bits. Moreover, such a maintenance algorithm is dynamically resource competitive according to Definition~\ref{def:dynamic_resource_comp}. The maintenance protocol ensures that the resilient skip list is maintained effectively for at least $n^d$ rounds with high probability, where $d\geq 1$ can be an arbitrarily chosen constant.
\end{theorem}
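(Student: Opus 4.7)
The plan is to combine three pillars that correspond to the bullet points of the theorem: a bootstrap-phase construction, an epoch-based maintenance loop built on the merge and batch-delete primitives advertised in Section~\ref{sec:contributions}, and a query routine whose response time is absorbed by the length of a maintenance epoch. Throughout, I will fix the epoch length to $\Theta(\log n)$ rounds, which is simultaneously (i) the depth of a standard skip list over $n$ keys w.h.p., (ii) the time bound of the merge/delete subroutines, and (iii) the $\Oo(\log n)$ silent bootstrap prefix granted by the model.

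\emph{Construction.} Starting from the initial path of $n$ peers present throughout the bootstrap phase, each node independently samples a geometric level $\lmax{}{u}$, giving the usual $\Lmax{}{n}=\Oo(\log n)$ maximum level w.h.p. Horizontal level-$\ell$ pointers are then installed in parallel: nodes at level $\geq \ell$ locate their successor at that level by pipelining \texttt{SKIP}$^+$-style neighborhood probes up the levels already constructed. A standard inductive argument shows that, once level $\ell-1$ is fixed, level $\ell$ is completed in $\Oo(1)$ rounds, so all $\Oo(\log n)$ levels are built within the bootstrap window; each peer sends $\Oo(\Polylog{n})$ messages per round because the expected degree at every level is $\Oo(1)$ plus a Chernoff slack.

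\emph{Maintenance.} Divide the maintenance phase into consecutive epochs $E_1,E_2,\ldots$ of length $\beta \log n$ rounds. At the start of epoch $E_k$, the adversary has introduced a set $J_k$ of at most $\Oo(n)$ joined nodes and a set $L_k$ of at most $\Oo(n)$ leaving nodes during the previous epoch. Using $\Oo(\log n)$ rounds, assemble $J_k$ into a ``helper'' skip list $\B$ (by invoking the construction routine in miniature on the joiners' pre-existing attachment points, which the model guarantees are distinct live nodes), then invoke the distributed merge procedure to fold $\B$ into the current base list $\C$; in parallel, invoke the batch-delete procedure on $L_k$. Both subroutines terminate in $\Oo(\log n)$ rounds with messages and work proportional (up to $\Polylog{n}$) to $|J_k|+|L_k|$, which is exactly the $(\Oo(\log n),\Polylog{n})$-dynamic resource competitiveness required by Definition~\ref{def:dynamic_resource_comp}. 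Because merges and deletes operate only on localized \emph{tower neighborhoods}, the per-round message budget at each node stays $\Oo(\Polylog{n})$; a Chernoff bound on how many of the $\Oo(n/\log n)$ churn events per round can land in any single tower gives the congestion claim.

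\emph{Queries and liveness.} A query $q(x,r,s)$ walks the skip list from $s$'s topmost tower downward in the usual way; the walk completes in $\Oo(\log n)$ steps w.h.p.\ because the height and per-level successor-distance are both logarithmic. If the walk ever encounters a pointer whose endpoint has been churned out since the last epoch boundary, it follows the ``repaired'' pointer installed by the merge/delete routine running in parallel. Correctness of the answer over the interval $[r,r+Q]$ with $Q=\Oo(\log n)$ follows by case analysis on whether $x$'s lifetime subsumes or is disjoint from $[r,r+Q]$: in the first case every epoch boundary within the interval inserts $x$ if absent and leaves it otherwise, so at least one epoch of the walk sees the stable structure; in the second case the same argument gives a negative answer.

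\emph{From per-epoch success to $n^d$ rounds.} Each epoch succeeds (merge, delete and every in-flight query terminate correctly, no node exceeds its message budget, and every tower height stays $\Oo(\log n)$) with probability at least $1-n^{-c}$ for a constant $c$ that we can tune by scaling the constants hidden inside $\Oo(\log n)$. Taking $c=d+2$ and union-bounding over the $n^d$ rounds yields the stated w.h.p.\ longevity. The main obstacle I anticipate is the delicate interleaving in the query-correctness step: while the merge and batch-delete routines are reshaping the tower of a key that a query is simultaneously traversing, one must argue that the walker never gets permanently stranded on a pointer whose both endpoints have just been deleted. Handling this requires that the merge/delete primitives maintain a transient invariant, namely that any outgoing pointer removed by the routine is replaced in the same round with a forwarding pointer to a live successor, so that the query either follows the new edge or, in the worst case, drops down one level and retries, still finishing within its $\Oo(\log n)$ budget.
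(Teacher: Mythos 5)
Your proposal has the right outer shell (bootstrap construction, $\Theta(\log n)$-round maintenance epochs built on merge and batch-delete, queries absorbed into an epoch, union bound over $n^d$ rounds), but it is missing the single ingredient the paper's proof actually leans on, and you have correctly identified the resulting hole yourself without closing it. The paper's argument for why queries and the maintenance subroutines survive churn of $\Oo(n/\log n)$ per round is \emph{not} a transient forwarding-pointer invariant inside the merge/delete routines; it is the committee-based overlay (the \Spartan\ wrapped butterfly, Section~\ref{sec:churn_resilient_network}) together with Lemma~\ref{lemma:taking_over}: every peer belongs to a random $\Theta(\log n)$-sized committee, and when a peer is churned out its surviving committee members take over its skip-list pointers in $\Oo(1)$ rounds, so the list never structurally loses a position. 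An oblivious adversary cannot kill an entire random committee \whp, which is what makes this work. Without that layer, your ``walker never gets permanently stranded'' step genuinely fails: over one $\Theta(\log n)$-round epoch the adversary removes up to $\Theta(n)$ nodes and can take out a contiguous stretch of towers straddling a search path (including every endpoint a forwarding pointer could point to), and the merge/delete routines themselves run on the very structure being destroyed, so they cannot be the thing that repairs it. Relatedly, the paper avoids the query/update interleaving you flag as your ``main obstacle'' by architectural fiat: queries run on a \emph{live} network $\Ll$ that is a frozen snapshot, all restructuring happens on a separate \emph{clean} network $\C$ plus a \emph{buffer} $\B$, and $\Ll$ is swapped to the new $\C$ in $\Oo(1)$ rounds at the end of each cycle (Section~\ref{sec:duplicate}). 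Your single-structure design forces you to prove exactly the concurrency invariant you admit you do not have.

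A secondary gap: your bootstrap construction assumes the base level is already a sorted list (``nodes at level $\geq \ell$ locate their successor'' presupposes level $0$ is ordered by key), but the initial path is arbitrary. The paper spends real effort here, embedding an AKS sorting network into a twin-butterfly built on the overlay to sort the keys in $\Oo(\log n)$ rounds (Section~\ref{sec:creation}); the same machinery is reused every cycle to build the buffer $\B$ from the joiners. Your per-epoch accounting (work proportional to $|J_k|+|L_k|$ up to $\Polylog{n}$ factors, and the final union bound over epochs) matches Lemma~\ref{lemma:workload} and Lemma~\ref{lemma:resilience_whp} and is fine, but the proof does not go through until the replacement/committee mechanism and the live--clean separation are added.
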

\subsubsection[High-level Overview and Technical Contributions]{\textbf{High-level Overview and Technical Contributions}}
Our maintenance protocol (Section~\ref{sec:architecture}) uses a combination of several techniques in a non-trivial way to construct and maintain a churn resilient data structure in $\polylog(n)$ messages per round and $\Oo(\log n)$ rounds.

Our network maintenance protocol is conceptually simple and maintains two networks--the overlay network of peers (called Spartan in Section~\ref{sec:architecture}) and the distributed data structure in which these peers can store data structure information. To ease the description of our distributed algorithm, we think of the overlay and the data structure as two different networks of degree $\Oo(\log n)$. With each peer being a part of both, the overlay and the data structure (see Figure~\ref{fig:high_level_ov}). This way, we can think of our maintenance protocol as a collection of distributed protocols that are running in parallel and are in charge of healing and maintaining these different networks.

The overlay network maintenance protocol is conceptually similar to the ones in~\cite{Gotte_2021,Augustine_2021}. It consists of several phases in which we ensure that the overlay network is robust to an almost linear adversarial churn of $\Oo(n/\log n)$ nodes at each round. Furthermore, the data structure maintenance protocol consists of a continuous \emph{maintenance cycle} in which we quickly perform updates on the distributed data structure despite the high adversarial churn rate. While for maintaining the overlay network we can use the maintenance protocol in~\cite{Augustine_2021} as a black-box, we need to design novel algorithms to maintain the dynamic data structure.

Already existing techniques for skip lists~\cite{Pugh_1990,Aspnes_2007,Goodrich_2006,Jacob_2014} or solving the storage and search problem in the DNC model~\cite{Augustine_2013_b,Attiya_2022} can neither be used nor adapted for our purpose. We need to design novel fast distributed and parallel update protocols that are resilient to high churn rate without compromising the integrity of the data structure. 

\begin{wrapfigure}{r}{0.2\textwidth}
	\includegraphics[width = 0.2\textwidth]{./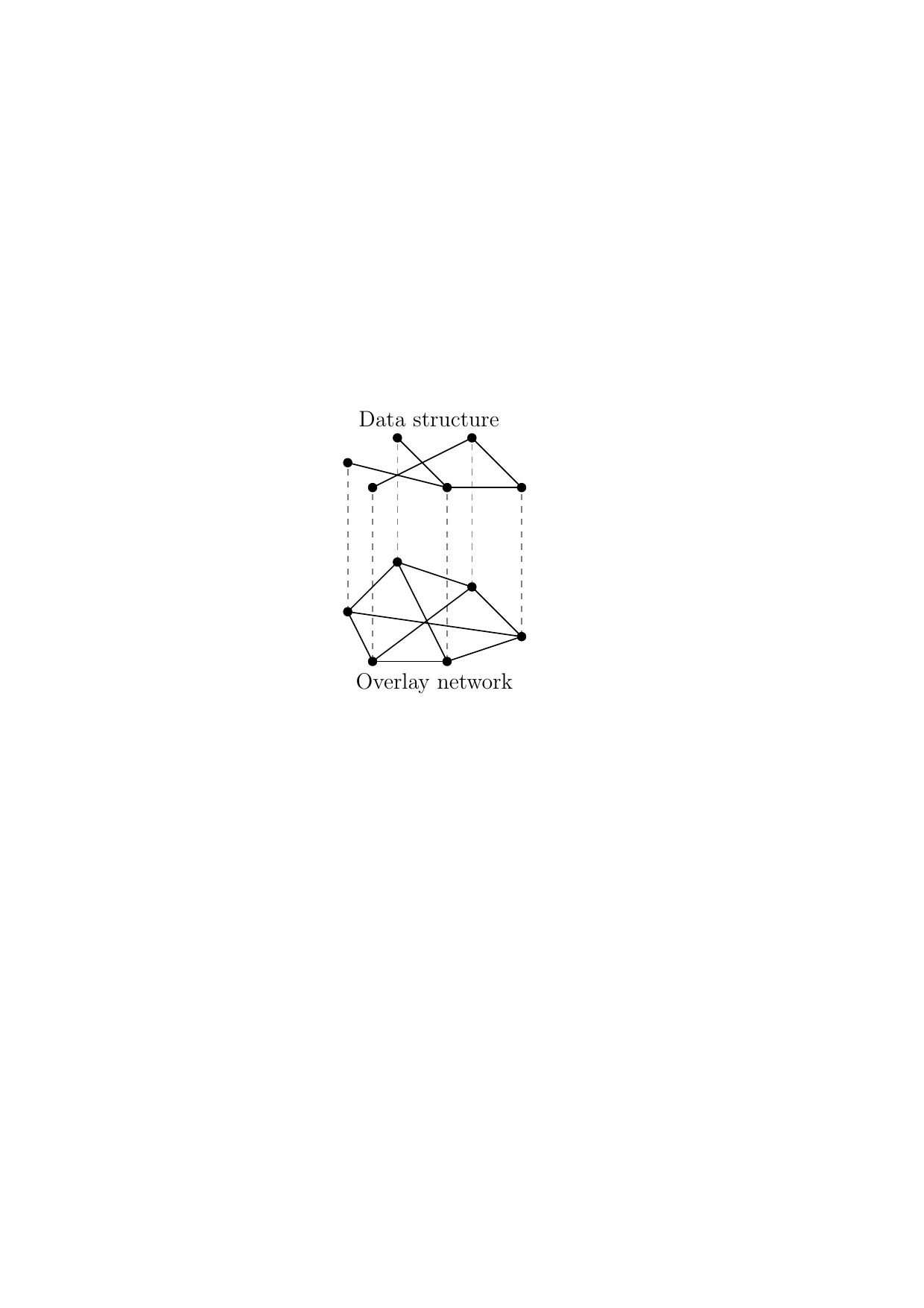}
	\caption{High-level overview.}\label{fig:high_level_ov}
\end{wrapfigure}

Before delving into the protocol's description, we highlight one of the ideas of our paper. As previously mentioned, the network comprises two networks -- the overlay and the distributed data structure. However, instead of having only one network for the current skip list data structure we keep \emph{three}: the first one is a \emph{live} network on which queries are executed, the second one is a \emph{clean} version of the live network on which we perform updates after joins and leaves and the latter, is an additional temporary \emph{buffer} network on which we store the newly added elements. Such a buffer will be promptly \emph{merged} to the clean network during each cycle. This three-network architecture allows us to describe the maintenance process in a clear and simple way.  (Note that the nodes are not actually replicated into multiple copies for each of the networks. The same set of nodes will maintain all the networks logically.)

Another key aspect of our protocol is that when the oblivious adversary removes some nodes from the overlay network (thus from the data structures as well), we have that (1) the overlay network maintenance protocol in~\cite{Augustine_2021} ensures that it will remain connected with high probability 
at every round and (2) all the elements that disappeared from the data structures will be \emph{temporarily replaced} by some surviving group of nodes in the overlay network. Moreover, when (2) happens, all these replacement nodes, in addition to answering their own queries, they also answer the ones addressed to the nodes they are covering for. This is possible since we can always assume some level of redundancy in the overlay network. For example, we can assume that a node $u$ in the overlay network has an updated copy of all its neighbors' values and pointers in the data structure. (Recall that both network are designed to be low-degree networks.)

We now describe a cycle of our maintenance protocol which consists of four major phases. Moreover, we can assume that at the beginning of the cycle, live and clean networks contain the same elements. Next, assume that the adversary replaced $\Oo(n/\log n)$ nodes from the network, this means that $\Oo(n/\log n)$ nodes have been removed from the networks (overlay, live and clean) and new $\Oo(n/\log n)$ have been added to the overlay network. After the churn, there are some (old) nodes in the overlay network covering for the removed ones in both clean and live networks.

\subparagraph*{\textbf{Delete}.}
In the first phase, we promptly remove from the clean network all nodes that have been covered by surviving nodes in the overlay. In Section~\ref{sec:delete}, we describe an $\Oo(\log n)$-round protocol that efficiently refines and updates the clean network.

\subparagraph*{\textbf{Buffer Creation}.}
In the second phase, all the newly added elements are gathered together, sorted and used to create a new temporary skip list, i.e., the buffer network. It is non-trivial to accomplish this under the DNC model in $\Oo(\log n)$ rounds. The first major step in this phase is to efficiently create a sorted list of the newly added elements. To this end we show a technique that first builds a specific type of sorting network~\cite{Ajtai_1983}, creates the sorted list, and, from that, builds the skip list in $\Oo(\log n)$ rounds. This contribution, that can be of independent interest, shows how one can use sorting networks~\cite{Ajtai_1983,Batcher_1968} to efficiently build a skip list despite a high adversarial churn (Section~\ref{sec:creation}). 

\subparagraph*{\textbf{Merge}.}
In the third phase, the buffer network is merged with the clean one. To this end, we propose a novel distributed and parallel algorithm to merge two skip lists together in $\Oo(\log n)$ rounds. Intuitively, the merge protocol can be viewed as a top-down wave of buffer network nodes that is traversing the clean network. Once the wave has fully swept through the clean network, we obtain the merged skip list. All prior protocols for merging skip lists (or skip graphs) took at least $\Oo(k\cdot \log n)$ rounds to merge together two skip lists, where $k$ is the buffer size. In our case the buffer is of size $k=\Oo(n/\log n)$ thus these algorithms would require $\Oo(n)$ rounds to perform such a merge. This is the first distributed and parallel algorithm that merges a skip list of $n$ elements into another one in $\Oo(\log n)$ rounds. This contribution is of independent interest, we believe that the merge procedure will play a key role in extending our work to more general data structures. 

\subparagraph*{\textbf{Update}.}
In the fourth and last phase, we \emph{update} the live network with the clean one by running a $\Oo(1)$ round protocol that applies a \emph{local} rule on each node in the live and clean network (Section~\ref{sec:duplicate}).  

The above maintenance cycle maintains the distributed data structure with probability at least $1-\frac{1}{n^d}$ for some arbitrarily big constant $d\geq 1$. This implies that the expected number of cycles we have to wait before getting the first \emph{failure} is $n^d$ and that the probability that our protocol correctly maintains the data structure for some $r<d$ rounds is at least $1-\frac{1}{n}$.

Beyond skip lists, we show how our techniques generalize to more sophisticated data structures such as skip graphs (Section~\ref{sec:generalize_ds}), and how they can be extended to settings where each node stores multiple elements in the data structure (Section~\ref{sec:generalize_keys}).

\paragraph{Towards a General Framework.}
Our contributions naturally lead to a broader definition: in Section~\ref{sec:general_fw}, we introduce a general framework for distributed data structures under churn and define the class of \emph{$t$-maintainable data structures}, that is, distributed data structures that can be maintained by a continuous maintenance cycle in $\Oo(t)$ rounds (per cycle) with high probability despite a churn rate of $\Oo(n/t)$ per round. In this terminology, our skip list maintenance protocol demonstrates that skip lists and skip graphs are $\Oo(\log n)$-maintainable in the DNC model under near-linear adversarial churn. We believe this framework lays the foundation for a new complexity class of distributed data structures resilient to adversarial churn, enabling future exploration of which classical structures (e.g., trees, graphs) admit efficient $t$-maintenance algorithms, and what inherent trade-offs exist between maintenance time, redundancy, and robustness.
\section{Solution Architecture}\label{sec:architecture}

%
%
Before delving into the description of our maintenance algorithm, we briefly introduce some notation used in what follows. 

\emph{Skip lists}~\cite{Pugh_1990} are randomized data structures organized as a tower of increasingly sparse linked lists. Level $0$ of a skip list is a classical linked list of all nodes in increasing order by key/ID. For each $i$ such that $i>0$, each node in level $i-1$ appears in level $i$ independently with some \emph{fixed} probability $p$. The top lists act as ``express lanes'' that allow the sequence of nodes to be quickly traversed. Searching for a node with a particular key involves searching first in the highest level, and repeatedly dropping down a level whenever it becomes clear that the node is not in the current one. By backtracking on the search path it is possible to shows that no more than $\frac{1}{1-p}$ nodes are searched on average per level, giving an average search time of $\Oo(\log n)$ (see Appendix~\ref{apx:skip_list} for useful properties of randomized skip lists). We refer to the \emph{height} of a skip list $\Ll$ (see Figure~\ref{fig:skip_list}) as the maximum $h$ such that Level $h$ is not empty. 
Given a node $v$ in the skip list $\Ll$ with $n_\Ll$ elements, we use $\neig{\Ll}{\texttt{right}}{\ell}{v}$ and $\neig{\Ll}{\texttt{left}}{\ell}{v}$ to indicate $v$'s right and left neighbors at level $\ell$ respectively. Moreover, when the direction and the levels are not specified we refer to the overall set of neighbors of a node in a skip list, formally we refer to $N_\Ll(v) = \bigcup_{\ell\geq 0} (\neig{\Ll}{\texttt{left}}{\ell}{v}\cup \neig{\Ll}{\texttt{right}}{\ell}{v})$. Furthermore, we define $\ell_{\max}^\Ll(v)$ for a node $v\in \Ll$ to be its maximum height in the skip list\footnote{We omit the superscript when the skip list is clear from the context.} and $\ell_{\max}(\Ll) = \max_{v\in \Ll} \ell_{\max}^\Ll(v)$ to be the overall height of the skip list.

\begin{figure}[htb!]
	\centering
	\includegraphics[scale = 0.5]{./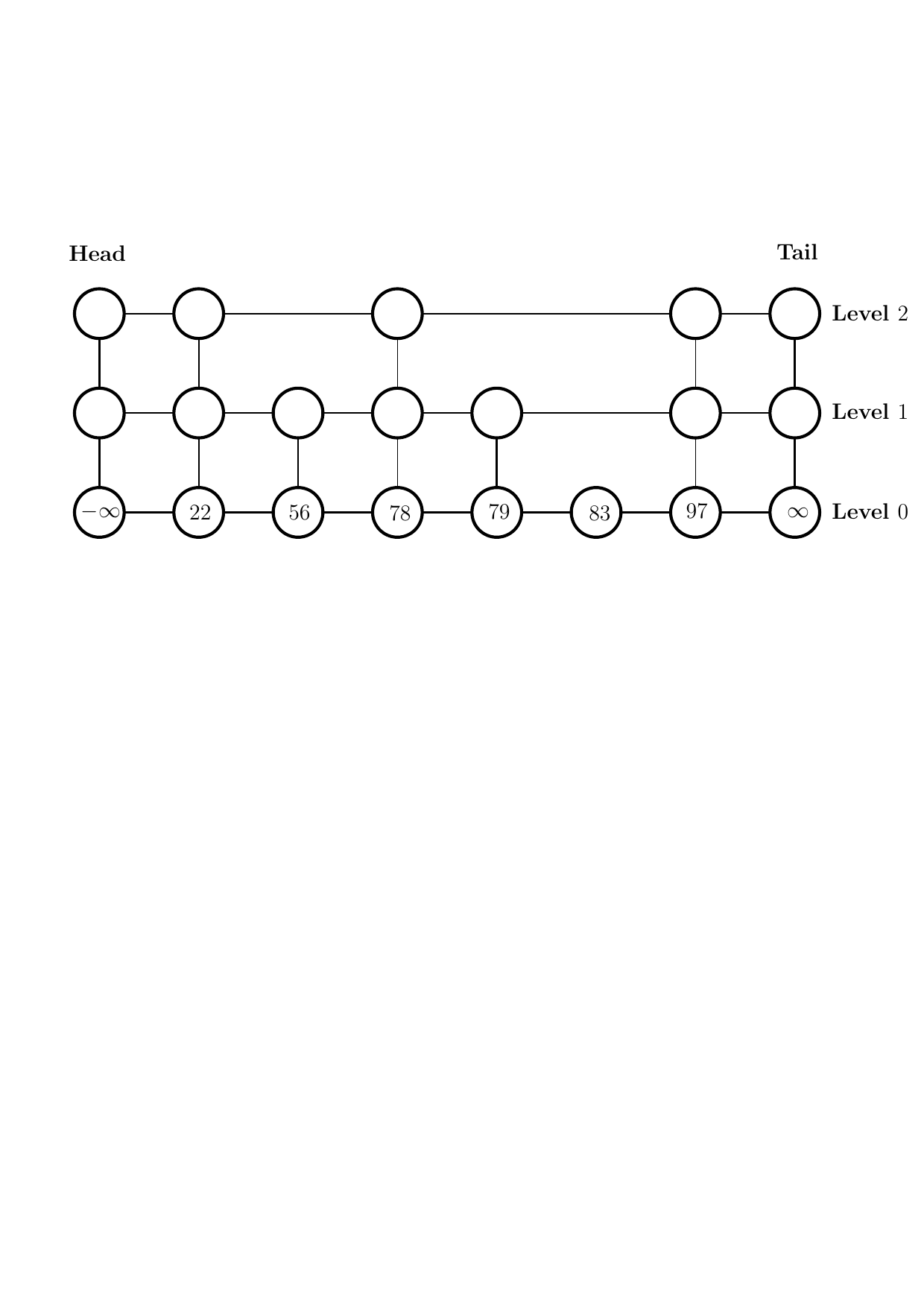}
	\caption{Skip list with $n = 6$ nodes and $3$ levels.}\label{fig:skip_list}
\end{figure}

Our dynamic distributed skip list data structure is architected using multiple ``networks'' (see Figure~\ref{fig:architecture}). Each peer node can participate in more than one network and in some cases more than one location within the same network. We use the following network structures.

\begin{description}
	\item[The \Spartan~Network $\X$] is a wrapped butterfly network that contains all the current nodes. This network can handle heavy churn of up to $\Oo(n/\log n)$ nodes joining and leaving in every round~\cite{Augustine_2021}. However, this network is not capable of handling search queries. 
	\item[Live Network $\Ll$] is the skip list network on which all queries are executed. Some of the nodes in this network may have left. We require such nodes to be temporarily represented by their replacement nodes (from their respective neighbors in $\X$).
	\item[Buffer Network $\B$] is a skip list network on which we maintain all new nodes that joined recently.
	\item[Clean Network $\C$] is a skip list network that seeks to maintain an updated version of the data structure that includes the nodes in the system. 
\end{description}
\begin{figure}[htb!]
	\centering
	\includegraphics[scale=0.7]{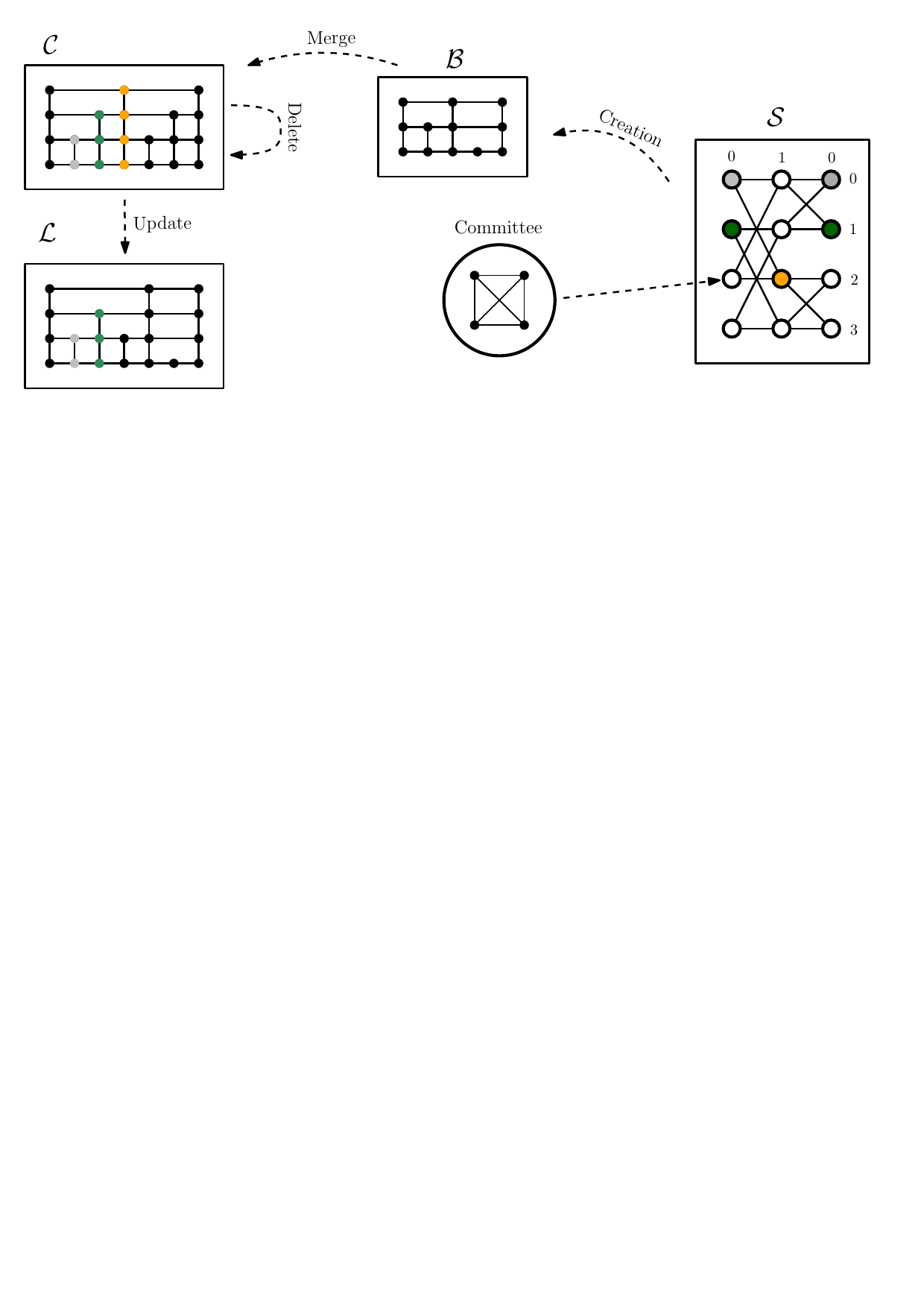}
	\caption{Schematic representation of the architecture and the Maintenance cycle described in Algorithm~\ref{algo:overview}. Colored nodes in $\Ll$ and $\C$ are nodes that have been removed by the adversary and that are being covered by some committee of nodes (of the same color) in $\X$.}\label{fig:architecture}
\end{figure}
Moreover, in all skip lists, when a node exits the system, it is operated by a selected group of nodes. In the course of our algorithm description, if a node $u$ is required to perform some operation, but is no longer in the system, then its replacement node(s) will perform that operation on its behalf. Note further that some of the replacement nodes themselves may need to be replaced. Such replacement nodes will continue to represent $u$. The protocol assumes a short ($\Theta(\log n)$ round) initial ``bootstrap'' phase, where there is no churn\footnote{Without a bootstrap phase, it is easy to show that the adversary can partition the network into large pieces, with no chance of forming even a connected graph.} and it initializes the underlying network. More precisely, the bootstrap is divided in two sub-phases in which we (i)  build the underlying churn resilient network described in Section~\ref{sec:churn_resilient_network} in $\Oo(\log n)$ rounds and, (ii) we build the skip lists data structures $\Ll$ and $\C$ (initially $\Ll = \C$) using the $\Oo(\log n)$ rounds technique described in Section~\ref{sec:creation}; after this, the adversary is free to exercise its power to add or delete nodes up to the churn limit and the network will undergo a continuous maintenance process.
The overall maintenance of the dynamic distributed data structure goes through cycles. Each cycle $c \geq 1$ in Algorithm~\ref{algo:overview} consists of four sequential phases. Without loss of generality, assume that initially $\Ll$ and $\C$ are the same. We use the notation $\LL{c}$ (resp. $\XX{c},\CC{c},\BB{c}$) to indicate the network $\Ll$ (resp. $\X,\C, \B$) during the cycle $c\in\mathbb{N}$, when it is clear from the context we omit the superscript to maintain a cleaner exposition.
\begin{algorithm}[htb!]
	\caption{Overview of the distributed skip list maintenance process. }
	\label{algo:overview} 
	
	Starting from a skip list $\LL{1}=\CC{1}$
	
	\ForEach{\textbf{Cycle} $c\geq 1$}{
	
		\textbf{Phase 1 (Deletion):} All the replacement nodes in $\CC{c}$ are removed. Note that nodes that leave the system during the replacement process may remain in $\CC{c}$.
		
		\textbf{Phase 2 (Buffer Creation):} All the new nodes that were churned into the system since Phase 2 of the previous cycle join together to form $\BB{c}$.
		
		\textbf{Phase 3 (Merge):} The buffer $\BB{c}$ created in Phase $2$ is merged into $\CC{c}$, i.e., $\CC{c+1} \leftarrow \CC{c} \cup \BB{c}$. 
		
		\textbf{Phase 4 (Update):} Update the live network $\LL{c}$ with the clean network $\CC{c}$, i.e., $\LL{c+1} \leftarrow \CC{c+1}$.

	} 	
\end{algorithm}

\subsection{The \Spartan~network}\label{sec:churn_resilient_network}

A  widely used approach to maintain a stable overlay network under high adversarial churn is to organize nodes into small-sized
\emph{committees}~\cite{Augustine_2013_b,Augustine_2015_a,Drees_2016,Augustine_2021}. A committee is a clique of small ($\Theta(\log n)$) size composed of essentially ``random'' nodes. A committee can be efficiently constructed, and more importantly, \emph{maintained} under heavy churn. Maintenance must ensure that each committee is made up of a random set of $O(\log n)$ peers, thus making it difficult for an oblivious adversary to remove all nodes from a committee. Since each committee behaves as a single unit, the loss of a (strict) subset of peers, which an oblivious adversary can inflict,  will not disrupt the committee; the remaining peers can ensure that the continues to operate as a single unit.  Moreover, a committee can be used to ``delegate'' nodes to perform any kind of operation. 

In this work, we build on the Spartan network design~\cite{Augustine_2021}, which improves on earlier committee-based approaches by (a) tolerating an adversarial churn rate of $\Oo(n/\log n)$ per round, and (b) achieving a construction time of $\Oo(\log n)$ rounds with high probability.
Each committee is a dynamic random clique of size $\Theta(\log n)$ in which member nodes change continuously with the guarantee that the committee has $\Theta(\log n)$ nodes as its members at any given time. These committees are arranged into a 
\emph{wrapped butterfly} network~\cite{Leighton_2014,Mitzenmacher_2017}. The wrapped butterfly has $2^k$ rows and $k$ columns such that $k2^k\in\Oo(n/\log n)$ nodes and edges. The nodes correspond to committees and are represented by pairs $(r,\ell)$ where $\ell$ is the \emph{level} or \emph{dimension} of the committee ($0\leq \ell\leq k$) and $r$ is a $k$-bit binary number that denotes the \emph{row} of the committee. Two committees $(r,\ell)$ and $(r',\ell')$ are linked by an edge that encodes a complete bipartite graph if and only if $\ell' = \ell+1$ and either: (1) both committees are in the same row i.e., $r=r'$, or (2) $r$ and $r'$ differ in precisely the $(\ell+1 (\text{mod } k))$th bit in their binary representation. Finally, the first and last levels of such network correspond to the same level (i.e., the butterfly is \emph{wrapped}). In particular, committees $(r,0)$, and $(r,k)$ are the same committees. An alternative way to represent this structure is to create edges from the $k-1$th level to the level $0$\footnote{We decided to avoid this representation, to maintain a clear exposition and easy to read pictures.}. 
Figure~\ref{fig:architecture} shows an example of \Spartan~network with $4$ rows and $2$ columns in which every node of the two-dimensional wrapped butterfly encodes a committee of $\Theta(\log n)$ random nodes and each edge between two committees encodes a complete bipartite graph connecting the vertices among committees. Notice that the first and last columns are the same set of committees since the butterfly is wrapped. 

The Spartan network is constructed during an initial bootstrap phase, assuming no adversarial activity\footnote{Having such phase is a common assumption in the DNC model~\cite{Augustine_2013_b,Augustine_2013_a,Augustine_2015_a,Augustine_2015_b,Augustine_2012,Augustine_2021}}, as follows: (1) a random leader node is elected; (2) a binary tree of height $\Oo(\log n)$ is built; (3) the first $N = k2^k$ nodes are organized into a cycle, and designated as committee leaders; (4) the cycle is transformed into a wrapped butterfly structure; (5) the remaining nodes are assigned randomly to committees, which then internally form cliques; (6) committee leaders exchange node ID lists to establish the bipartite overlay edges between committees.


After the bootstrap, Spartan maintains stability using a continuous maintenance cycle that runs every $\Theta(\log \log n)$ rounds. In each cycle, all nodes are reassigned uniformly at random to new committees, and any newly joined nodes are placed randomly as well. While each node can locally detect when a neighbor leaves (via the loss of the direct link), the system’s stability does not depend on explicitly tracking individual departures. Instead, the maintenance cycle ensures that, with high probability, each committee maintains $\Theta(\log n)$ members and the network remains connected, as long as the churn rate is $\Oo(n/\log n)$.

As a contribution of this work, we generalize the Spartan framework to handle variable network sizes, removing the assumption that the number of nodes remains fixed at $n$. We provide a reshaping protocol (Appendix~\ref{apx:reshaping_protocol}) that dynamically adjusts the dimensionality of the wrapped butterfly as the network size increases or decreases, allowing the system to maintain its structural and robustness guarantees under highly dynamic and more general conditions. 
\subsection{Replacing nodes that have been removed by the adversary}\label{sec:covering_for_nodes}
We describe how to maintain the live network $\Ll$ and the clean network $\C$ when churn occurs. Assume that node $v\in V$ left the network. To preserve $\C$ and $\Ll$ structures, we require its committee in $\X$ ``to cover'' for the disappeared node. In other words, when node $v$ leaves the network, its committee members in $\X$ will take care of all the operations involving $v$ in $\Ll$ and $\C$. This temporarily preserves the structure of these networks, as if $v$ were still present. Moreover, we assume that at every round in the \Spartan~network, all the nodes within the same committee communicate their states (in $\X,\Ll$, and $\C$) to all the other committee members, i.e., 
each node $u$ in a committee knows all neighbors $N_\X(u)$ and all neighbors of these nodes in $\Ll$ and $\C$, and the committees they belong to\footnote{The expected space needed to store the committee's coordinates of $v$' neighbors in $\Ll$ and $\C$ is $\Oo(\log n)$ because every node in a skip list has expected degree of $\Oo(\log n)$, \whp}. Furthermore, when a node $v$ leaves the network, each node in its committee in $\X$ will connect to $v$'s neighbors in $\Ll$ and $\C$. This can be done in constant time and will increase the degree of the nodes in the data structures to $\Oo(\polylog(n))$. Notice that each committee in $\X$ is guaranteed to be robust against a $\Oo(n/\log n)$ churn rate, i.e., the probability that there exists some committee of size $o(\log n)$ is at most $1/n^d$ for an arbitrarily large $d\geq 1$ (see Theorem 6 in~\cite{Augustine_conference_18}). This implies that every committee of nodes in $\X$ is able to cover for the removed nodes in $\Ll$ and $\C$ with high probability. In particular, since each committee survives the churn with high probability (as guaranteed by Theorem 6~~\cite{Augustine_conference_18}), it cannot be fully destroyed. This in turn implies that the committee will succeed in covering the deleted node with high probability, ensuring the continuity of $\Ll$ and $\C$ despite adversarial removals. Figure~\ref{fig:committee_replacement} shows an example of how the committees in the Spartan network cover for its members that were removed by the adversary. More precisely, in Figure~\ref{fig:replacement_after} we show how the orange node's committee is covering for it after being dropped by the adversary. An alternative way to see this is to assume that the committee creates a temporary virtual red node in place of the orange one (see Figure~\ref{fig:replacement_equivalent}). This virtual node represents the collective responsibility of the committee, effectively simulating $v$’s presence.

 \begin{figure}[htb!]
 	\captionsetup[subfigure]{justification=centering}
 	
 	\begin{subfigure}{0.3\textwidth}
 		
 		\raisebox{-1.35\height}{\includegraphics[scale=0.5]{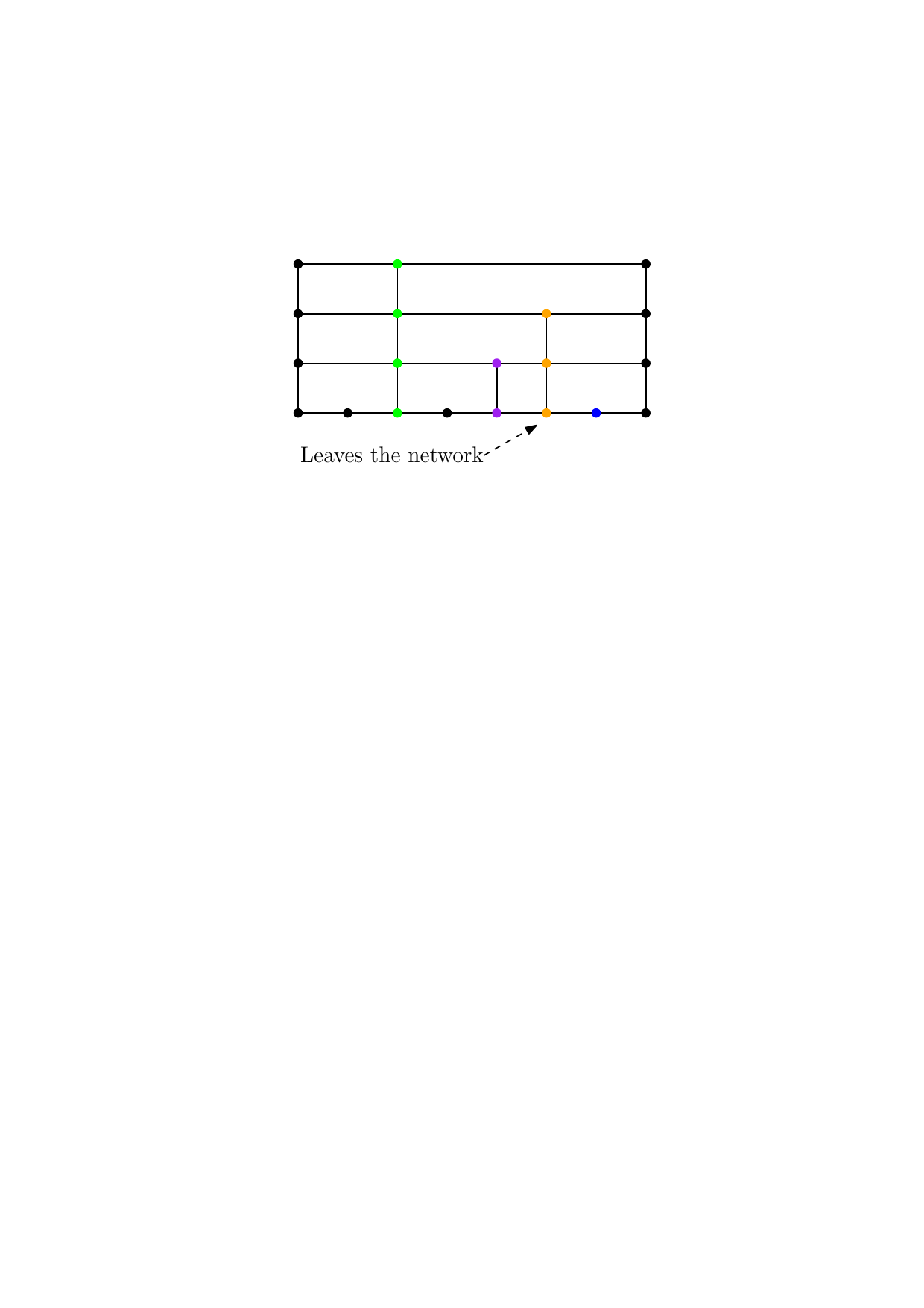}}
 		\caption{}\label{fig:replacement_before}
 	\end{subfigure}
 	\begin{subfigure}{0.3\textwidth}
 		\includegraphics[scale=0.5]{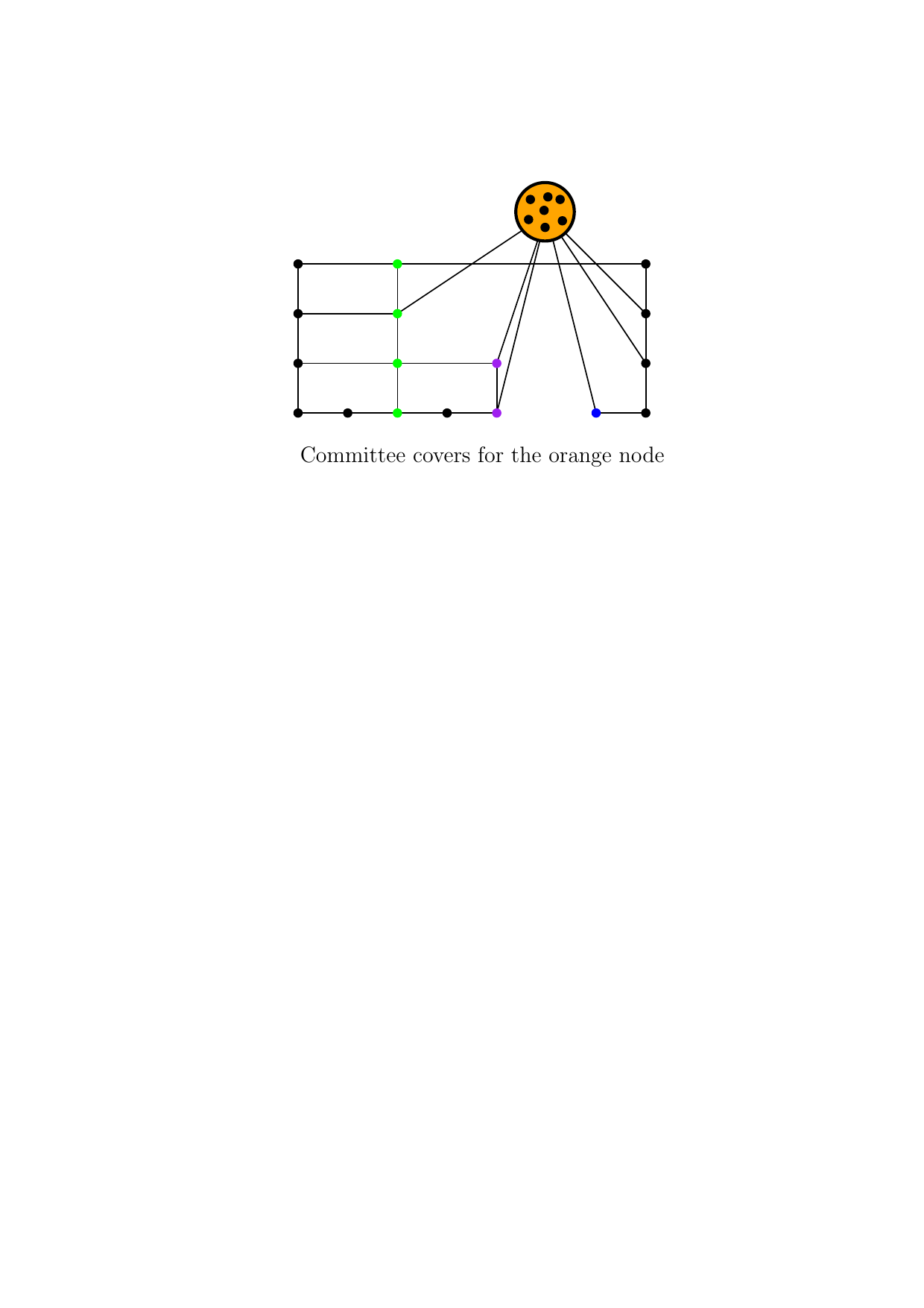}
 		\caption{}\label{fig:replacement_after}
 	\end{subfigure}
 	\begin{subfigure}{0.3\textwidth}
 		\includegraphics[scale=0.5]{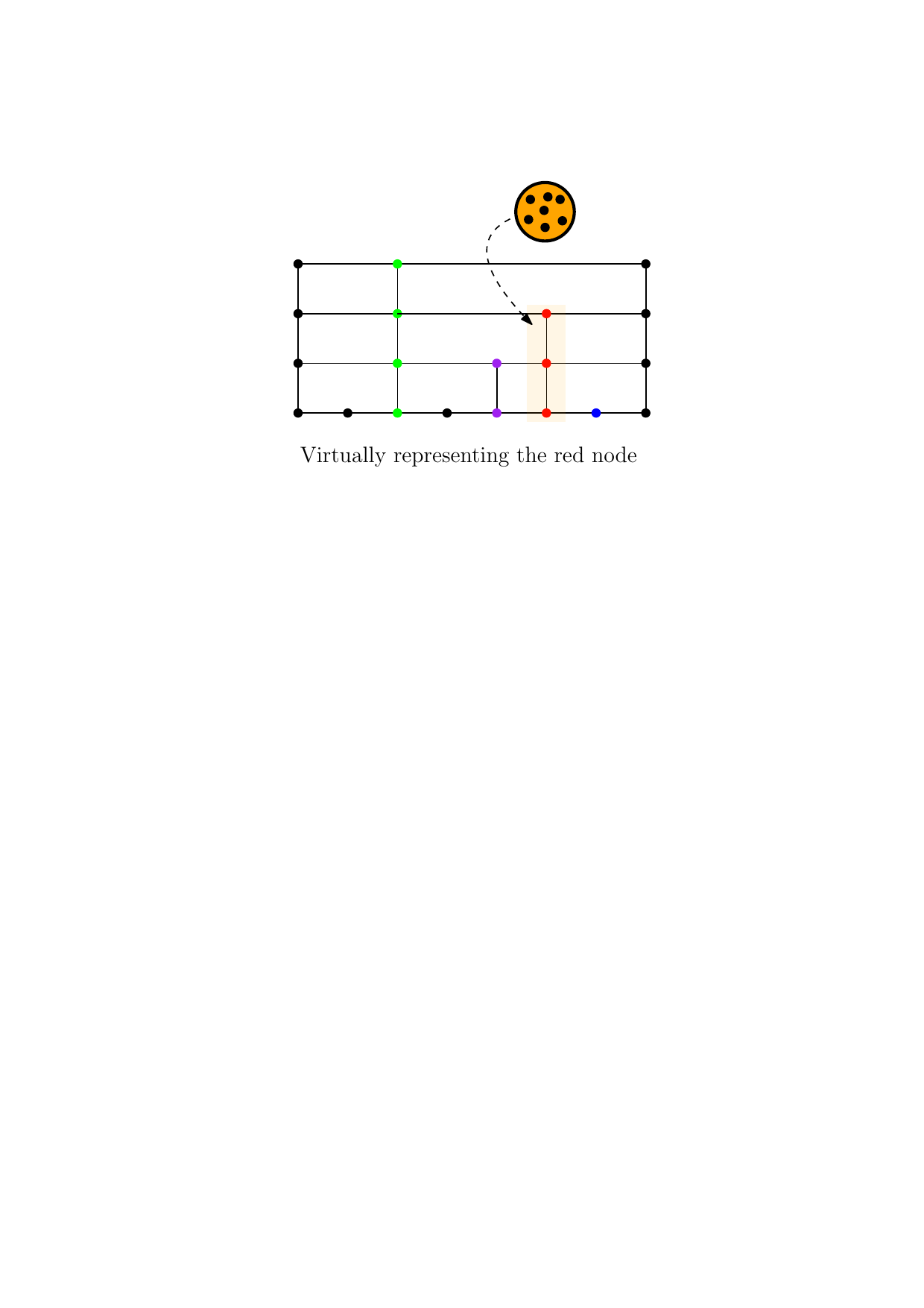}
 		\caption{}\label{fig:replacement_equivalent}
 	\end{subfigure}
 	\caption{ In Figure~\ref{fig:replacement_before} the orange node gets removed by the adversary from the overlay network. Figure~\ref{fig:replacement_after} shows how all its committee neighbors in the overlay cover for the orange node. Figure~\ref{fig:replacement_equivalent} is an equivalent visualization of Figure~\ref{fig:replacement_after}, the committee replaces the orange node with a ``virtual'' red node. This will temporarily preserve the skip list structure of Figure~\ref{fig:replacement_before}.}\label{fig:committee_replacement}
 \end{figure}

\begin{lemma}\label{lemma:taking_over}
    With high probability, replacing nodes removed by the adversary requires $\Oo(1)$ rounds and work proportional (within $\polylog(n)$ factors) to the number of nodes removed.
\end{lemma}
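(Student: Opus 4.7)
The plan is to decompose the covering operation into two $\Oo(1)$-round sub-procedures and then charge the resulting work against the surviving members of the committee of each departed node. First, I would invoke the committee robustness result (Theorem~6 of~\cite{Augustine_conference_18}): each committee in $\X$ retains $\Omega(\log n)$ members with probability at least $1 - 1/n^d$ under an $\Oo(n/\log n)$ churn rate. A union bound over the $\Oo(n/\log n)$ committees shows that every committee survives with high probability, so for every removed node $v$ there are $\Omega(\log n)$ surviving fellow committee members available to take over.

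Next, I would argue that detection and replacement each take $\Oo(1)$ rounds. When $v$ leaves in round $t$, every other member of $v$'s committee observes in round $t+1$ that the intra-committee edge to $v$ has vanished. Because each member continuously maintains copies of $N_\Ll(v)$ and $N_\C(v)$ together with the committees those neighbors belong to (shared within the committee at every round, as assumed in Section~\ref{sec:covering_for_nodes}), each surviving member can immediately issue invitation messages, via the edge-creation primitive of the model, to the skip-list neighbors of $v$. After the standard invite/accept handshake the cover edges are in place. Since all removed nodes are handled in parallel by their respective committees, the total round complexity of the covering phase is $\Oo(1)$.

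For the work bound, I would appeal to the standard skip list tail bounds (Appendix~\ref{apx:skip_list}) to conclude that $|N_\Ll(v)|, |N_\C(v)| \in \Oo(\log n)$ with high probability. Each of the $\Oo(\log n)$ surviving committee members of $v$ therefore sends $\Oo(\log n)$ invitation and acceptance messages and participates in creating $\Oo(\log n)$ new edges on $v$'s behalf. Summing over the $K$ nodes removed in a given round, and taking a union bound over committees and skip-list neighborhoods, yields an overall cost of $\Oo(K \cdot \polylog(n))$, matching the claim. The per-round bandwidth cap of $\Oo(\polylog(n))$ messages per node is respected: any given node $u$ can have at most $\Oo(\log n)$ departed fellow committee members to cover for in a single round, and each contributes at most $\Oo(\log n)$ outgoing invitations, for a total of $\Oo(\log^2 n)$ messages from $u$.

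The main obstacle I anticipate is ensuring that the intra-committee state exchange implicit in the covering scheme itself fits within the bandwidth cap, and that the two layers of union bounds (over committees and over skip-list neighborhoods of each removed node) do not spoil the high-probability guarantee. Both are manageable: each committee has $\Oo(\log n)$ members storing $\Oo(\log n)$-sized skip-list neighborhoods, so committee-wide gossip costs $\Oo(\polylog(n))$ bits per node per round, and taking a union bound over $\Oo(n/\log n)$ committees together with the skip-list neighborhood tail bounds only loses a constant in the exponent of the error probability, which can be absorbed by choosing the hidden constant in the $\Theta(\log n)$ committee size large enough.
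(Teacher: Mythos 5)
Your proposal is correct and follows essentially the same route as the paper's proof: committee survival via Theorem~6 of~\cite{Augustine_conference_18}, $\Oo(1)$-round takeover because committee members already hold copies of the departed node's skip-list neighborhoods, and a work bound of $\Oo(\log^2 n)$ new edges per removed node obtained from the $\Oo(\log n)$ committee size times the $\Oo(\log n)$ skip-list degree. Your version is somewhat more explicit than the paper's (the union bound over committees, the per-node bandwidth check, and stating the total as $\Oo(K\cdot\polylog(n))$ for $K$ removals rather than $\Oo(n\cdot\polylog(n))$ per cycle), but these are refinements of the same argument rather than a different approach.
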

\begin{proof}
When a node $v$ leaves the network, its committee members in $\X$ take over its connections in $\Ll$ and $\C$ by creating direct edges to $v$'s neighbors in these structures. This takeover operation can be completed in $\Oo(1)$ rounds, as each committee member already knows the necessary local neighborhood information.

Each committee has expected size $\Oo(\log n)$, and each skip list node has expected degree $\Oo(\log n)$, so the total number of new edges created by $v$’s committee is at most $\Oo(\log^2 n)$. Given that the system can experience up to $\Oo(n / \log n)$ churn events in one $\Oo(\log n)$-round cycle, the overall work is $\Oo(n \cdot \polylog(n))$, satisfying dynamic resource competitiveness.

Crucially, by Theorem 6 in~\cite{Augustine_conference_18}, each committee survives churn with high probability; that is, the probability that a committee’s size drops below $\Oo(\log n)$ is at most $1/n^d$ for arbitrarily large $d \geq 1$. Therefore, with high probability, the committee is able to successfully cover the departed node, ensuring the continuity of $\Ll$ and $\C$ despite adversarial removals.
\end{proof}
\subsection{Deletion}\label{sec:delete}
We describe how to safely clean the network $\C$ in $\Oo(\log n)$ rounds from the removed nodes for which committees in the \Spartan~network $\X$ are temporarily covering. The elements in the clean network $\C$ before the cleaning procedure can be of two types: (1) not removed yet and (2) patched-up, i.e., nodes that have been removed by the adversary and for which another group of nodes in the network is covering. 
A trivial approach to efficiently remove the patched-up nodes from $\C$, would be the one of destroying the clean network $\C$ by removing these nodes and then reconstructing the cleaned skip list from scratch. However, this approach does not satisfy our dynamic resource competitiveness constraint (see Definition~\ref{def:dynamic_resource_comp} and Section~\ref{sec:contributions}); the effort we must pay to build the skip list from scratch may be much higher than the adversary's total cost\footnote{The adversary's cost is the churn applied to the network, i.e., the number of nodes removed in a certain time period.}. Hence, we show an approach to clean $\C$ that guarantees dynamic resource competitiveness. For the sake of explanation, assume that the patched-up nodes are red and the one that should remain in $\C$ are black. The protocol is executed in all the levels of the skip list in parallel. The high-level idea of the protocol is to create a tree rooted in 
the skip list's left-topmost sentinel by backtracking from some ``special'' leaves located on each level of the skip list. Once the rooted tree is constructed, another backtracking phase in which we build new edges connecting black nodes separated by a chain of red nodes is executed. Figure~\ref{fig:delete_logn} shows an example of the delete routine (Algorithm~\ref{algo:delete}) run on all the levels of the skip list in Figure~\ref{fig:delete_base}. Moreover, Figure~\ref{fig:delete_0} and Figure~\ref{fig:delete_1} show the execution of the delete routine at level $0$. First, the tree rooted in the left-topmost sentinel is built by backtracking from the leaves (i.e., from the black nodes that have at least one red neighbor). Next, the tree is traversed again in a bottom-up fashion and green edges between black nodes separated by a list of red nodes are created (Figure~\ref{fig:delete_2}). Figure~\ref{fig:delete_3} depicts the skip list after running Algorithm~\ref{algo:delete} in parallel on every level. Finally, Figure~\ref{fig:delete_4} shows the skip list after the deletion phase.

\begin{algorithm}[htb!]
	\caption{Delete routine for level $\ell$. See Figure~\ref{fig:delete_logn} for an illustration.}
	\label{algo:delete}

    \nonl \textbf{Tree Formation Phase ($\Oo(\log n)$ rounds):}

    Let $B$ be the set of black nodes at level $\ell$ with at least one red neighbor. In this step, we are required to form the shortest path tree $T$ rooted at the left-topmost sentinel with $B$ as its leaves. A simple bottom-up approach can be used to build this tree $T$ in $O(\log n)$ rounds.

    \nonl~

    \nonl \textbf{Message Initialization ($(\Oo(1)$ rounds):}

    Every $b \in B$ creates a message containing a pair of items as follows.
    {\begin{description}
        \item[1.] $(\dot{b},\dot{b})$ if both its neighbors are red.
        \item[2.] $(\dot{b},b)$ if its left neighbor is red and its right neighbor is black
        \item[3.] $(b,\dot{b})$ if its left neighbor is black and its right neighbor is red
    \end{description}}
    Each $b$ then sends its message to its parent in $T$.
    
    \nonl~
    
    \nonl \textbf{Propagation Phase ($\Oo(\log n)$ rounds):}

    This is executed by each node $u$ on the tree $T$. Node $u$ waits to hear messages from all its children. If it has only one child, it will propagate the message to its parent in the tree. Otherwise (i.e., it has two children) it receives two messages: $(w,x)$ from its child from below and $(y, z)$ from its child from its right. We will show later that either $x$ and $y$ are both dotted or neither of them are dotted. 

    \If{both $x$ and $y$ are dotted}{
        Introduce $x$ to $y$ so that they can form an edge between them at level $\ell$.
    }
    Node $u$ then propagates the message $(w,z)$ to its parent on the tree.

\end{algorithm}

\begin{figure}[htb!]
 \captionsetup[subfigure]{justification=centering}
	\centering
    \begin{subfigure}{0.47\textwidth}
		\includegraphics[scale=0.5]{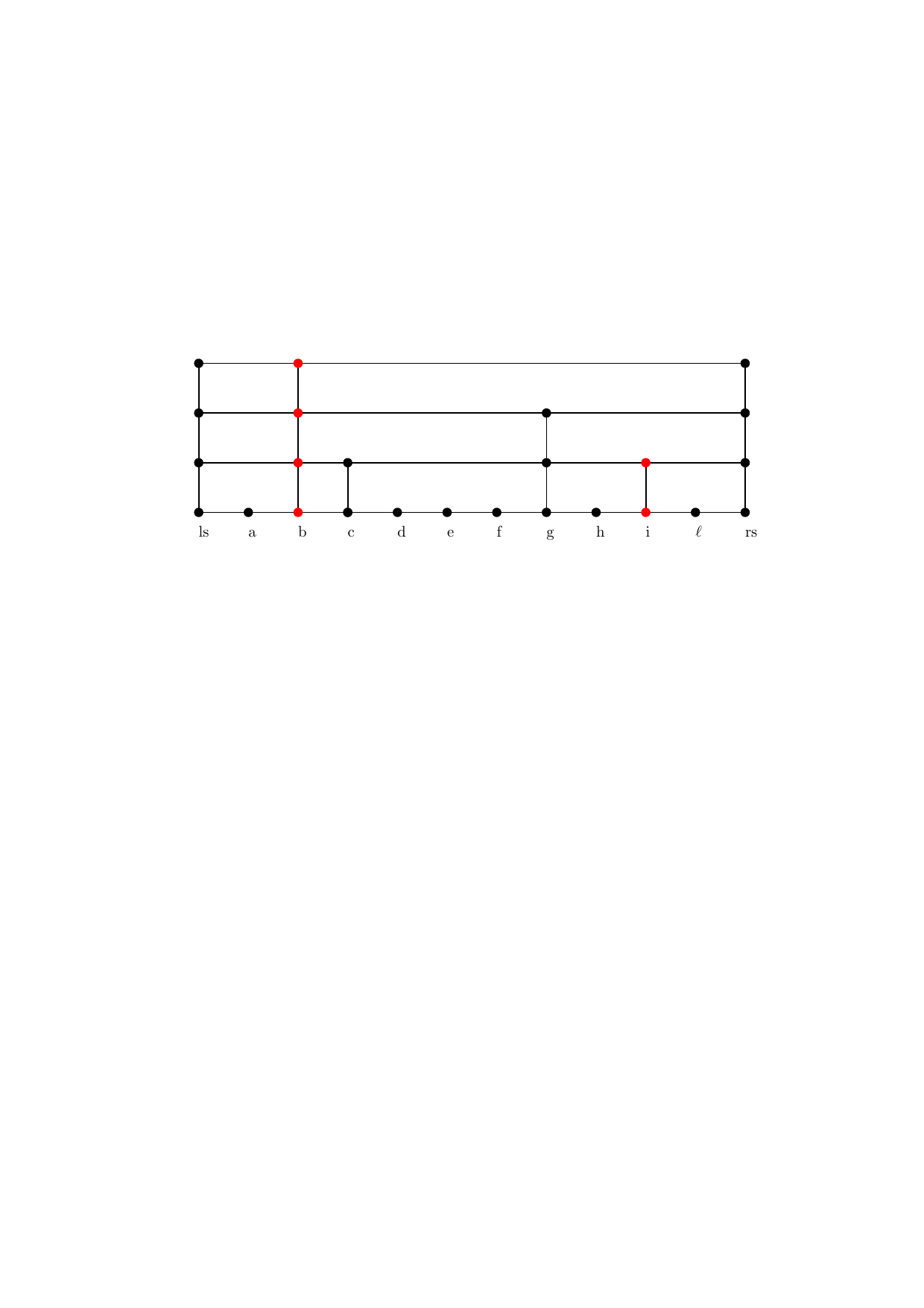}
		\caption{}\label{fig:delete_base}
	\end{subfigure}
	\begin{subfigure}{0.45\textwidth}
		\includegraphics[scale=0.5]{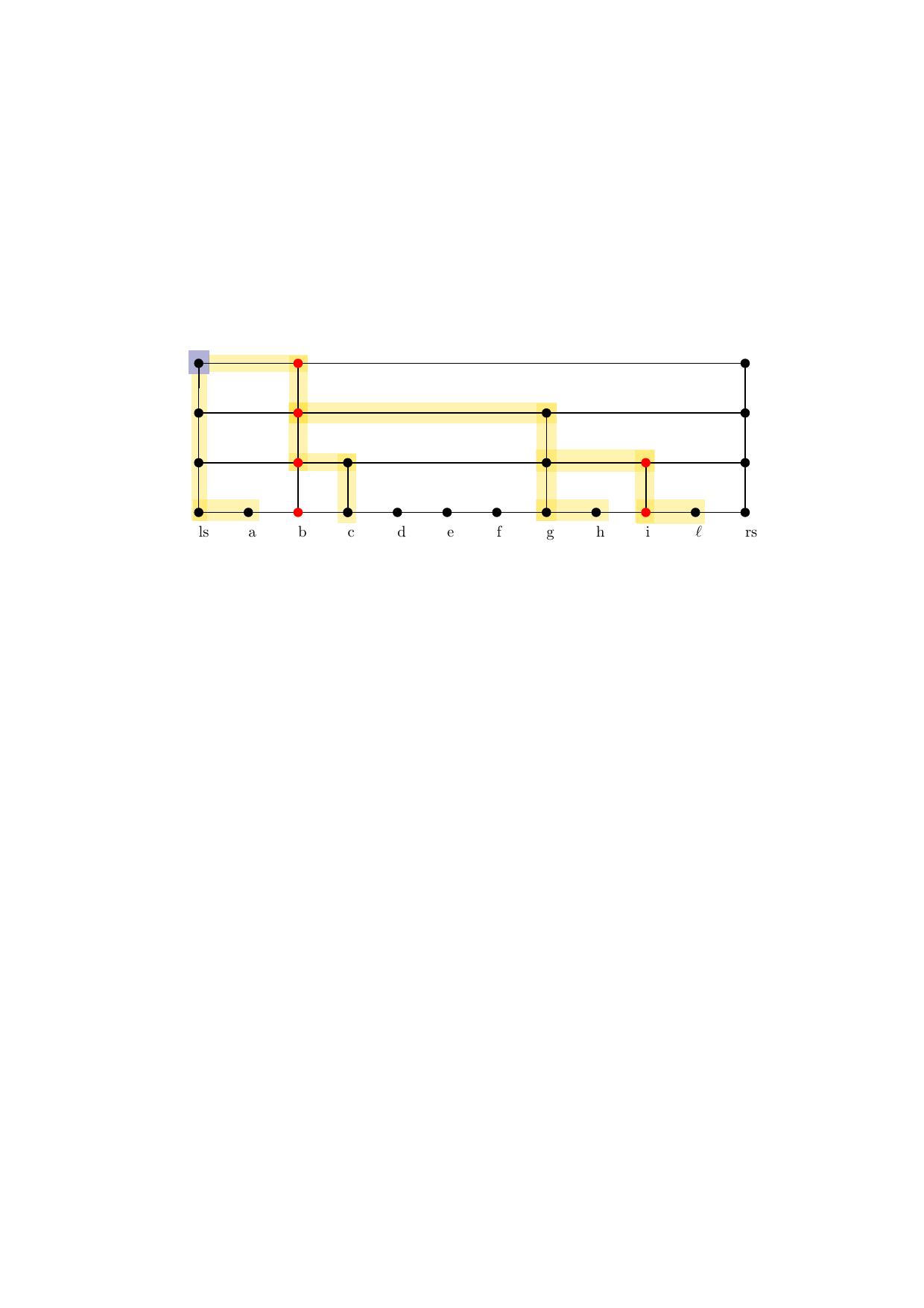}
    	\caption{}\label{fig:delete_0}
   \end{subfigure}
 
	\begin{subfigure}{0.47\textwidth}
		\includegraphics[scale=0.5]{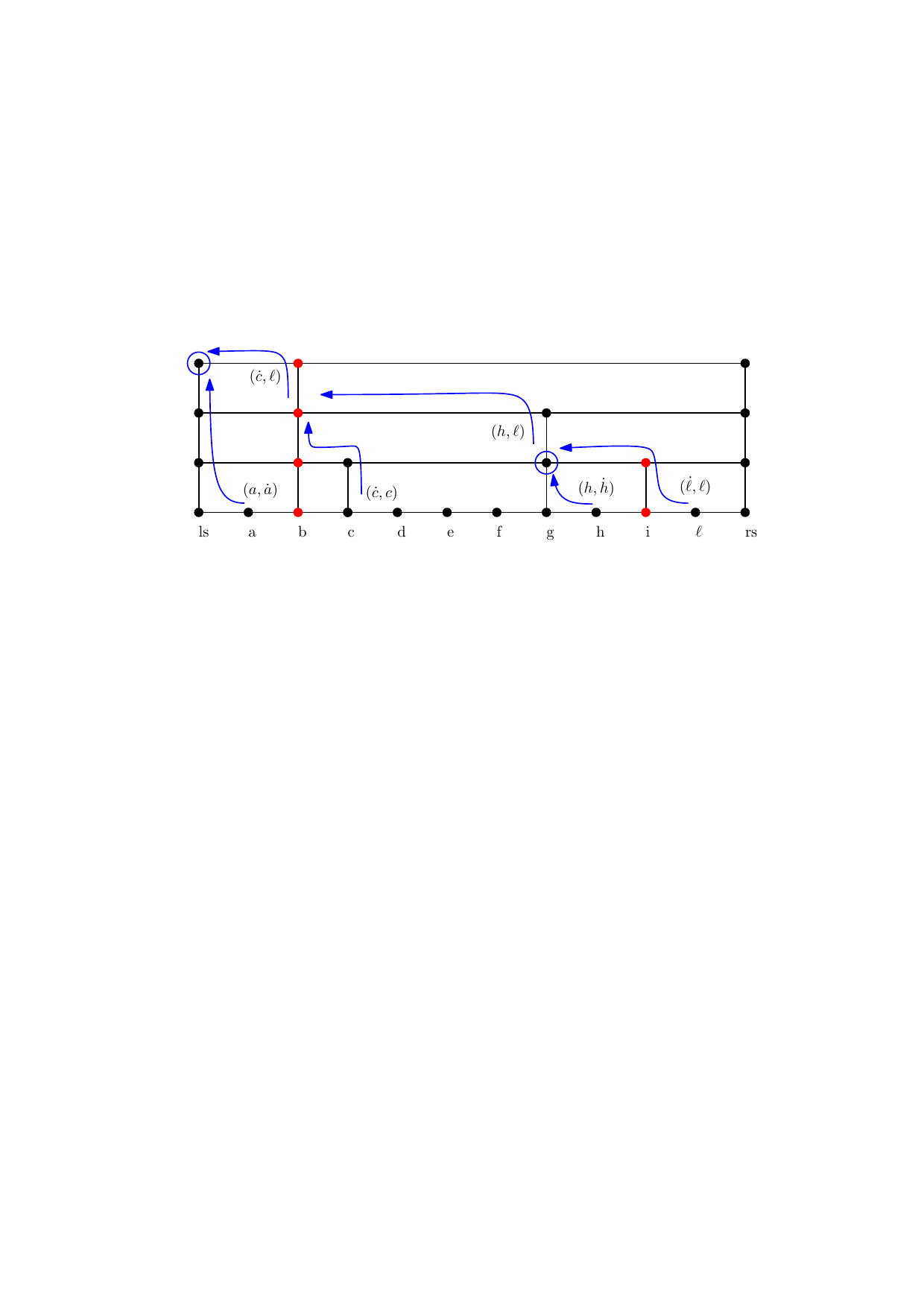}
		\caption{}\label{fig:delete_1}
	\end{subfigure}
	\begin{subfigure}{0.45\textwidth}
		\includegraphics[scale=0.5]{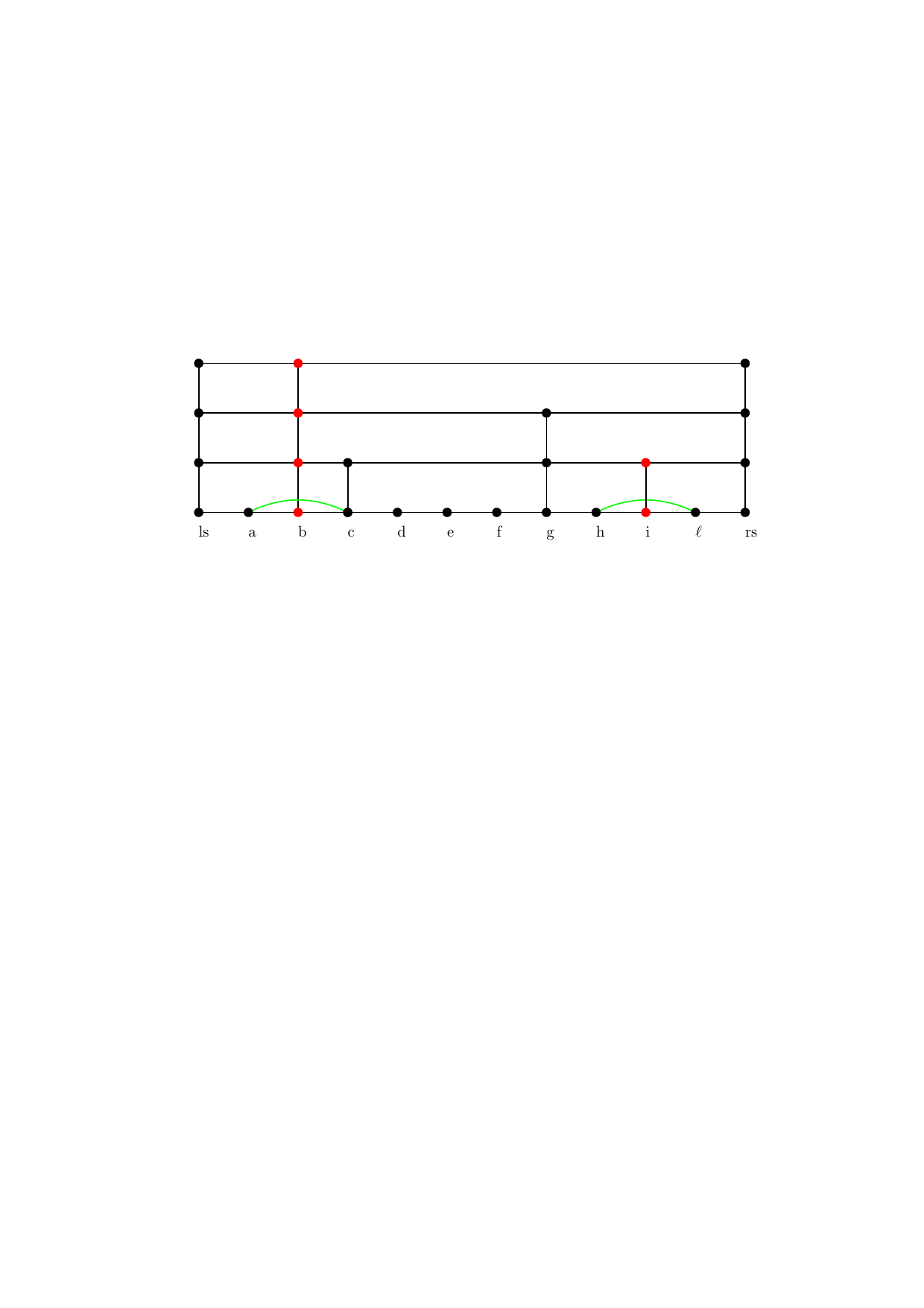}
    	\caption{}\label{fig:delete_2}
   \end{subfigure}

   \begin{subfigure}{0.47\textwidth}
		\includegraphics[scale=0.5]{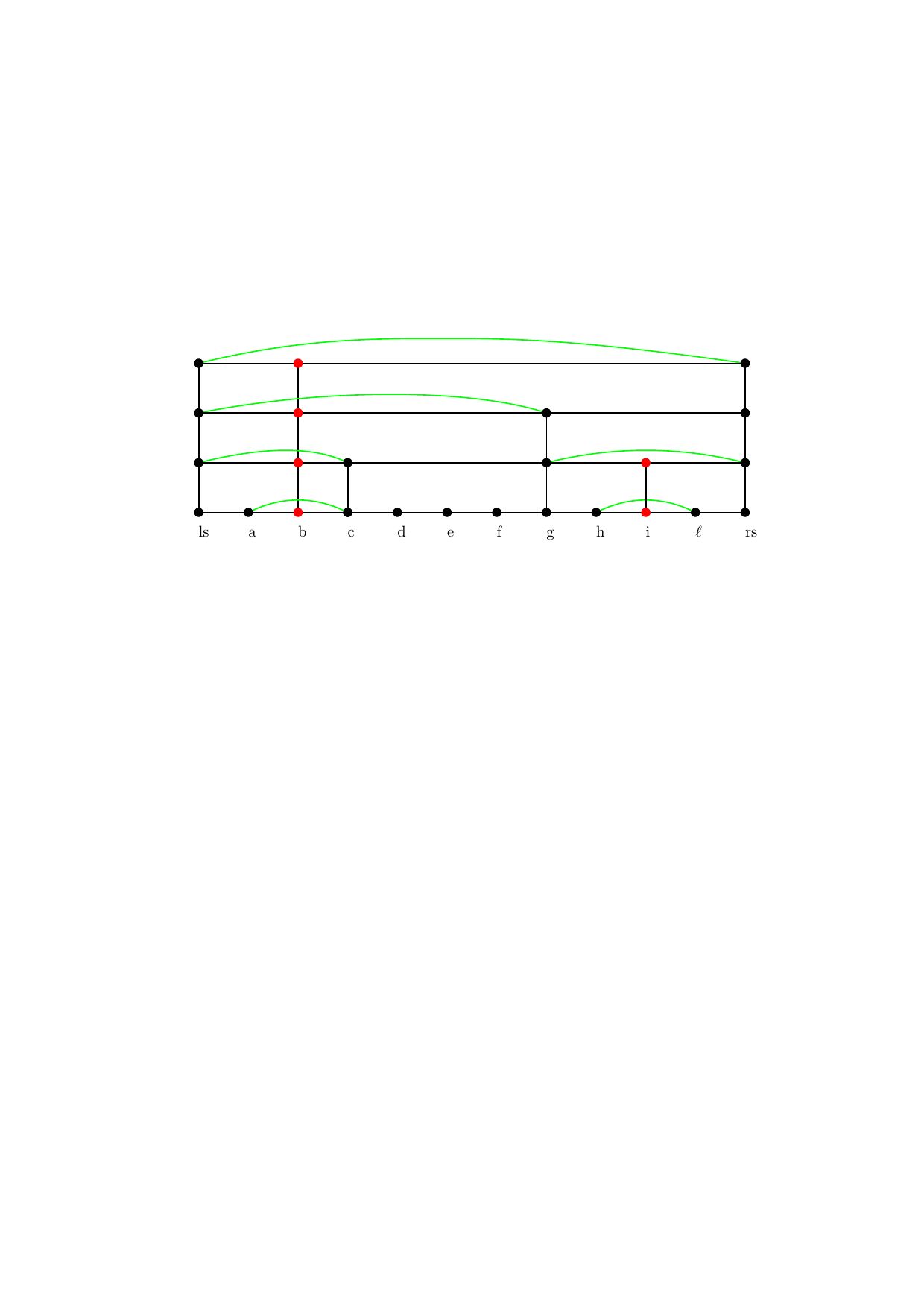}
		\caption{}\label{fig:delete_3}
	\end{subfigure}
	\begin{subfigure}{0.45\textwidth}
		\includegraphics[scale=0.5]{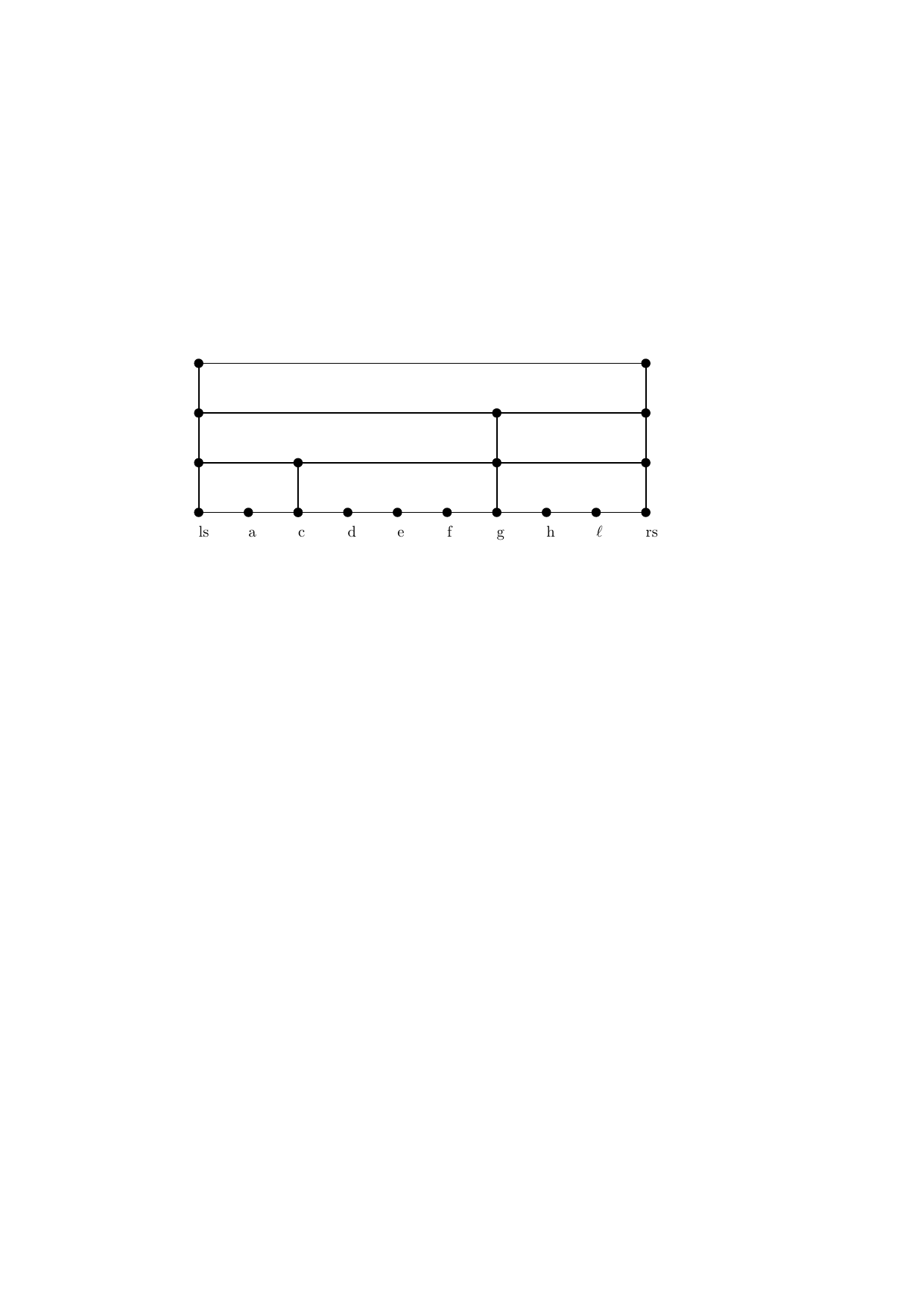}
    	\caption{}\label{fig:delete_4}
   \end{subfigure}

	\caption{Example of the delete phase on a skip list. Red nodes are the one that must be deleted from the skip list, blue arrow indicates the path towards the left topmost sentinel in the skip list from the black nodes. Figure~\ref{fig:delete_base} is the skip list before the deletion phase. Figure~\ref{fig:delete_0} shows the tree rooted in the left-topmost sentinel created by the Tree Formation Phase. Figure~\ref{fig:delete_1} shows the Message Initialization and the Propagation Phase executed at the bottom most level of the skip list where blue circles indicate that an edge between two black nodes have been created. Figure~\ref{fig:delete_2} shows the skip list after the bottom most lever run Algorithm~\ref{algo:delete}. Figure~\ref{fig:delete_3} is the skip list after running Algorithm~\ref{algo:delete} at all its levels. Finally, Figure~\ref{fig:delete_4} is the skip list after the deletion phase. The left and right sentinels are represented as ls and rs, respectively.}\label{fig:delete_logn}
 \end{figure}

\begin{lemma}\label{lemma:delete}
    Algorithm~\ref{algo:delete} executed on a skip list at all levels $0 \le \ell \in \Oo(\log n)$ correctly creates an edge between every pair  of black nodes that are separated by a set of red nodes at each level $\ell$ in $\Oo(\log n)$ rounds \whp. Moreover, the total work performed by this procedure is within a $\polylog(n)$ factor of the number of red nodes.
\end{lemma}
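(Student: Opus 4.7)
The plan is to prove the claim by (i) identifying a structural partition property of the tree $T$ induced by the skip list's right/down pointers, (ii) establishing by induction a propagation invariant that characterizes the messages emitted by each subtree, and then (iii) reading off correctness and complexity from this invariant.

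First I would make explicit the \emph{partition property} of the level-$\ell$ tree $T_\ell$: because $T_\ell$ is a shortest-path tree rooted at the left-topmost sentinel whose leaves are the set $B_\ell$ of black nodes at level $\ell$, and because its edges follow the skip list's right and down pointers, every internal node $u$ that has two children induces a contiguous split of its leaves. Specifically, the ``below''-child's subtree covers a contiguous prefix of $B_\ell$ in level-$\ell$ order, the ``right''-child's subtree covers the immediately following contiguous suffix, and a chain of single-child nodes simply pipes the message along. This partition property is the geometric backbone of the argument and would be verified by a short induction on the skip-list embedding of $T_\ell$; it is also the main subtlety of the proof, since the correctness of the $\mathsf{If}$ check in Algorithm~\ref{algo:delete} hinges entirely on these two-child splits being contiguous in the level-$\ell$ order.

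Next I would prove by induction on subtree height the following \emph{propagation invariant}: after a subtree $T_u$ finishes processing, (a) it emits a message $(w,x)$ in which $w$ and $x$ are the leftmost and rightmost black leaves of $T_u$, each dotted iff its outer skip-list neighbor at level $\ell$ (left neighbor for $w$, right neighbor for $x$) is red; and (b) every pair of black leaves consecutive within $T_u$ that is separated by red nodes has already been joined by a new level-$\ell$ edge during processing. The base case is handled by the Message Initialization step. For the inductive step at a merge of $(w, x)$ and $(y, z)$, the partition property forces $x$ to be the immediate black predecessor of $y$ in the level-$\ell$ order, and hence $x$ is dotted-right iff $y$ is dotted-left (both hold exactly when reds separate them; neither holds exactly when they are already level-$\ell$ neighbors), which justifies the conditional and completes the inductive step. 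Reading the invariant at the root of $T_\ell$ gives correctness: for any consecutive black pair $(x,y)$ at level $\ell$ that is separated by reds, the merge at their lowest common ancestor creates the edge $xy$, and no spurious edge is ever created since edges are produced only when both halves are dotted.

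For complexity, standard skip-list concentration (Appendix~\ref{apx:skip_list}) yields height $\Oo(\log n)$ \whp, so $T_\ell$ has depth $\Oo(\log n)$ and both the Tree Formation Phase and the Propagation Phase run in $\Oo(\log n)$ rounds per level; since all levels run in parallel the overall round count is $\Oo(\log n)$ \whp. For the work bound, note that $|B_\ell| \le 2 r_\ell$, where $r_\ell$ is the number of red nodes at level $\ell$; the tree $T_\ell$ is a union of $|B_\ell|$ shortest paths each of length $\Oo(\log n)$, so it has $\Oo(r_\ell \log n)$ nodes and sends $\Oo(r_\ell \polylog(n))$ messages. Summing over the $\Oo(\log n)$ levels and using that each red node sits in at most $\Oo(\log n)$ levels \whp\ yields total work $\Oo(R \polylog(n))$ where $R$ is the number of distinct red nodes, matching the claimed bound. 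Congestion stays within $\polylog(n)$ per round because any skip-list node participates in at most $\Oo(\log n)$ level-trees simultaneously.
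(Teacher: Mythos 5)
Your proposal is correct and follows essentially the same route as the paper's proof: the same leftmost/rightmost-leaf invariant established by induction on the tree, the same lowest-common-ancestor argument for both directions of correctness, and the same height-based round and leaf-based work bounds. The only (welcome) addition is that you explicitly justify the deferred claim that $x$ and $y$ are either both dotted or both undotted via the contiguity of the two-child split, a point the paper's proof leaves implicit.
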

\begin{proof}

We first establish a simple but useful invariant that holds for all nodes $u$ in $T$. Consider a pair $(x,y)$ propagated upwards by a node $u$ to its parent. We note that $x$ (resp., $y$) (in its un-dotted form) is the leftmost (resp., rightmost) leaf in the subtree rooted at $u$. This can be shown by induction. Clearly the statement is true if $u \in B$ as it creates the message to be of the form $(u,u)$ (either of them possibly dotted). The inductive step holds for $u$ at higher levels by the manner in which messages from children are combined at $u$. 

Note that the procedure  forms edges between pairs of nodes $(b, b')$ at some level $\ell$ only if they are both in $B$ (corresponding to the execution pertaining to level $\ell$). Thus, for correctness, it suffices to show that the edge is formed if and only if all the nodes between $b$ and $b'$ at level $\ell$ are red. 

To establish the forward direction of the bidirectional statement, let us consider the formation of an edge $(b,b')$ at some node $u$ that is at a level $\ell' \ge \ell$. It coalesced two pairs of the form $(x,\dot{b})$ and $(\dot{b'}, y)$. The implication is that $b$ has a right neighbor at level $\ell$ that is red and $b'$ has a left neighbor at level $\ell$ that is red. Moreover, they are the rightmost leaf and left most leaf respectively of the subtrees rooted at $u$'s children. Thus, none of the nodes between $b$ and $b'$ at level $\ell$ are in $T$. This implies that they must all be red nodes.

To establish the reverse direction, let us consider two nodes $b$ and $b'$ at level $\ell$ that are both black with all nodes between them at level $\ell$ being red. Consider their lowest common ancestor $u$ in $T$. Clearly, $u$ must have two children, say, $v$ and $w$. From the invariant established earlier, the message sent by $v$ (resp., $w$) must be of the form $(\ast, \dot{b})$ (resp., $(\dot{b'}, \ast)$). Thus, $u$ must introduce $b$ and $b'$, thereby forming the edge between them.

Since $T$ is the shortest path rooted at the left-topmost sentinel, its height is at most $\Oo(\log n)$ \whp~Both the tree formation and propagation phases are bottom-up procedures that take time proportional to the height of $T$. Thus, the total running time is $\Oo(\log n)$. 

Finally, each node in $T$ sends at most $\Oo(1)$ messages up to its parent in $T$. This implies that the number of messages sent is proportional (within $\polylog(n)$ factors) to the number of leaves in $T$. Also, all edges added in the tree are between pairs of nodes $(b, b')$ in $B$ that are consecutive at level $\ell$ (i.e., no other node in $B$ lies between $b$ and $b'$ at level $\ell$). Thus, work efficiency is established. 
\end{proof}

\subsection{Buffer Creation}\label{sec:creation}
To construct the Buffer Network $\B$, we begin with a set of nodes that recently joined the system due to churn. These nodes are initially unsorted and must be organized into a skip list structure. Here, the core challenge is to sort them quickly and reliably in a fully distributed setting. The current best-known approach to sort $n$ elements using $\Polylog{n}$ incoming and outgoing messages for each node during each round requires $\Oo(\log^3 n)$ time~\cite{Augustine_2022}. To improve on such result, we rely on sorting networks theory (e.g.~\cite{Batcher_1968,Ajtai_1983,Leighton_2014}). A sorting network is a fixed graph-like structure designed specifically for sorting any input permutation efficiently.
More precisely, it can be viewed as a circuit-looking directed acyclic graph with $n$ inputs and $n$ outputs in which each node has exactly two inputs and two outputs (i.e., two incoming edges and two outgoing edges). The key property of a sorting network is that, for any input permutation, the outputs are guaranteed to be sorted. A sorting network is characterized by two parameters: the number of nodes resp. \emph{size} and the \emph{depth}. The space required to store a sorting network is determined by its size while the overall running time of the sorting algorithm is determined by its depth. Figure~\ref{fig:sorting_net} shows an example of sorting network on $4$ elements.
\begin{figure}[htb!]
	\captionsetup[subfigure]{justification=centering}
	\centering
	\begin{subfigure}{0.15\textwidth}
			\centering
		\includegraphics[scale=0.4]{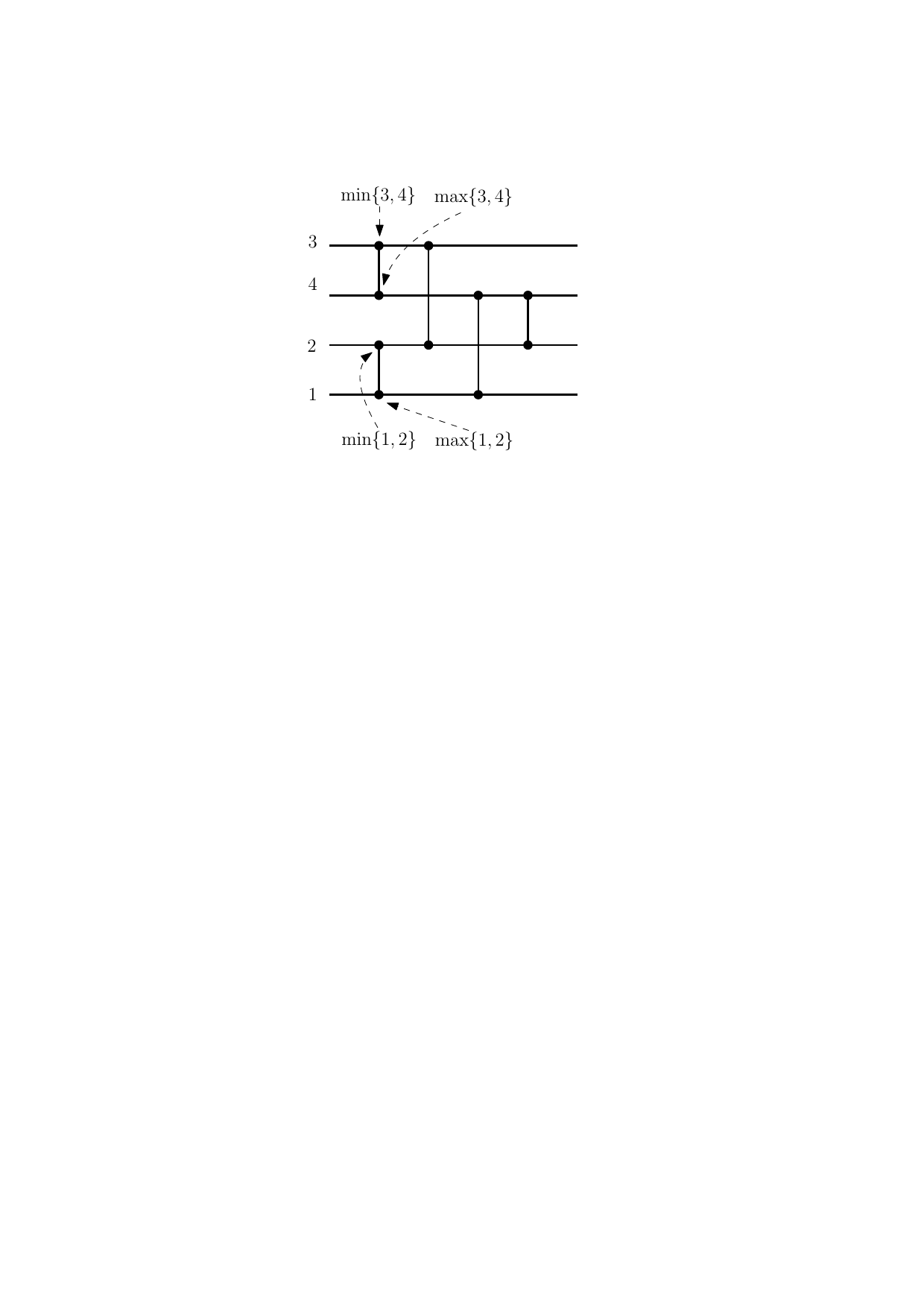}
		\caption{}\label{fig:sorting_net_a}
	\end{subfigure}
	\begin{subfigure}{0.3\textwidth}
			\centering
		\includegraphics[scale=0.4]{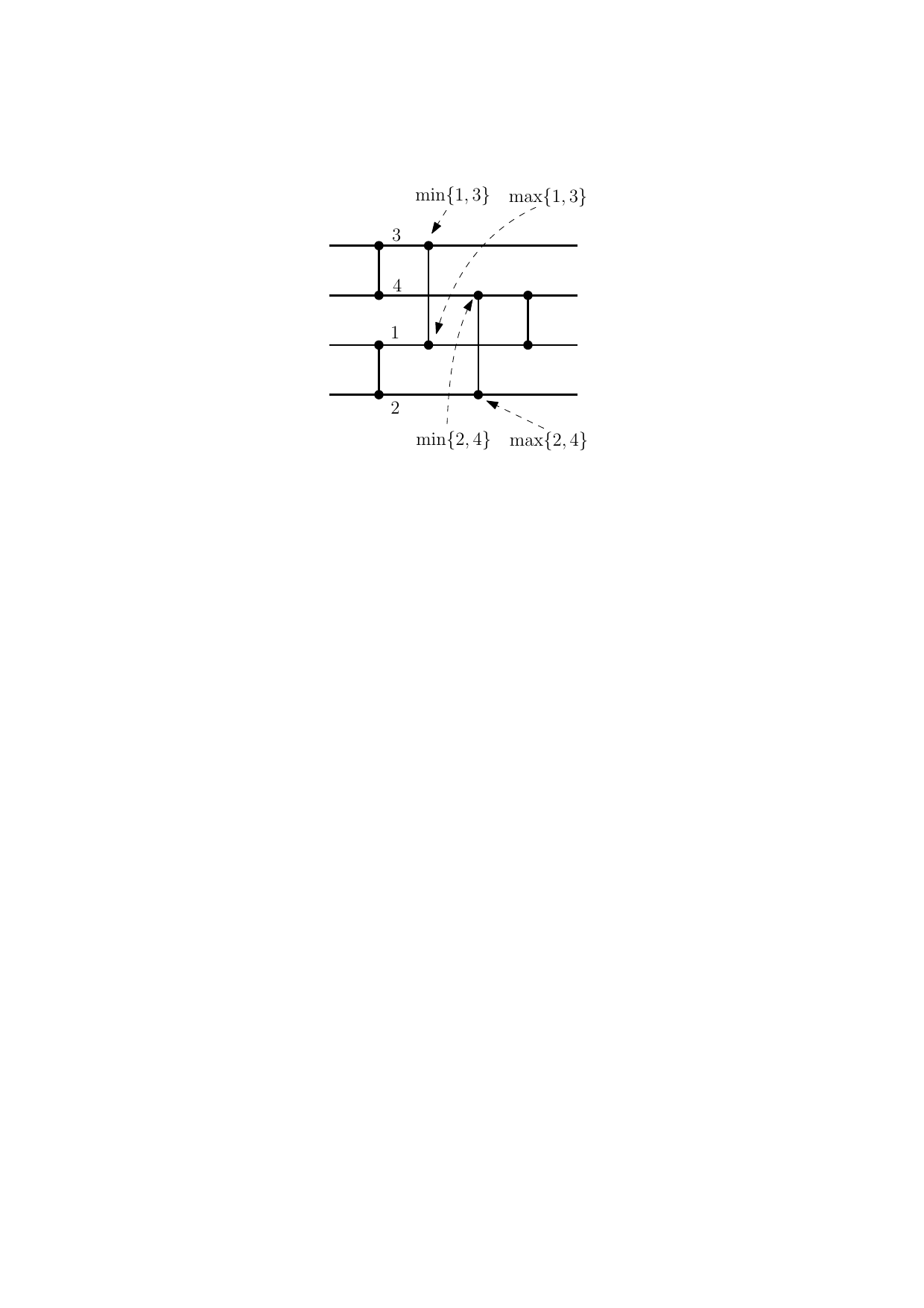}
		\caption{}\label{fig:sorting_net_b}
	\end{subfigure}
    \begin{subfigure}{0.2\textwidth}
			\centering
		\includegraphics[scale=0.4]{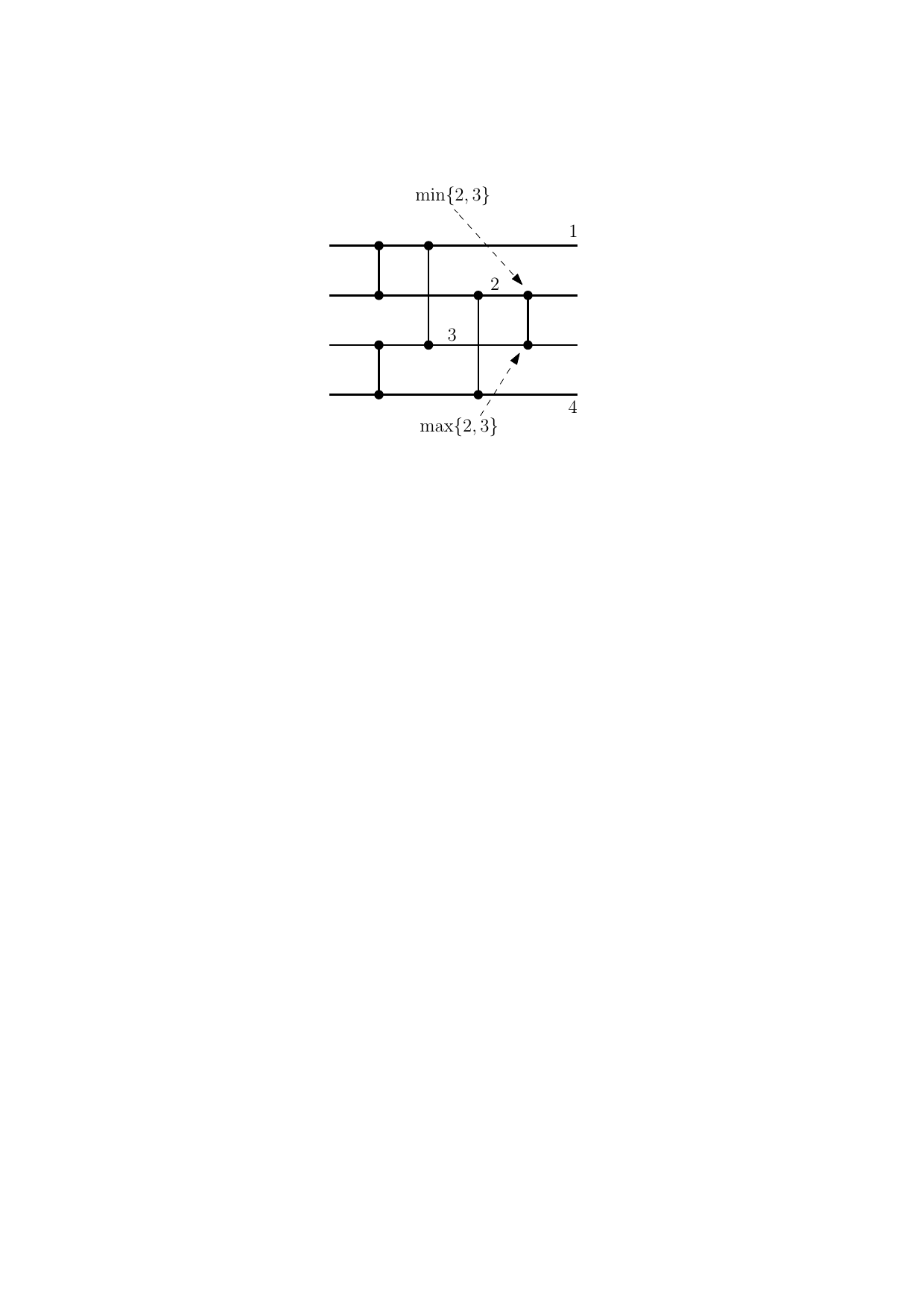}
		\caption{}\label{fig:sorting_net_c}
	\end{subfigure}
    \begin{subfigure}{0.3\textwidth}
			\centering
		\includegraphics[scale=0.4]{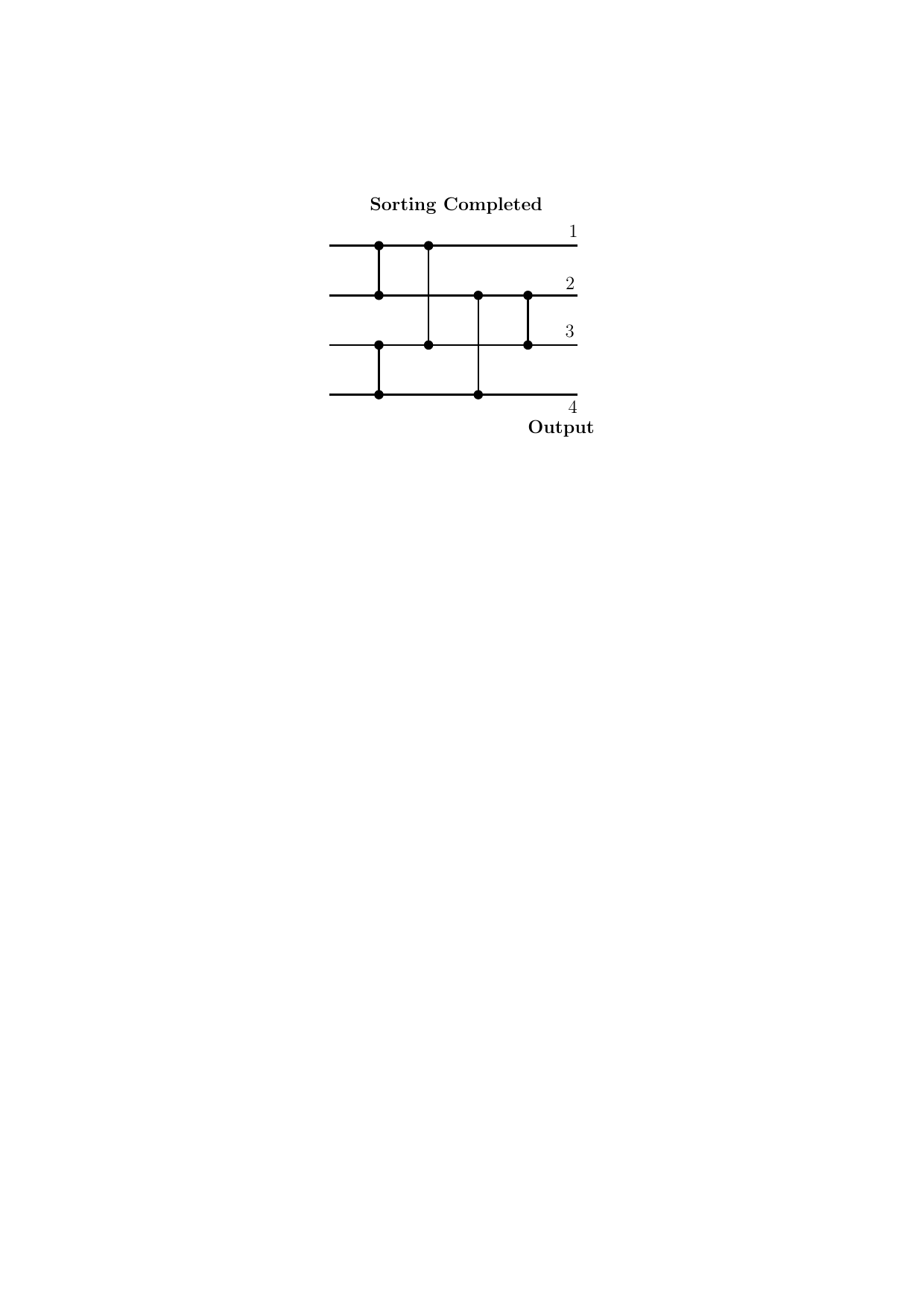}
		\caption{}\label{fig:sorting_net_d}
	\end{subfigure}
	\caption{Example of a sorting network~\cite{Batcher_1968} on $4$ numbers. Every crossing edge connecting two nodes in different levels is a node in the DAG representation. The upper output of a crossing edge connecting two nodes in different levels is the minimum between the two inputs, while the lower output is the maximum. Execution of the sorting network is left to right.}\label{fig:sorting_net}
\end{figure}

Here we use the AKS network~\cite{Ajtai_1983} that allows to sort $n$ elements in $\Oo(\log n)$ rounds. 

To use sorting networks in our distributed setting, we must simulate their structure on a suitable topology. In our case, this is achieved by embedding the AKS network onto a twin-butterfly, which we construct on top of the Spartan network.

Indeed, it is well known that sorting networks can be simulated on structured topologies such as a butterfly networks or other hypercubic networks (see Chapter 3.5 in~\cite{Leighton_2014}). To leverage this in our setting, we aim to embed the AKS sorting network onto a butterfly-like network. For this purpose, we use a result by Maggs and V{\"{o}}cking~\cite{Maggs_2000} which shows that an AKS network can be embedded into a \emph{multibutterfly}~\cite{Upfal_1992}, a generalization of the classic butterfly topology.

A $k$-dimensional multibutterfly network consists of $k+1$ levels, each consisting of $2^k$ nodes. Nodes at level $\ell$ are labeled as $(\ell,i)$ for $0\leq\ell\leq k$ and $0\leq i< 2^k$. Each level is partitioned into $2^k$ groups
\begin{align*}
A_{\ell,j} = \left\{(\ell,i):\left\lfloor\frac{i}{2^{k-\ell}}\right\rfloor = j\right\}\quad\text{ for }0\leq j<2^k
\end{align*}
Nodes in group $A_{\ell,j}$ are connected to those in $A_{\ell+1,2j}$ and $A_{\ell+1,2j+1}$, creating a butterfly-like network. 

To embed the AKS network, we focus on a specific subclass of multibutterfly networks called \emph{k-folded butterflies}. These are obtained by superimposing $k$ butterfly networks and ``folding'' them into one. That is, the nodes from each butterfly with the same label are merged together, and the edges are colored so that each color class forms a valid butterfly. This folding reduces the number of physical nodes while preserving the logical connectivity of multiple butterflies. 

Formally, suppose the edges of a $d$-dimensional multibutterfly of degree $4k$ can be colored by $k$ colors such that the network induced by the edges of each color are isomorphic to the $d$-dimensional butterfly. Then this multibutterfly can be constructed by folding $k$ butterfly networks. By folding, we mean that the labels of the nodes in the $A_{\ell,j}$ sets of each of these butterflies are permuted and the $k$ nodes with the same label in distinct butterflies are merged together to form a multibutterfly node\footnote{Intuitively, a $k$-folded butterfly overlays $k$ butterfly networks and merges nodes with the same label, allowing multiple logical networks to coexist on a single physical graph.}.

In particular, Maggs and V{\"{o}}cking~\cite{Maggs_2000} show how to embed the AKS network into a \emph{twin-butterfly}, which is the case where two butterfly networks are folded together (i.e., a $2$-folded butterfly). In such a network, each leaf node has degree $4$, and each internal node has degree $8$.
For completeness we state the result we are interested in.
\begin{theorem}[Theorem 4.1 in~\cite{Maggs_2000}]\label{thm:aks_embedding}
	An AKS network $\A$ of size $N =\Oo(n\log n)$ can be embedded into a twinbutterfly $\M$ of size at most $\kappa\cdot N+o(N)$ with load $1$, dilation $2$, and congestion $1$, where $\kappa\leq 1.352$ is a small constant depending on the AKS parameters.
\end{theorem}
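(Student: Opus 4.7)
The plan is to proceed by a structural decomposition of the AKS network followed by a careful layered embedding. Recall that the AKS network is organized as a depth-$\Oo(\log n)$ sequence of comparator stages, where each stage is built from constant-degree \emph{$\epsilon$-halvers} (and their compositions into approximate separators). Each $\epsilon$-halver acts on blocks of size $2^j$ at level $j$ and is itself realized by a bounded-degree expander whose comparator pattern matches, up to a permutation of labels, the butterfly interconnection at dimension $j$. Our overall strategy will be to (i) match the $\log n$ comparator stages of $\A$ one-to-one to the dimensions of a butterfly of appropriate size, (ii) show that each stage can be routed by a constant-dilation, constant-congestion map into that butterfly dimension, and (iii) explain why two superimposed (folded) butterflies suffice to realize all simultaneous routing requirements of one AKS stage.

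First, I would set up the embedding. Let $d=\lceil\log N\rceil$. I define a target twinbutterfly $\M$ of dimension $d$ together with an injection $\phi:V(\A)\to V(\M)$ that places the level-$\ell$ comparators of $\A$ into level $\ell\bmod d$ of $\M$. Load $1$ is then a direct consequence of the fact that AKS has $\Oo(n)$ comparators per stage and a butterfly has exactly $2^d\geq N$ nodes per level; choosing $\phi$ to be a level-preserving injection with the permutation of indices dictated by the AKS expander wiring gives each host node exactly one AKS node. To bound dilation by $2$ it suffices to observe that each $\epsilon$-halver in AKS connects pairs $(i,j)$ whose indices differ in a bounded-depth subcube; by choosing the label permutations of the two butterflies appropriately (as in the classical butterfly emulation of bounded-degree expanders) every such comparator edge maps to a path of length at most $2$ in $\M$.

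Next comes the main obstacle, achieving congestion $1$. A single butterfly cannot simultaneously realize both the ``halving'' and the ``separating'' comparator patterns of one AKS stage, because these two patterns collide on the same butterfly edges. This is precisely where twinning is used: I would color the AKS comparator edges within each stage with two colors so that each color class individually forms a permutation routable by a single butterfly of the correct dimension (one color realizing the halver pattern, the other realizing the residual separator pattern). The two butterflies are then folded by merging nodes with identical labels, exactly as in the definition of the $k$-folded butterfly with $k=2$. Because each AKS edge receives a unique color and each color class is routed by disjoint butterfly edges, the resulting congestion is $1$. Verifying that such a $2$-coloring exists for every stage of AKS is the technical heart of the argument and is precisely the content of the Maggs–V{\"o}cking analysis; it rests on the fact that AKS halvers can be built from edge-decomposable expanders with a two-color partition into butterfly-isomorphic subgraphs.

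Finally, I would account for the constant $\kappa$. The AKS network has $N=\Oo(n\log n)$ comparator nodes while a $d$-dimensional twinbutterfly has $(d+1)\cdot 2^d$ nodes; choosing $d=\lceil\log N\rceil$ gives $2^d\leq 2N$ and $(d+1)\leq \log N+1$, so naively this would be far too large. The key refinement is that one does not use a full twinbutterfly of dimension $d$, but only the ``active'' sub-butterfly whose width matches the AKS stage widths (which shrink geometrically toward the recursion leaves). Summing the widths across all $\Oo(\log n)$ stages yields a geometric series whose leading coefficient is precisely the AKS expansion parameter; optimizing this parameter in the usual AKS trade-off yields $\kappa\leq 1.352$, with lower-order slack absorbed into the $o(N)$ term. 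The $o(N)$ term collects the rounding overhead in going from $N$ to $2^{\lceil\log N\rceil}$ and the boundary sub-butterflies used to handle AKS's recursive base cases.
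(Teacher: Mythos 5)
This statement is not proved in the paper at all: it is imported verbatim as Theorem~4.1 of Maggs and V{\"o}cking~\cite{Maggs_2000} and used as a black box (the text explicitly says ``For completeness we state the result we are interested in''). So there is no in-paper proof to compare against, and the only honest ``proof'' here is the citation. Your proposal attempts to reconstruct the Maggs--V{\"o}cking argument from scratch, which is a much harder task, and as written it has genuine gaps rather than being a complete alternative derivation.

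The central gap is circularity: at the one point you yourself identify as ``the technical heart of the argument'' --- showing that each AKS stage admits a two-coloring of its comparator edges into two butterfly-isomorphic classes, which is what makes the twin (2-folded) butterfly suffice with congestion $1$ --- you defer to ``precisely the content of the Maggs--V{\"o}cking analysis.'' That is the theorem you are supposed to be proving, so nothing has been established. The dilation-$2$ claim is likewise asserted (``by choosing the label permutations appropriately'') without exhibiting the permutations or verifying that every $\epsilon$-halver edge really lands on a length-$\le 2$ path. Finally, the accounting for $\kappa\leq 1.352$ does not go through: you start from a full $d$-dimensional twinbutterfly with $(d+1)2^{d}=\Theta(N\log N)$ nodes, observe it is too large, and then appeal to an ``active sub-butterfly'' whose widths shrink geometrically --- but a width-restricted sub-butterfly is not itself a twinbutterfly, so even if the geometric-series count closed, it would not yield the stated conclusion that $\A$ embeds into a \emph{twinbutterfly} of size $\kappa N+o(N)$; and no computation is given that produces the specific constant $1.352$, which comes out of the particular AKS parameterization analyzed in~\cite{Maggs_2000}. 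The level assignment $\ell\bmod d$ also needs care, since the AKS depth is comparable to $d$ and wrapping levels threatens the load-$1$ claim. For the purposes of this paper, the right move is simply to cite the external result; if you want a self-contained proof, each of the three steps above must be carried out explicitly rather than asserted.
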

Here the \emph{load} of an embedding is the maximum number of nodes of the AKS network mapped to any node of the twinbutterfly. The \emph{congestion} is the maximum number of paths that use any edge in the twinbutterfly, and the \emph{dilation} is the length of the longest path in the embedding.
\begin{lemma}\label{lemma:constr_twinb}
	There exists a distributed algorithm that builds a twinbutterfly network $\M$ in $\Oo(\log n)$ time.
\end{lemma}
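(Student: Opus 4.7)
The strategy is to piggyback on the already-built Spartan wrapped butterfly from Section~\ref{sec:churn_resilient_network}, which supplies one of the two constituent butterflies of a twinbutterfly essentially for free. Recall that a twinbutterfly is a $2$-folded butterfly, obtained by overlaying two butterfly networks with permuted labels and identifying nodes that carry the same label. Thus the construction reduces to two tasks: (i) computing, at each Spartan committee, a second labeling of its coordinates under a fixed permutation, and (ii) installing the additional edges prescribed by the second butterfly on top of the existing Spartan overlay.

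First, I would observe that the relabeling can be taken to be a deterministic, closed-form bit permutation (for instance a bit-reversal or a cyclic bit-shift, as used in standard folded-butterfly constructions). Consequently, each committee leader can compute the coordinates of its new neighbors in the second butterfly purely from its own address, with no communication. Second, to install the edges, each committee must introduce itself to its $\Oo(1)$ new neighbors in the second butterfly and then exchange member-ID lists so that a complete bipartite graph is formed across each new twinbutterfly edge. Since source and target committees both know their (common) coordinates in advance, and since the Spartan wrapped butterfly has diameter $\Oo(\log n)$, the introduction messages can be routed along Spartan edges in $\Oo(\log n)$ rounds. Each committee has $\Theta(\log n)$ peers available to parallelize the sending and receiving of the $\Oo(\log n)$-sized ID lists, so each peer handles only $\Polylog{n}$ messages per round.

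The main obstacle, and the only non-trivial part of the analysis, is bounding the congestion on Spartan edges during the introduction phase: many committees simultaneously route messages to their second-butterfly neighbors, and those routes may intersect on shared wrapped-butterfly links. I would address this via a Valiant-style two-phase randomized routing scheme on the Spartan wrapped butterfly (route first to a uniformly random intermediate committee, then to the true destination), which is well known to yield $\Polylog{n}$ congestion per node per round with high probability. The $\Theta(\log n)$-fold intra-committee parallelism further absorbs any remaining load. Once all introductions complete, each pair of committees connected in the twinbutterfly finishes the clique-of-bipartite-edges handshake in $\Oo(1)$ additional rounds. Summing the contributions, the entire construction takes $\Oo(\log n)$ rounds \whp, establishing the lemma.
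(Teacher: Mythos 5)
There is a genuine mismatch between what you construct and what the lemma is used for. The twinbutterfly $\M$ of Lemma~\ref{lemma:constr_twinb} is the substrate on which the AKS network is simulated to sort the $n_\B = |C|$ \emph{newly joined} buffer elements (see Lemma~\ref{lemma:creation}); it must therefore be a fresh, unwrapped twinbutterfly with $\Theta(n_\B \log n_\B)$ node slots whose occupants are precisely the unsorted new elements (padded with dummies if necessary). Your construction instead folds a second butterfly onto the existing Spartan overlay, producing a twinbutterfly whose vertices are the $\Oo(n/\log n)$ \emph{committees} of $\X$. This object has the wrong node set (committees rather than buffer elements), the wrong size (tied to $n$ rather than to $n_\B$), and the wrong shape ($\X$ is a \emph{wrapped} butterfly, whereas the multibutterfly of Theorem~\ref{thm:aks_embedding} has $k+1$ distinct levels), and you never explain how the unsorted elements of $C$ would be injected into it. It also breaks dynamic resource competitiveness: the buffer-creation work must be within polylogarithmic factors of the churn (Lemma~\ref{lemma:workload}), but adding a second butterfly's worth of bipartite edges over all of Spartan costs $\Omega(n)$ work even when only a handful of nodes joined.

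The paper's route is different and sidesteps the congestion problem you spend most of your effort on: it re-runs the Spartan \emph{construction template} (leader election in $\X$, an $\Oo(\log n)$-height binary tree, a cycle over the first $n_\B(\log n_\B + 1)$ nodes in in-order traversal, then the cycle-to-butterfly transformation of~\cite{Augustine_2021}) to build $\M$ from scratch over the new nodes in $\Oo(\log n)$ rounds. No cross-network introductions need to be routed through Spartan, so no Valiant-style two-phase routing analysis is required. A secondary concern, even taking your construction on its own terms, is that Theorem~\ref{thm:aks_embedding} concerns the specific twinbutterfly of Maggs and V\"{o}cking; you would still need to argue that your fixed bit-permutation folding yields a multibutterfly admitting that embedding, which is not automatic for an arbitrary deterministic folding.
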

\begin{proof}
	Let $n_\B = |C|$ be the number of nodes that must join together to form the Buffer network $\B$. The temporary twinbutterfly $\M$ can be built in $\Oo(\log n)$ rounds by adapting the distributed algorithm for building a butterfly network in~\cite{Augustine_2021} described in Section~\ref{sec:churn_resilient_network}: (1) a leader $l$ is elected in $\X$; (2) a $\Oo(\log n)$ height binary tree rooted in $l$ is constructed; (3) a cycle using the first $n_\B (\log n_\B +1)$ nodes of the tree in-order traversal is built; and, (4) the cycle is then transformed into the desired twinbutterfly network with $\log n_\B+1$ columns and $n_\B$ rows in $\Oo(\log n)$ rounds.
\end{proof}
The idea is to use such a butterfly construction algorithm to build a twinbutterfly $\M$ in $\Oo(\log n)$ rounds and simulate the AKS algorithm on it. 
Moreover, $\M$ being a data structure build on top of the \Spartan~network, it naturally enjoys a churn resiliency property. In fact, every time a node $v$ is deleted from $\X$, a committee will temporary cover for $v$ in $\M$ (see Section~\ref{sec:covering_for_nodes}). 
Finally, we show that the $\B$ can be created in $\Oo(\log n)$ rounds.
\begin{lemma}\label{lemma:creation}
	The Creation phase can be computed in $\Oo(\log n)$ rounds \whp.
\end{lemma}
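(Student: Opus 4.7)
The plan is to decompose the Creation phase into three sub-phases and argue that each runs in $\Oo(\log n)$ rounds with high probability. First, I would invoke Lemma~\ref{lemma:constr_twinb} to build the twinbutterfly $\M$ on top of $\X$ in $\Oo(\log n)$ rounds, where the new nodes in $C$ (of size $n_\B \in \Oo(n/\log n)$) are mapped to the leaves. Since $\M$ sits on top of the \Spartan{} network, whenever the adversary removes a node of $\M$, its committee in $\X$ transparently covers for it (see Section~\ref{sec:covering_for_nodes}), so $\M$ logically persists throughout the phase at the cost of only a $\Polylog{n}$ blow-up in per-round message load.

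Next, I would simulate the AKS sorting network $\A$ on $\M$ using the embedding of Theorem~\ref{thm:aks_embedding}. Because the embedding has load $1$, dilation $2$, and congestion $1$, each step of the AKS simulation can be performed by a constant number of message exchanges along the butterfly edges without violating the $\Oo(\Polylog{n})$ per-round message cap. Since $\A$ has depth $\Oo(\log n)$, the sorting completes in $\Oo(\log n)$ rounds and produces a sorted linked list of the elements of $C$, which becomes level $0$ of the buffer skip list $\B$.

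The third sub-phase builds the upper levels of $\B$ on top of this sorted list. Each node $v \in C$ independently draws its maximum level $\lmax{}{v}$ from the geometric distribution; by standard skip list concentration (Appendix~\ref{apx:skip_list}) we have $\Lmax{\B} \in \Oo(\log n_\B) \subseteq \Oo(\log n)$ \whp. For each level $\ell \ge 1$, each node at level $\ell$ needs to discover its nearest left and right neighbors at level $\ell$, which I would compute by pointer-jumping along level $0$: in iteration $i$, each node combines the level-$\ell$ predecessor/successor information of its current $2^{i-1}$-th left/right neighbor to obtain its $2^i$-th one. After $\Oo(\log n)$ iterations the nearest-neighbor pointers at every level are installed. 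Each iteration uses $\Oo(1)$ messages per node and never more than $\Oo(\Polylog{n})$ distinct pointers are stored per node (since the skip list has $\Oo(\log n)$ levels \whp), so the construction respects the bandwidth constraints and, by the Spartan covering mechanism, remains correct despite the ongoing churn.

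The main obstacle I anticipate is the churn that can hit $\M$, its simulation of $\A$, and the pointer-jumping step concurrently: a naive analysis would have to re-start any sub-procedure in which a participating node is lost. This is why it is crucial that (i) $\M$ is constructed \emph{on top of} the Spartan overlay so that committees automatically substitute for removed nodes in $\Oo(1)$ rounds by Lemma~\ref{lemma:taking_over}, and (ii) both the AKS simulation and the pointer-jumping procedure are \emph{oblivious local} protocols whose correctness is inherited directly from the committee replacement guarantee. Combining the three $\Oo(\log n)$-round sub-phases and taking a union bound over the $\Oo(1)$ high-probability events involved yields the claim.
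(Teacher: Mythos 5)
Your proposal is correct and matches the paper's proof for the first two sub-phases (building the twinbutterfly $\M$ via Lemma~\ref{lemma:constr_twinb} and simulating the AKS network on it via the embedding of Theorem~\ref{thm:aks_embedding}), but it diverges in how the upper levels of $\B$ are wired. The paper first \emph{copies} the sorted base level once for every level $0 < \ell \le \Lmax{\B}$, marks each node at each level as \emph{effective} or \emph{fill-in} according to its sampled height, and then reuses the Deletion-phase rewiring (Algorithm~\ref{algo:delete}) in parallel at every level to splice out the fill-in nodes; this costs $\Oo(\log n)$ rounds to duplicate the levels (by Lemma~\ref{lemma:exp_hight}) plus $\Oo(\log n)$ rounds for the rewiring, and its correctness is inherited from Lemma~\ref{lemma:delete}. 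You instead run a pointer-jumping/doubling computation along level $0$ so that each node directly learns its nearest left and right neighbor of height at least $\ell$ for every $\ell$. This is a legitimate alternative: it is the classical parallel list-contraction idea, it avoids materializing the $\Oo(n_\B \log n)$ intermediate copied edges, and the messages it needs (carrying one candidate ID per level, i.e.\ $\Oo(\log^2 n)$ bits) stay within the $\Oo(\Polylog{n})$ bandwidth budget; the price is that you must argue the doubling invariant yourself rather than citing the already-proved deletion lemma. Two minor points you gloss over: the paper explicitly handles the case where the twinbutterfly size $\kappa N + o(N)$ exceeds the number of available nodes by letting each node simulate a constant number of dummy nodes, which your invocation of Lemma~\ref{lemma:constr_twinb} leaves implicit; and your appeal to the committee-covering mechanism for churn resilience of the AKS simulation and the pointer-jumping is stated at the same level of informality as the paper's, so it is acceptable but not an improvement.
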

\begin{proof}
	$\M$ can be built in $\Oo(\log n)$ rounds (Lemma~\ref{lemma:constr_twinb}). Moreover, to handle cases in which $\kappa \cdot N+ o(N)$ (see Theorem~\ref{thm:aks_embedding}) is greater than the number of nodes in the Spartan network $\X$, we allow for nodes to represent a constant number of temporary ``dummy'' nodes for the sake of building the needed multibutterfly. Once $\M$ has been constructed, we run AKS on such network and after $\Oo(\log n_\B) =\Oo(\log n)$ rounds we obtain the sorted list at the base of the new skip list we want to build. Next, each node $u$ in the buffer computes its maximum height $\ell_{\max}^\B(u)$ in the buffer skip list $\B$ and each level $0<\ell\leq \ell_{\max}(\B)$ is created by copying the base level of the skip list. Moreover, each node $u$ at level $0<\ell\leq \ell_{\max}(\B)$ can be of two types, \emph{effective} or \emph{fill-in}. We say that a node $u$ is \emph{effective} at level $\ell$ if $\ell\leq \ell_{\max}^\B(u)$, and \emph{fill-in} otherwise (see Figure~\ref{fig:fill_in_1}). Next, we use the same parallel $\Oo(\log n)$ rounds rewiring technique used in the Deletion phase (see Section~\ref{sec:delete}) to exclude fill-in nodes from each level and obtain a skip list of effective nodes (see Figures~\ref{fig:fill_in_2}-\ref{fig:fill_in_3}). Wrapping up, to create the buffer network we need $\Oo(\log n)$ rounds to build the twibutterfly $\M$, $\Oo(\log n)$ rounds to run AKS on $\M$, $\Oo(\log n)$ rounds \whp~to duplicate the levels (see Lemma~\ref{lemma:exp_hight} in Appendix~\ref{apx:skip_list}) and $\Oo(\log n)$ rounds to run the rewiring procedure. Thus, we need $\Oo(\log n)$ rounds
	\whp
\end{proof}
\begin{figure}[htb!]
	\captionsetup[subfigure]{justification=centering}
	\centering
	\begin{subfigure}{0.3\textwidth}
		\centering
		\includegraphics[scale=0.55]{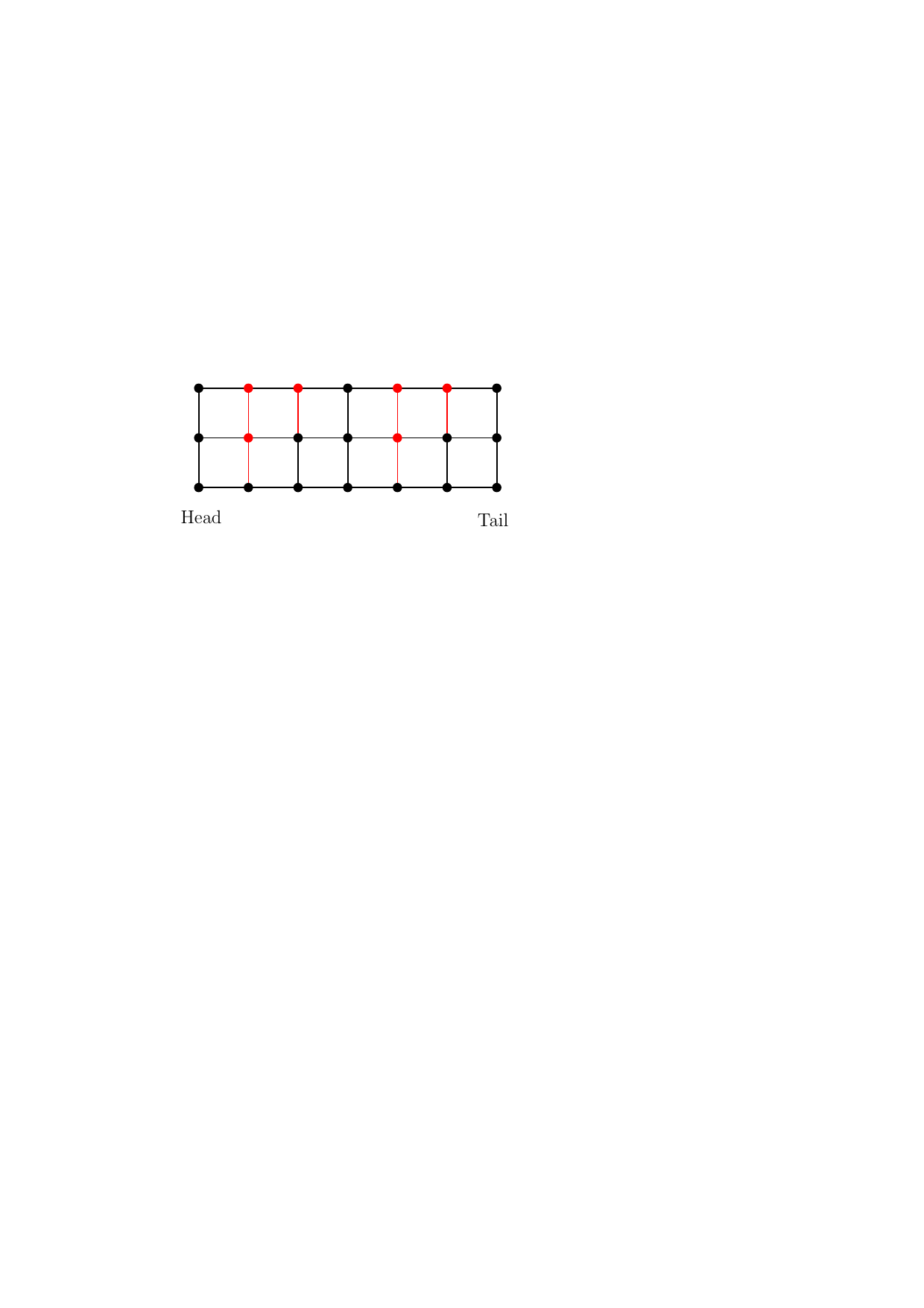}
		\caption{}\label{fig:fill_in_1}
	\end{subfigure}
	\begin{subfigure}{0.3\textwidth}
		\centering
		\includegraphics[scale=0.55]{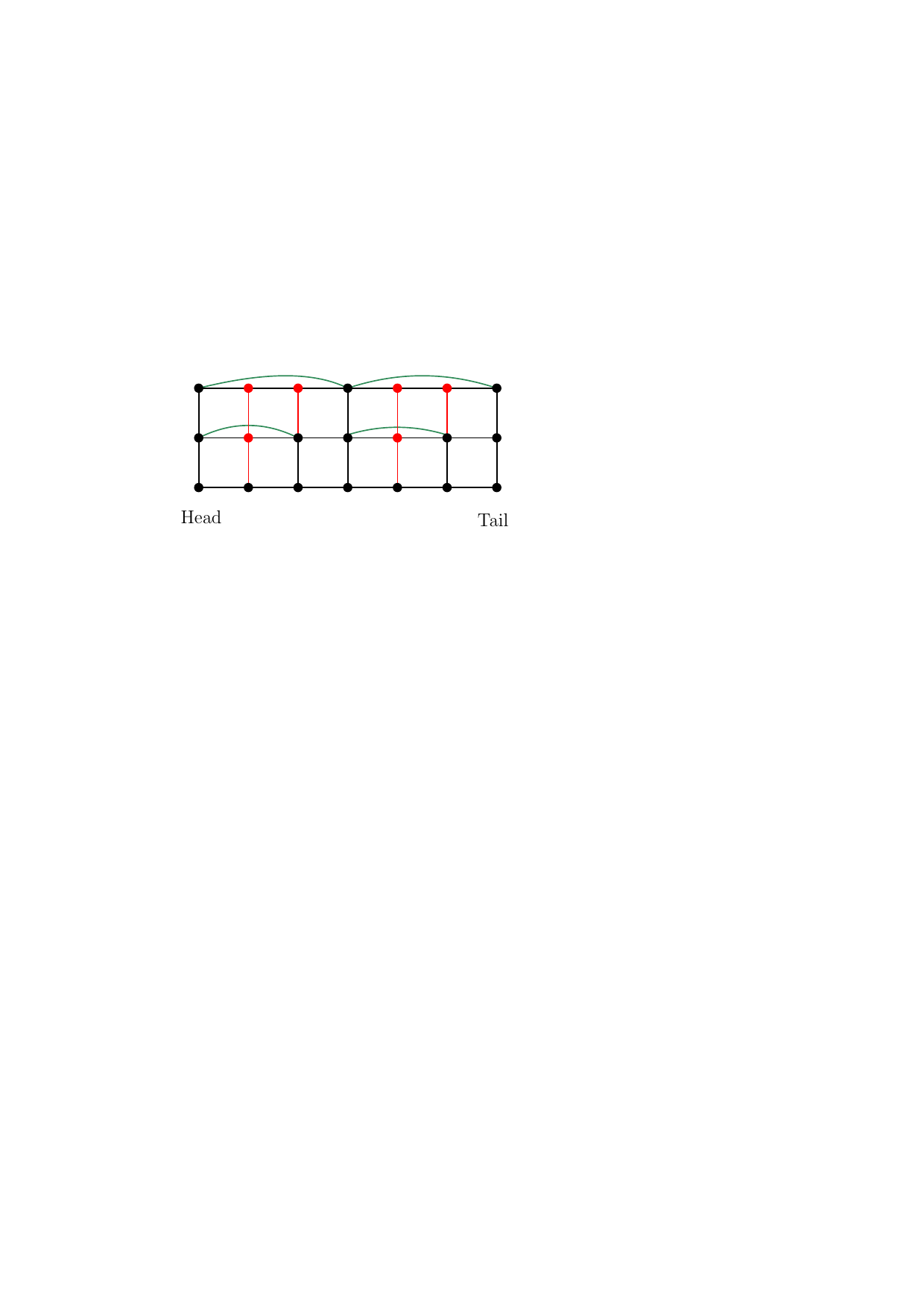}
		\caption{}\label{fig:fill_in_2}
	\end{subfigure}
	\begin{subfigure}{0.3\textwidth}
		\centering
		\includegraphics[scale=0.55]{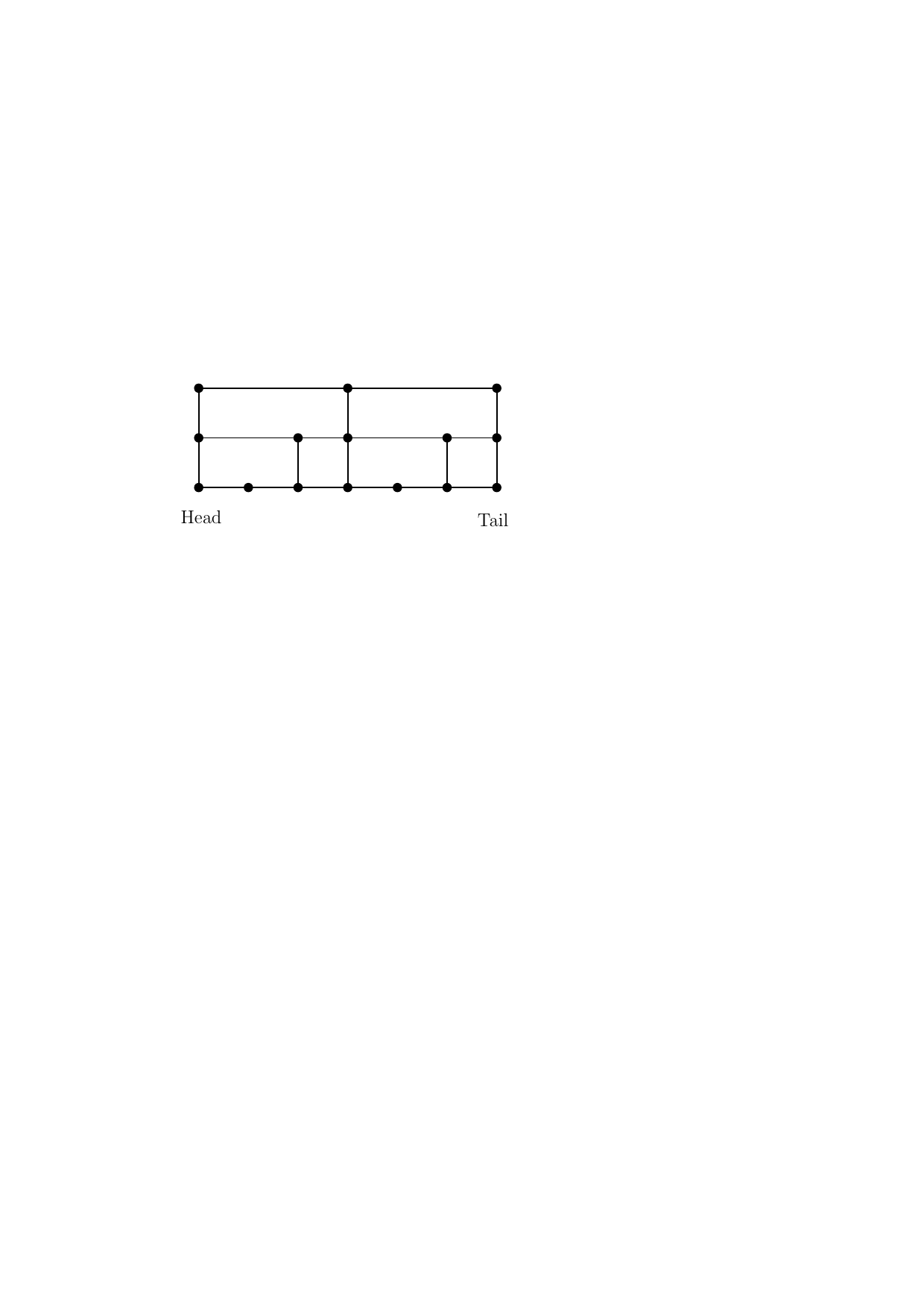}
		\caption{}\label{fig:fill_in_3}
	\end{subfigure}	
	\caption{Example of create procedure. In Figure~\ref{fig:fill_in_1}, freshly created Buffer network in which the black nodes are the effective nodes and red ones are the fill-in nodes. Figure~\ref{fig:fill_in_2} shows the buffer network after the parallel rewiring phase in which the green edges are the one created by the delete procedure. Finally, Figure~\ref{fig:fill_in_3} shows the buffer network after the Creation phase.}\label{fig:create_fast}
\end{figure}
This completes the creation of the Buffer network $\B$, which is now a properly structured skip list ready to be merged into the clean network $\C$.

\subsection{Merge}\label{sec:merge}
Once the Buffer Network $\B$ is constructed, the next step is to merge it into the Clean Network $\C$, updating the distributed skip list to include the newly joined nodes. This operation must be carried out in a fully distributed manner and under high churn.

To this end, we introduce the \textsc{WAVE} protocol, a distributed merge algorithm
based on a top-down traversal of $\C$ by groups of nodes in $\B$. The merge proceeds in rounds, with buffer nodes moving in cohesive groups, that are set of adjacent nodes at the same level in $\B$ that move and act together. These groups progressively insert themselves into the skip list $\C$, level by level, forming a ``wave'' of parallel insertions that sweeps from the top level down to the base level.

The merge phase consists of two main ingredients: (i)  A \emph{preprocessing phase}, executed on $\B$, and (ii) the \textsc{WAVE} merge protocol, executed on $\C$, in which the groups traverse and progressively integrate themselves into the skip list in parallel.

\paragraph*{Cohesive Groups} A \emph{cohesive group} $X =\{x_0,x_1,\dots,x_k\}\subseteq \B$ at some level $\ell$ is a set of contiguous nodes such that: (i) each node in $X$ has maximum height $\ell$, i.e., $\lmax{\B}{x_i} = \ell$, (ii) the group is sorted in increasing order.

Each group elects as a \emph{leader} the node with the smallest ID (i.e., value) in the group, which will coordinate the merge actions on behalf of the group. Led by the leader, the cohesive group traverses $\C$ just like a node being inserted into a skip list.  As the group moves through $\C$, it may split into subgroups because the cohesive group encountered a node $w$ in $\C$  with some elements of $X$ being smaller than $w$, while others are greater. Note that a split requires two rounds, one for the leader to inform all nodes in the new cohesive group who their leader is, and one for the nodes in the new cohesive group to ``introduce'' themselves to their new leader. We now observe a key structural property of how cohesive groups behave during traversal. 
\begin{proposition}[Cohesive Group Split Behavior]\label{proposition:m_cohev_behav}
	Let $X = \{x_0,x_1,\dots,x_k\}$ where $x_{i-1}\leq x_{i}$ for $i\in [1,k] $ be a set of nodes that are traversing the skip list $\C$ all together (i.e., as a cohesive group). And let $v\in X$ be a node that separates from $X$ at some time $t\geq 0$, then all $u\in X$ such that $u\geq v$ will separate from $X$ as well, and will go along with $v$.
\end{proposition}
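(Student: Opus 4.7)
The proposition is essentially an invariance statement saying that every split of a cohesive group is ``monotone'' with respect to the sorted order, i.e.\ each split partitions the group into a prefix and a suffix. My plan is to prove this by induction on the sequence of splits experienced by $X$ while traversing $\C$, with the inductive invariant being: \emph{at every moment, each cohesive subgroup of $X$ is a contiguous range $\{x_i,x_{i+1},\dots,x_j\}$ of the originally sorted sequence.}

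The base case is trivial since initially $X = \{x_0,\dots,x_k\}$ is itself a contiguous range, sorted by definition of a cohesive group. For the inductive step, I would analyze what actually triggers a split. A cohesive group led by its leader traverses $\C$ exactly like a standard skip-list search-and-insert operation at level $\ell$: at every step the leader compares with the current node $w \in \C$ and moves right while the key is smaller than $w$, and drops down otherwise. A split can therefore only occur at a node $w \in \C$ such that some members of the current cohesive subgroup have key $\le w$ while others have key $> w$; the members with key $\le w$ remain with the old leader (or descend), while those with key $> w$ continue to the right and form a new cohesive subgroup whose leader is the smallest element among them. Since the subgroup is (by the inductive hypothesis) a contiguous sorted range, the set $\{x \in X_{\text{sub}} : x > w\}$ is precisely a suffix of that range, and the complement is the corresponding prefix. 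Hence both resulting subgroups are again contiguous sorted ranges, preserving the invariant.

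To finish the proposition, suppose $v \in X$ separates from some cohesive subgroup at time $t$ at the splitting node $w \in \C$. By the analysis above, separation means $v$ lies in the suffix $\{x : x > w\}$ of the current subgroup. Now take any $u \in X$ with $u \ge v$. By the inductive invariant all nodes of $X$ that are still together with $v$ just before the split form a contiguous range containing $v$, and $u \ge v > w$, so $u$ also lies in the same suffix $\{x : x > w\}$. Therefore $u$ is placed in the same new cohesive subgroup as $v$, i.e.\ it separates together with $v$ and follows the same leader thereafter.

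The main obstacle I anticipate is being careful with the two-round mechanics of a split (leader broadcast, then members re-introducing themselves to the new leader): one must verify that during this two-round transition no further reordering can occur that would break contiguity, and that the new leader of a suffix is well-defined (it is simply the smallest element, which equals $v$ in the statement's notation). Once these minor implementation details are checked, the inductive argument above yields the claim.
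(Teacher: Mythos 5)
Your proof is correct and rests on the same core observation as the paper's: a split is triggered by comparison with a single node $w\in\C$ (the right neighbor $z$ in \textsc{TraversalStep}), so sortedness of the cohesive group forces every element $\geq v > w$ onto the same side of the split. The paper proves this as a two-line contradiction argument that leaves the splitting mechanism implicit, whereas you make it explicit via an induction maintaining contiguity of subgroups across successive splits; your version is more detailed but not a genuinely different route.
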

\begin{proof}
	For the sake of contradiction, let $X = \{x_0,x_1,\dots ,x_k\}$ where $x_{i-1}\leq x_{i}$ for $i\in [1,k] $ be a cohesive group that is traversing the skip list. Moreover, assume that $x_j$ separates from $X$ at some time $t$, and that there exists $x_\ell$ for $\ell \geq j$ that does not follow $x_j$, i.e., $x_\ell$ does not separate from $X$. This generates a contradiction on the assumption that the elements in $X$ are sorted in ascending order.     
\end{proof}
Proposition~\ref{proposition:m_cohev_behav} implies that when a cohesive group splits during its traversal of \( \mathcal{C} \), the split always occurs at a single position along the ordered group. As a result, the group divides into exactly two contiguous subgroups: a left sub-group that continues the current traversal, and a right sub-group that separates and begins its own traversal (under a new leader). Therefore, at any round, a cohesive group can split into \emph{at most two} new cohesive groups.
Hence, we have that the traversal behavior of a cohesive group is tightly structured: in each round, the group either moves forward together, or splits cleanly into two contiguous subgroups. This ensures that group behavior remains predictable and parallelizable, with only one new leader elected per split. In particular, since group members perform the same operations up to the point of divergence, the cost of inserting a group is not significantly worse than inserting a single element. We formalize this next.
\begin{lemma}[Cohesive Group Insertion Time Bound]\label{lemma:choesive_group_insertion}
	Let $X = \{x_0,x_1,\dots,x_k\}$ where $x_{i-1}\leq x_{i}$ for $i\in [1,k] $ be a set of cohesive nodes that are inserted all together at the same time and from the top left sentinel node in a skip list $\C$. Then, the time to insert the entire group $X$ is at most twice the time needed to insert just $x_k$.
	
\end{lemma}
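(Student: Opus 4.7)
Let $T(x_k)$ denote the solo insertion time of $x_k$ from the top-left sentinel of $\C$, with insertion path $P_k$. I would analyze the parallel execution of the group insertion as a tree of splits: the root is $X$ itself, and every internal node is a split event that partitions a cohesive sub-group into its left and right sub-groups. The total insertion time for $X$ equals the maximum, over all root-to-leaf branches, of the number of rounds accumulated along that branch. The plan is therefore to bound the completion time of every sub-group that arises during the execution.

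First I would track the sub-group $S$ that contains $x_k$ throughout the execution. At every step, $S$ sits at some position $u$ at some level $\ell$, and the next node of $\C$ at level $\ell$ is some $w$. By Proposition on cohesive group split behavior, at each split the elements $\ge v$ separate together, so $x_k$ always belongs to the sub-group moving right past $w$, while the elements smaller than $w$ drop down. Comparing against $x_k$'s solo traversal, $x_k$ alone moves right whenever $w < x_k$ and drops down whenever $w > x_k$; hence at every decision point $S$ takes exactly the same direction as $x_k$ alone, incurring only the $O(1)$ additional rounds per split for leadership reassignment and introductions described in the paragraph preceding the lemma. Thus the branch corresponding to $S$ has length $T(x_k)$ up to an additive term that is absorbed into the factor-$2$ slack (and is in any case at most $O(k)$).

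Next, I would bound the completion time of any sub-group $L$ that peels off from the main traversal at some step along $P_k$, say at position $u_L$, level $\ell_L$, triggered by a node $w \in \C$ with $\max(L) < w$. Denote by $t_L$ the number of rounds along $P_k$ until this split. After the split, $L$ must insert itself inside the sub-skip-list consisting of levels $\le \ell_L - 1$ and positions in $[u_L, w)$, starting from $(u_L, \ell_L - 1)$. By applying the lemma recursively on $L$ (which is a strictly smaller cohesive group, so the induction is well-founded), this sub-insertion takes at most $2 T_L'$ rounds, where $T_L'$ is the solo insertion time of $\max(L)$ from $(u_L, \ell_L - 1)$. Observe that the solo insertion path of $\max(L)$ from the top-left sentinel coincides with $P_k$ up to the split point (since $L$ has not yet diverged in $\max(L)$'s solo traversal either) and then descends from $(u_L, \ell_L - 1)$, so $T(\max(L)) = t_L + T_L'$.

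The main obstacle, and the step I expect to be most delicate, is to compare $T(\max(L))$ with $T(x_k)$. Since $\max(L) < x_k$ and the two paths agree until $(u_L, \ell_L)$, after which $x_k$'s path moves right to $w$ and then descends through every level $\ell_L, \ell_L-1, \dots, 0$ while $\max(L)$'s path descends inside $[u_L, w)$, a careful level-by-level monotonicity argument about search paths in skip lists should give $T(\max(L)) \le T(x_k)$. Combining this with the recursive bound yields, for any split-off branch $L$, a total length of at most $t_L + 2 T_L' = T(\max(L)) + T_L' \le 2 T(\max(L)) \le 2 T(x_k)$. Since the sub-groups operate in parallel, $T(X)$ equals the maximum branch length, which is at most $2 T(x_k)$, as claimed.
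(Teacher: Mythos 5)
Your proposal rests on the same two pillars as the paper's proof: (i) every element of the group follows its optimal solo search path, because group members make identical local decisions up to the point of divergence, and (ii) the parallel running time is the maximum over branches, which is then controlled by comparing each branch against the search path of the largest element. The organization differs: the paper inducts on $k$ by adjoining $x_k$ to $\{x_0,\dots,x_{k-1}\}$ and decomposing $x_k$'s path into a shared prefix with $x_{k-1}$ plus an optimal suffix, whereas you recurse over the split tree, following the subgroup containing $x_k$ along $x_k$'s solo path and handling each peeled-off subgroup $L$ by a recursive application of the lemma. Your version is more explicit about where the factor $2$ is actually spent (the $O(1)$ rounds of leader re-election per split, and the composition $t_L + 2T'_L \le 2T(\max(L))$), which the paper leaves implicit. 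The step you yourself flag as delicate --- $T(\max(L)) \le T(x_k)$ --- is precisely the step the paper also asserts without justification (``$x_k$'s path dominates $x_{k-1}$'s due to the group's ordering''), and as a deterministic claim it is false: both search paths must descend the same number of levels, but a smaller key can incur strictly more horizontal moves at the lower levels of the interval it drops into, so a smaller key can have a longer search path than a larger one. Both your argument and the paper's implicitly fall back on the w.h.p.\ guarantee that every search path in $\C$ has length $\Theta(\log n)$, which is all the downstream merge analysis actually uses; so your proposal is at the same level of rigor as the published proof, with the genuine gap being one you inherit from, rather than introduce beyond, the paper.
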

\begin{proof}
	We prove the lemma by induction on the number of nodes in the set $X$. As a base case we consider $X=\{x_0\}$. Since is the only node, its insertion follows its optimal search path from the top-left sentinel in $\C$. Therefore, the insertion time for $x_0$ is exactly the same wether inserted alone or as a part of a singleton group. For the inductive hypothesis, we assume that for some $k\geq 1$, the time to insert the group $X= \{x_0,x_1,\dots,x_{k-1}\}$ is at most twice the time needed to insert only $x_{k-1}$, and that each element in the group follows its optimal search path when inserted from the top-left sentinel.	For the inductive step we consider the insertion of $x_k$, along with the existing group $X=\{x_0,x_1,\dots,x_{k-1}\}$ (i.e., $X\cup\{x_k\}$). The key observation is that the search path of $x_k$ can be decomposed in two parts: (i) a shared prefix with $x_{k-1}$'s search path, and (ii) a unique suffix where $x_k$ diverges and continues alone towards its insertion point. Since the nodes move in parallel (cohesively) and are inserted from the same starting point, they make the same local decisions up to their divergence. By the inductive hypothesis, $x_{k-1}$'s search path is optimal, so the shared prefix must also be part of the optimal path for $x_k$. The remaining part (i.e., suffix) is also optimal by correctness of the skip list insertion algorithm~\cite{Pugh_1990}. Hence, the entire search path for $x_k$ is optimal. Since all insertions proceed in \emph{parallel}, the total insertion time is determined by the \emph{longest} search path among the group. By the inductive hypothesis, the insertion of $\{x_0,x_1,\dots,x_{k-1}\}$ takes at most two times the time to insert $x_{k-1}$ alone. Now we add $x_k>x_{k-1}$, whose path dominates $x_{k-1}$'s due to the group's ordering. Since both, shared and unique, parts of $x_k$'s path are optimal, the total time remains within twice the time to insert $x_k$ alone.
\end{proof}
This lemma ensures that even in the worst case, inserting a full group in parallel does not significantly increase the insertion time compared to a single node.
Next, we examine how they are positioned within the skip list, and how the skip list's structure ensures a clean separation between the levels. The following proposition captures two important structural properties of cohesive groups.
\begin{proposition}\label{proposition:cohesive_neighbors}
	Given a skip list $\B$, there is only one cohesive group at $\Lmax{\B}$. Moreover, for each $0\leq \ell <\Lmax{\B}$, let $X_\ell = \{x_0,x_1,\dots,x_k\}$ be a cohesive group at level $\ell$. Then, $X_\ell$'s left neighbor (i.e. $x_0$'s left neighbor at level $\ell$) is a node $v$ such that $\lmax{\B}{v}> \ell$.            
\end{proposition}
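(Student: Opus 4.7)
The plan is to first make explicit a convention that is used implicitly throughout Section~\ref{sec:merge}: a cohesive group at level $\ell$ is \emph{maximal}, in the sense that it consists of all the contiguous nodes at level $\ell$ whose maximum height equals $\ell$ (otherwise the second claim is trivially false, since one could pick a non-maximal singleton whose left neighbor at level $\ell$ also has maximum height $\ell$). With this reading fixed, both parts of the proposition reduce to short arguments from the definitions of $\lmax{\B}{\cdot}$ and $\Lmax{\B}$.

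For the first claim, I would argue as follows. Let $w$ be any node appearing at level $\Lmax{\B}$. By the definition of $\lmax{\B}{w}$ as the maximum level at which $w$ appears, we have $\lmax{\B}{w} \geq \Lmax{\B}$, and by the definition of $\Lmax{\B} = \max_{u\in \B}\lmax{\B}{u}$ we have $\lmax{\B}{w} \leq \Lmax{\B}$. Hence every node present at level $\Lmax{\B}$ has maximum height exactly $\Lmax{\B}$, and the entire sorted chain at level $\Lmax{\B}$ forms a single contiguous set satisfying the cohesive-group conditions. By maximality, this is the unique cohesive group at level $\Lmax{\B}$.

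For the second claim, let $X_\ell = \{x_0, x_1, \dots, x_k\}$ be a cohesive group at a level $\ell < \Lmax{\B}$ and let $v$ denote the left neighbor of $x_0$ at level $\ell$ in $\B$ (this neighbor exists because level $\ell$ contains at least the left sentinel). Since $v$ appears at level $\ell$, we have $\lmax{\B}{v} \geq \ell$. Suppose for contradiction that $\lmax{\B}{v} = \ell$. Then $v$ together with $X_\ell$ would form a strictly larger contiguous set at level $\ell$ all of whose nodes have maximum height $\ell$, still sorted in increasing order; this contradicts the maximality of $X_\ell$. Therefore $\lmax{\B}{v} > \ell$, as desired.

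The only real subtlety is the maximality convention; once that is pinned down, the argument is essentially a one-line consequence of the definitions, so I do not anticipate any genuine technical obstacle. I would present the maximality convention as a short preliminary remark preceding the proof so that the contradiction in the second part is unambiguous.
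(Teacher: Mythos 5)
Your proof is correct and follows essentially the same route as the paper: the second claim is argued identically (the left neighbor appears at level $\ell$, so its height is at least $\ell$, and equality would absorb it into the group), and your first claim rests on the same observation as the paper's contradiction argument, namely that no node of height exceeding $\Lmax{\B}$ can exist to separate two groups at the top level. Your explicit statement of the maximality convention is a reasonable clarification of something the paper leaves implicit (it is stated only later, in the preprocessing phase, that groups are \emph{maximal}), but it does not change the substance of the argument.
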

\begin{proof}
	Let $\ell$ be the maximum height of the skip list $\B$ and assume there are two disjoint cohesive groups $X_{\ell}$ and $Y_{\ell}$. This implies that $X_{\ell}$ and $Y_{\ell}$ are separated by one (or more) nodes $v$ such that $\lmax{\B}{v}>\ell$. Thus we have a contradiction on $\ell$ being $\B$'s maximum height. Next, let $0\leq \ell<\Lmax{\B}$, consider the cohesive group $X_{\ell}=\{x_0,x_1,\dots,x_k\}$ and its left neighbor $v$ at level $\ell$ (i.e., $v\in \neig{\B}{\texttt{left}}{\ell}{x_0}$). Then $v$ must be such that $\lmax{\B}{v}>\ell$. That is because, if $\lmax{\B}{v} = \ell$ then we would have $v\in X_\ell$ and if $\lmax{\B}{v} < \ell$, we would have $v\notin \neig{\B}{\texttt{left}}{\ell}{x_0}$.
\end{proof}
These properties are crucial, they imply that every cohesive group is ``encapsulated'' between nodes of higher height, and in particular, that each group has two natural \emph{parents} at the levels above. 

\paragraph*{Parent-Children Relationship in $\B$}

To coordinate the merge across levels, we define a parent-child relationship among nodes in $\B$. Given a node $u$ at level $\ell>0$, its \emph{children} at level $\ell'<\ell$ are defined as the maximal contiguous block of nodes adjacent to $u$ at level $\ell'$, whose heights are exactly $\ell'$, and such that all nodes between them (if any) have height at most $\ell'$. Formally, we define
\[
\childrenLevel{\ell'}{\texttt{left}}{u} =\{v : v<u \land \ell_{\max}(v) = \ell'\land \forall v<w<u , \ell_{\max}(w) \leq \ell' \}
\]
and
\[
 \childrenLevel{\ell'}{\texttt{right}}{u} =\{v : v>u \land \ell_{\max}(v) = \ell'\land \forall u<w<v , \ell_{\max}(w) \leq \ell' \}
\]
as the left and right children of $u$ at level $\ell'<\ell$ in $\B$, respectively. Moreover, we refer to $u$'s set of children at level $\ell'$ as the union of its left and right ones at level $\ell'$, i.e., $\ChildrenLevel{\ell'}{u} = \childrenLevel{\ell'}{\texttt{left}}{u} \cup \childrenLevel{\ell'}{\texttt{right}}{u}$ and to its overall set of children as the union of each $\ChildrenLevel{\ell'}{u}$ for all $\ell'<\ell$ , i.e., $\Children{u} =\bigcup_{\ell' = 0}^{\ell-1}\ChildrenLevel{\ell'}{u}$. Furthermore, we notice that two neighboring nodes $u$ and $w$ at level $\ell_{\max}^\B$ share a subset of their children.
Each node $u$ at level $\ell <\ell_{\max}^\B$ has \emph{two parents} at level $\ell'>\ell$ one to the left and one to the right. These parent nodes are responsible for notifying $u$ when it can safely initiate its own merge into $\C$ and when to update its pointers. 

This relationship forms the backbone of the merge dependency graph, ensuring that merge operations flow top-down through the skip list levels in a pipelined\footnote{By pipelined, we mean that they are executed one after the other.} manner. It also allows nodes that are not in the merge phase yet to track the progress of their parents and initiate their own actions only when dependencies are resolved.

Figure~\ref{fig:children_b} shows an example of the parent-children relationship in a skip list.

With this parent-children structure in place, the \textsc{WAVE} protocol guarantees that cohesive groups at lower levels only activate once their parents have successfully merged. This enables a clean pipelined traversal of $\C$ and maintains correctness and synchronization.

\begin{figure}[htb!]
	\captionsetup[subfigure]{justification=centering}
	\centering
	\begin{subfigure}{0.49\textwidth}
		\centering
		\includegraphics[scale=0.7]{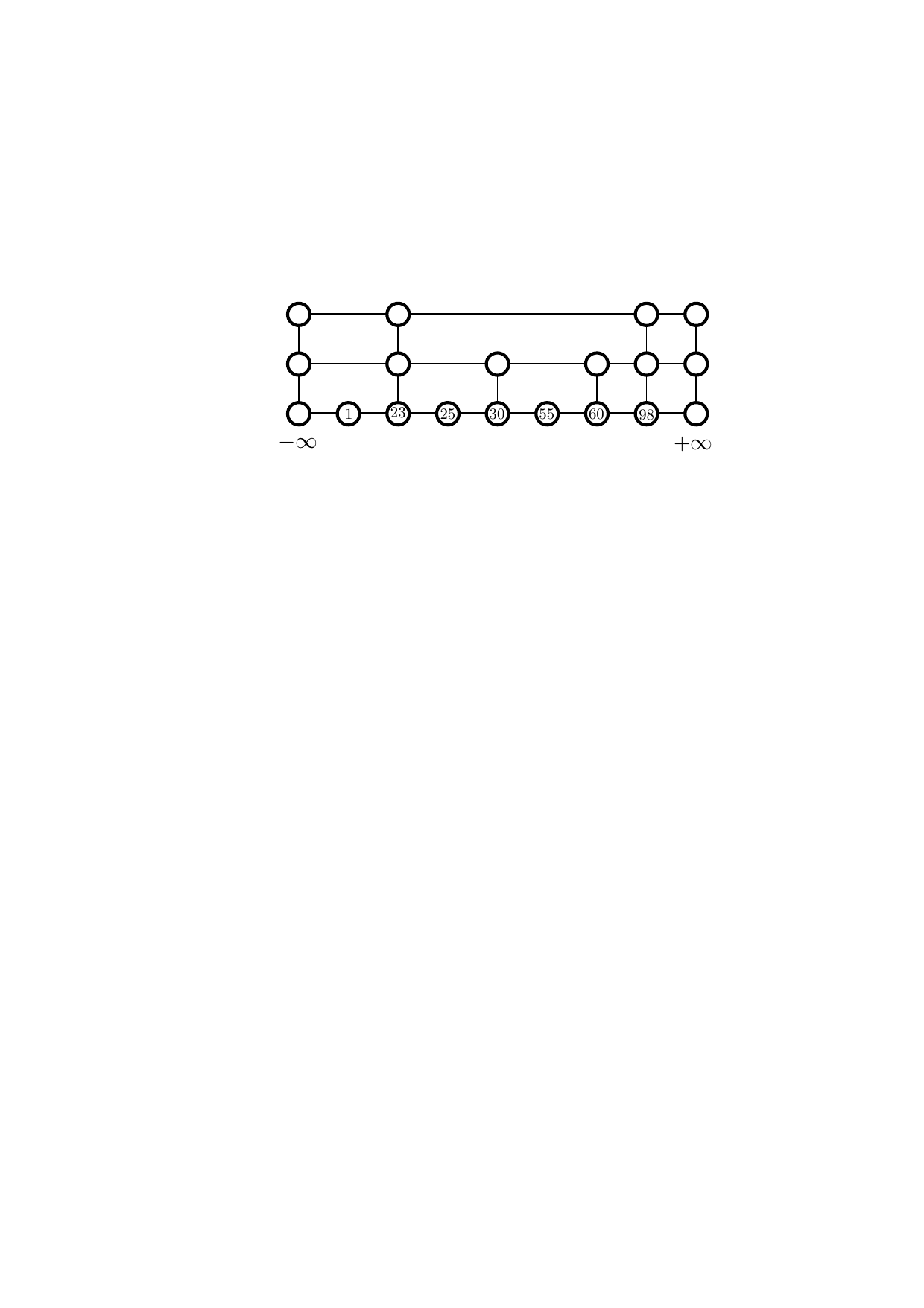}
		\caption{}\label{fig:children_a}
	\end{subfigure}
	\begin{subfigure}{0.49\textwidth}
		\centering
		\includegraphics[scale=0.7]{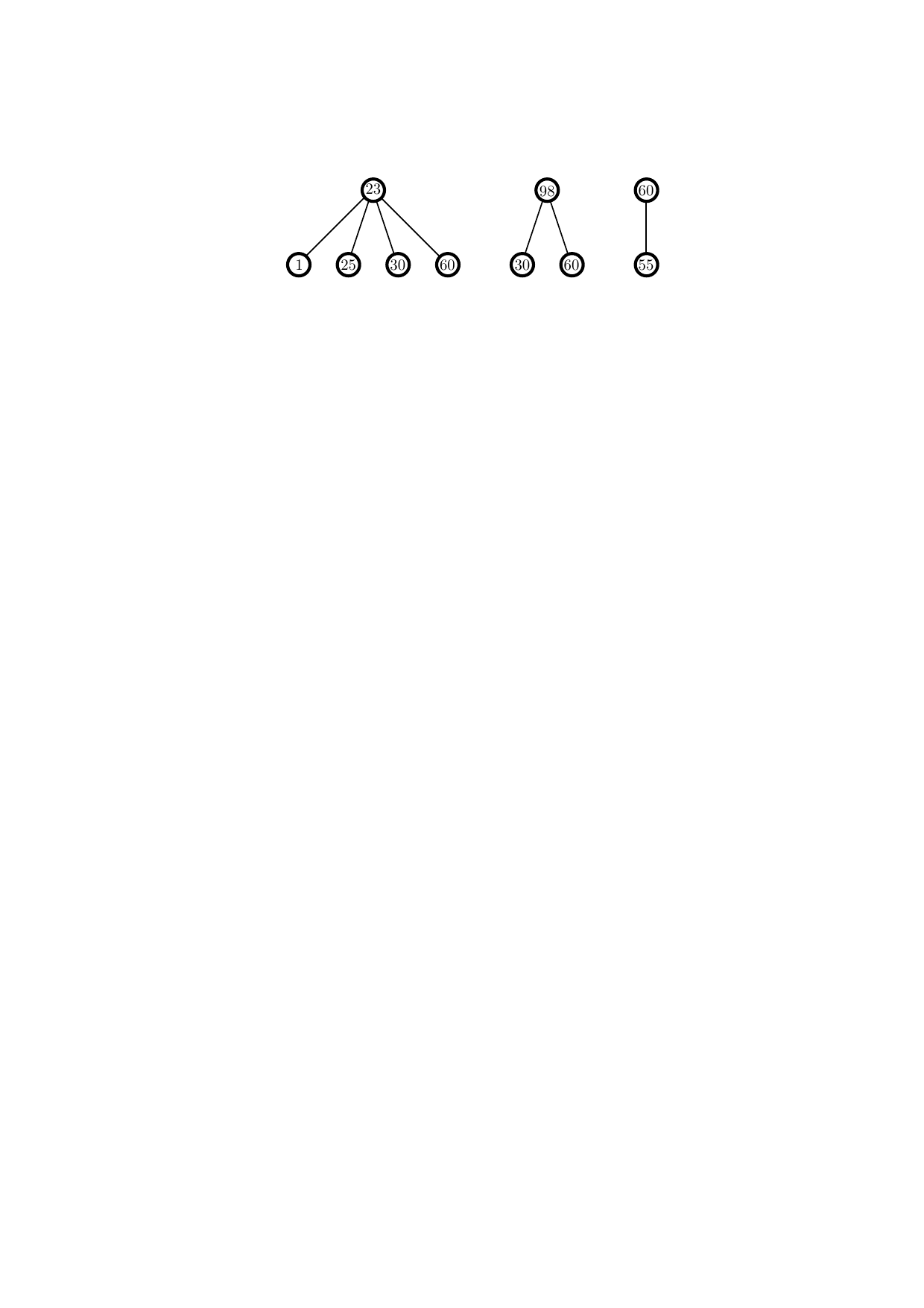}
		\caption{}\label{fig:children_b}
	\end{subfigure}
	\caption{Example of the parent-children relationship defined above. Figure~\ref{fig:children_a} is an example of a skip list. Figure~\ref{fig:children_b} shows the children set $\Gamma$ of nodes $23$, $98$, and $60$, respectively.}\label{fig:childrens}
\end{figure}

\paragraph*{The Preprocessing Phase in $\B$}

Before the \textsc{WAVE} protocol begins, we perform a \emph{preprocessing phase} in $\B$ to prepare each cohesive group for coordinated insertion. The phase includes:
\begin{description}
	\item[Group Identification.] At each level $\ell$, we identify all maximal cohesive groups. This can be done locally by the nodes as follows: let $X_\ell = \{x_0,x_1,x_2,\dots ,x_k\}$ be a set of consecutive nodes such that $\lmax{\B}{x} = \ell$ for each $x\in X_\ell$ and $0\leq \ell\leq \Lmax{\B}$. Then, each node $x_i\in X_\ell$ for $1\leq i\leq k$ sends its ID to $x_0$ through its left neighbor. Doing so will allow each $x_i$ to discover the ID of all the nodes $x_j\in X_\ell$ such that $x_j>x_i$ (see Figure~\ref{fig:preprocessing_buffer}).

	\begin{figure}[htb!]
		\captionsetup[subfigure]{justification=centering}
		\centering
		\begin{subfigure}{0.35\textwidth}
			\centering
			\includegraphics[scale=0.4]{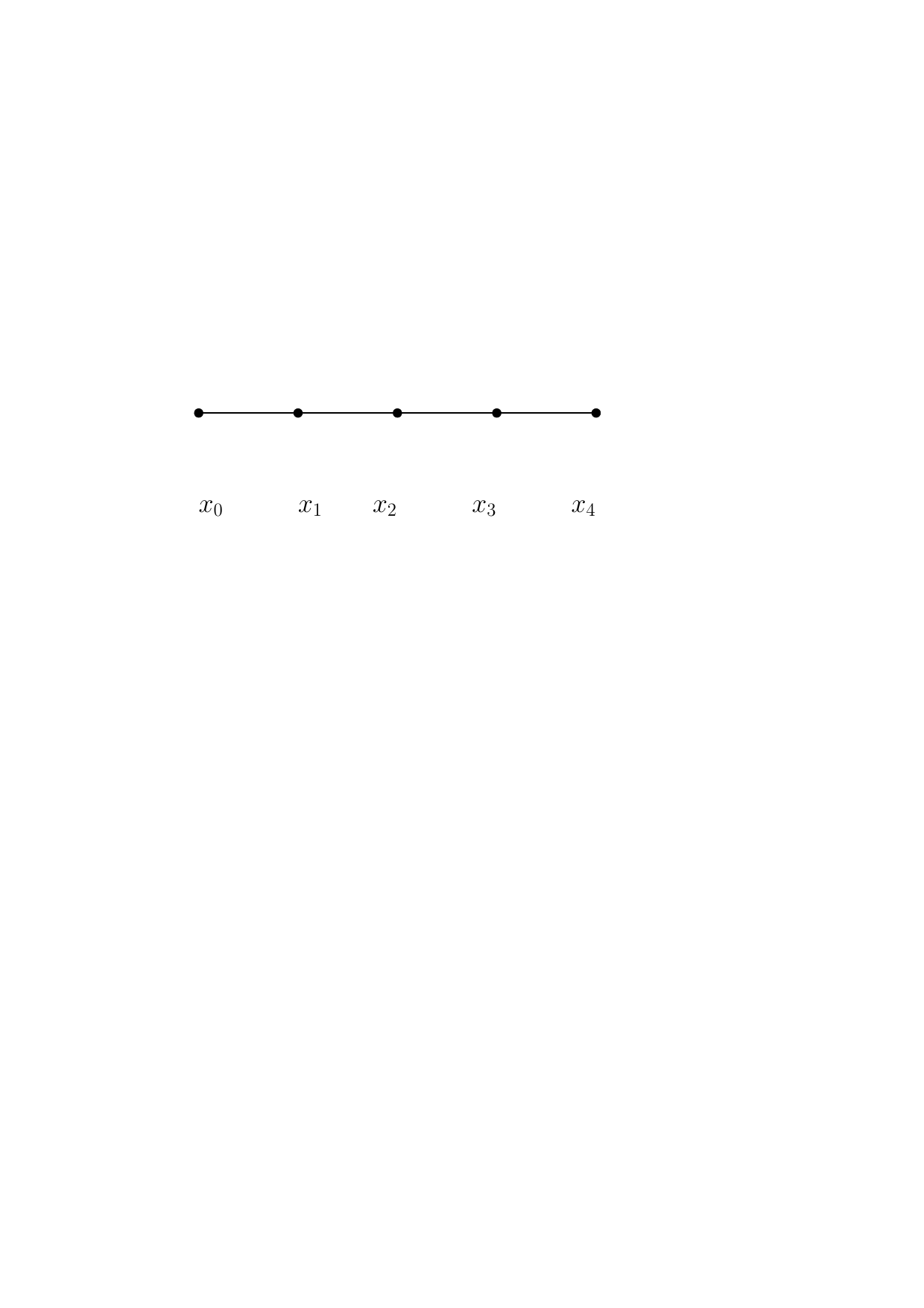}
			\caption{}\label{fig:preprocessing_a}
		\end{subfigure}
		\begin{subfigure}{0.3\textwidth}
			\centering
			\includegraphics[scale=0.4]{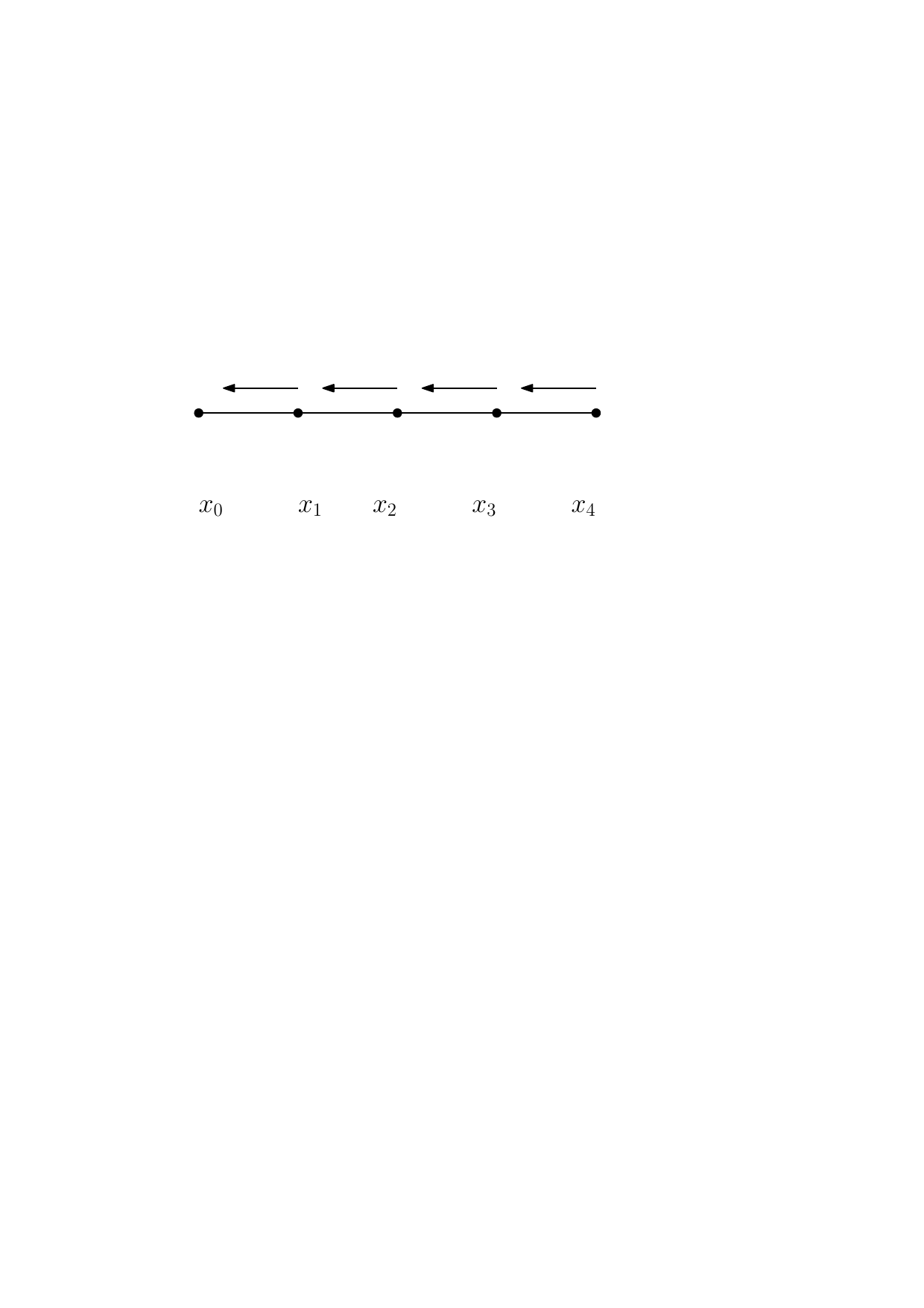}
			\caption{}\label{fig:preprocessing_c}
		\end{subfigure}
		\begin{subfigure}{0.3\textwidth}
			\centering
			\includegraphics[scale=0.4]{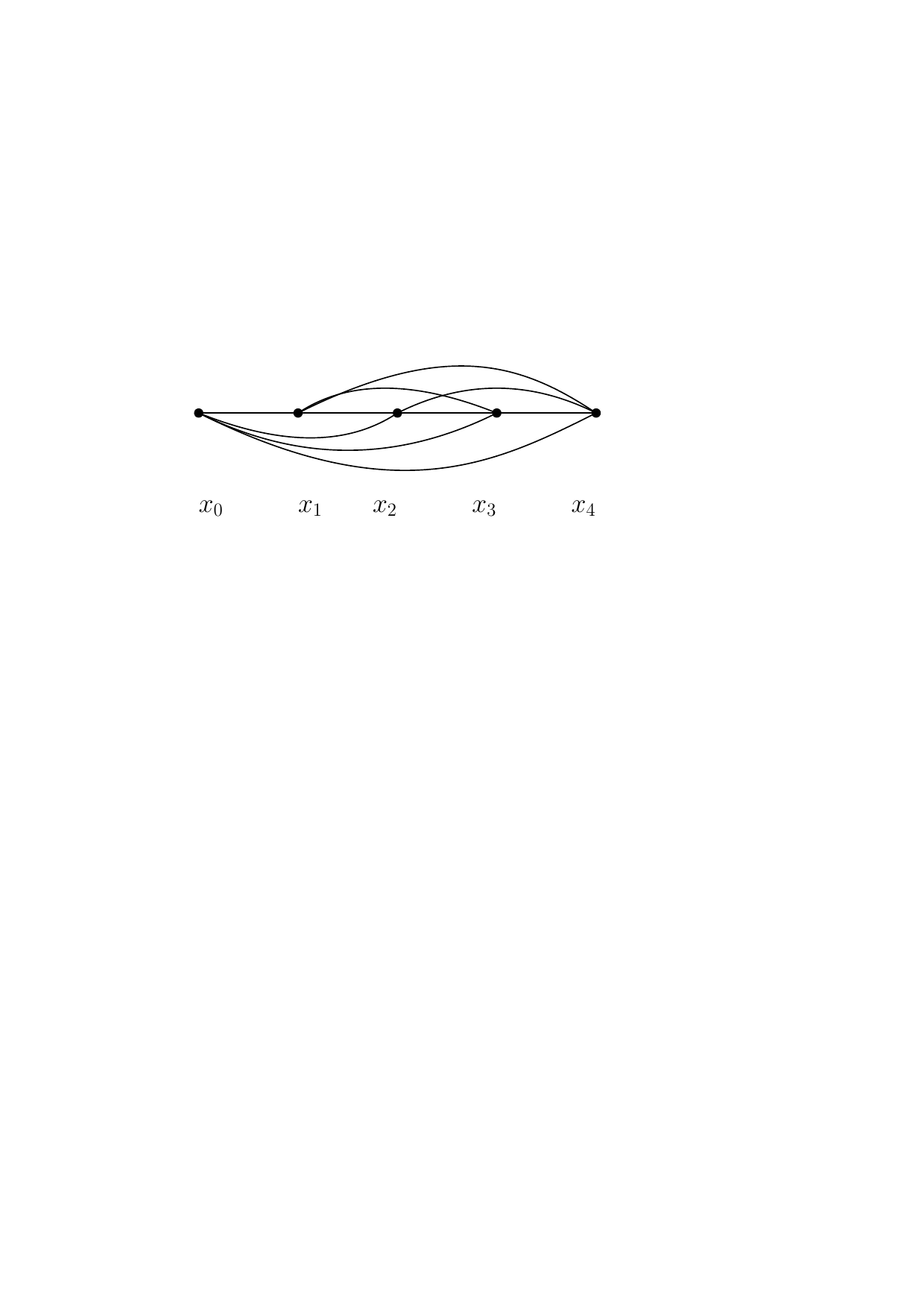}
			\caption{}\label{fig:preprocessing_b}
		\end{subfigure}
		\caption{Example of the preprocessing operation on a cohesive group of nodes. Figure~\ref{fig:preprocessing_a}, Figure~\ref{fig:preprocessing_c} and Figure~\ref{fig:preprocessing_b} show the cohesive group before, during, and after the preprocessing phase, respectively.}\label{fig:preprocessing_buffer}
	\end{figure}
	
	\item[Leader Election.] Within each group $X_\ell = \{x_0,\dots,x_k\}$, the leftmost node $x_0$ elects itself as the group leader, and broadcasts this information to the other group members. This operation takes $\Oo(1)$ rounds since distances among members in the cohesive groups have been shortened in the previous step.
	\item[Parent Discovery.] Each node identifies its two parents (left and right) at the level(s) above by examining their pointers in $\B$. These parent links define the merge dependency graph used in the \textsc{WAVE} protocol.
	\item[State Initialization.] All nodes initialize their local state as \emph{idle} except the leader on the \emph{single} cohesive group in $\B$'s top-level, which are immediately activated to begin the merge.
\end{description}
Each step above can be performed independently at every level of $\B$, and each node only participates in one group. Since the number communication is local within groups, the preprocessing phase needs $\Oo(\log n)$ rounds with high probability in the first step.
This follows from the fact that each cohesive group at some level $\ell$ contains at most $\Oo(\log n)$ nodes \whp~(see  Lemma~\ref{lemma:run_length} in Appendix~\ref{apx:skip_list}).
After this step each node within the same group will be at distance one from all the other nodes in the group. This implies that the remaining steps can be performed in constant number of rounds. This is a key property since it allows nodes to communicate with their children and the other nodes within their same group in $\Oo(1)$ rounds.

\begin{lemma}\label{lemma:preprocessing}
	Given a skip list of $n$ nodes, the preprocessing phase requires $\Oo(\log n)$ rounds \whp
\end{lemma}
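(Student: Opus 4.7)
The plan is to bound the runtime of each of the four preprocessing steps separately and then take a union bound over all levels of $\B$. Throughout, I rely on two standard skip-list facts: with high probability the height $\Lmax{\B}$ is $\Oo(\log n)$ (Lemma~\ref{lemma:exp_hight}), and with high probability every run of consecutive nodes sharing the same maximum level $\ell$ has length at most $\Oo(\log n)$ (Lemma~\ref{lemma:run_length}).

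For the Group Identification step, I would argue as follows. Fix a level $\ell$ and a cohesive group $X_\ell = \{x_0,\dots,x_k\}$. Each $x_i$ sends its ID leftward along the level-$\ell$ chain, and the message destined for $x_0$ travels at most $|X_\ell|-1$ hops. By the run-length lemma, $|X_\ell|\in\Oo(\log n)$ \whp, so each such message reaches $x_0$ in $\Oo(\log n)$ rounds. A symmetric broadcast from $x_0$ carrying the collected ID list (size $\Oo(\log n)$, hence a single $\Polylog{n}$-bit message) back rightward gives every group member the IDs of its peers in $\Oo(\log n)$ more rounds. Since each node belongs to exactly one cohesive group at exactly one level (its maximum level), the messages at different groups use disjoint edges at that level, so no node sends or receives more than $\Oo(1)$ messages per round from this step.

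For Leader Election, after Group Identification each member of $X_\ell$ knows all IDs in its group, and in particular knows $x_0$'s ID; the leftmost node simply declares itself leader and a single broadcast along the (now effectively diameter-$1$ logically, diameter-$\Oo(\log n)$ physically) group completes in $\Oo(\log n)$ additional rounds. If we route the ``leader'' announcement piggybacked on the rightward broadcast of the previous step, this is absorbed at no extra asymptotic cost. Parent Discovery is purely local: each node at level $\ell$ already maintains pointers to its neighbors at levels $\ell+1,\ldots,\lmax{\B}{\cdot}$ as part of the skip list data structure, so identifying the two parents of a cohesive group amounts to reading the left/right pointers of $x_0$ and $x_k$ at level $\ell+1$; this costs $\Oo(1)$ rounds. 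State Initialization is trivial and is a single local assignment per node, plus activating the unique top-level group (unique by Proposition~\ref{proposition:cohesive_neighbors}).

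The main obstacle is making the union bound clean across all levels and all groups. The number of levels is $\Oo(\log n)$ \whp, and at each level the number of cohesive groups is at most $n$, so there are at most $\Oo(n\log n)$ group-level pairs overall. Taking a union bound over the high-probability event that every such group has length $\Oo(\log n)$ (which holds with probability $1-n^{-\Omega(1)}$ from Lemma~\ref{lemma:run_length}, with constants chosen large enough), we conclude that every group completes its identification in $\Oo(\log n)$ rounds simultaneously. Since all levels execute in parallel, and since a node participates as a ``group member'' only at its maximum level, per-round message load at each node is $\Oo(1)$ along skip-list edges, trivially within the $\Polylog{n}$ budget. Summing the four steps yields a total round complexity of $\Oo(\log n)$ \whp, establishing the lemma.
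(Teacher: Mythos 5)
Your proof is correct and follows essentially the same route as the paper: the Group Identification step dominates and is bounded by the $\Oo(\log n)$ whp bound on cohesive-group (run) length from Lemma~\ref{lemma:run_length}, while the remaining three steps are constant-time or absorbed into the same traversal; the paper additionally notes that after identification each group member is at distance one from its peers, which is why it charges the later steps $\Oo(1)$ rounds, but your accounting (piggybacking or an extra $\Oo(\log n)$ sweep) reaches the same total. Your explicit union bound over groups and levels is a detail the paper leaves implicit, and it goes through since the run-length tail can be made $n^{-c}$ for any constant $c$.
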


With all cohesive groups identified, their leaders elected, and dependencies established through parent relationships, the system is now fully prepared to initiate the WAVE protocol for merging into the clean skip list.

\paragraph*{The \textsc{WAVE} Protocol}
Once the preprocessing completes, the merge proceeds via the \textsc{WAVE} protocol, a top-down, parallel traversal of the clean skip list $\C$ performed by cohesive groups in the buffer $\B$. The protocol resembles a wave-like propagation of merges: each level of $\B$ can become active only after their parents successfully started their merge phase. The \textsc{WAVE} protocol consist of three coordinated behaviors: active traversal, parallel insertion and virtual walking.

\begin{description}
	\item[Active Traversal.] At the beginning, only the unique top-level cohesive group in $\B$ is active. It initiates the merge by starting at the top-left sentinel of $\C$. The group leader coordinates a rightward traversal in $\C$ to spot the correct insertion point between two existing nodes in $\C$. 
	
	As the group moves, each step of the traversal is communicated downwards to its children in $\B$. Specifically, each node in the group broadcast its current position (e.g., the node it has just scanned in $\C$) to its child groups at the levels below. This communication is done in $\Oo(1)$ rounds because of the preprocessing phase and it provides the children with a ``virtual trace'' of the parent's path, allowing them to prepare for future activation as they were walking on $\C$ along with their parents.
	
	If the group encounters a node in $\C$ that triggers a split (e.g., due to key ordering), it divides into subgroups. Each new subgroup elects a new leader and continues the traversal independently.
	\item[Parallel Insertion.] Once a cohesive group reaches its insertion point at a given level in $\C$, the leader coordinates a local rewiring operation to insert the group between its predecessors and successors at that level in $\C$. This operation is executed in parallel by all the group members and takes $\Oo(1)$ rounds. After a successful merge at level $\ell$, the group leader notifies all its children in $\B$. These children are then potentially eligible to activate and begin their own traversal at the next level down.  
	\item[Virtual Walking.] Nodes in $\B$ that are not yet active remain in an \emph{idle} state. Each node continuously monitors its two parents at the levels above. As their parents perform their active traversal in $\C$, they broadcast their current search position downward, creating a \emph{trace} of the search path. 
	
	Once both parents complete their merge and communicate readiness, the idle node can change state to \emph{active} and begins its own merge process. 
	
	In addition, by observing these traces, an idle node can determine whether its own search path will likely diverge from those of its parents. If so, the node can safely activate early, using the most recent trace as a starting point. This ensures that the node's merge proceeds independently, without interfering with its parents’ ongoing insertion.
	
	This mechanism ensures that each level of $\B$ merges as soon as possible only after all dependencies have been satisfied, maintaining a clean, pipelined progression through the levels of the skip list.
\end{description}

The above description outlines the behavior of the \textsc{WAVE} protocol at a high level, highlighting its coordination mechanisms and activation rules. We now present a formal specification of the protocol, broken into two components: (1) Algorithm~\ref{algo:merge} that describes the behavior of active group leaders as they traverse and insert themselves into the clean skip list $\C$, and (2) Algorithm~\ref{algo:virtualwalk} that highlights the behavior of idle nodes during the virtual walking phase, where they observe parent traces and determine their activation point. Figure~\ref{fig:clean_buffer_before_merge} shows the Clean Network Buffer $\C$ before the merge phase (Figure~\ref{fig:wave_a}) and the Buffer Network $\B$ after the preprocessing phase (Figure~\ref{fig:wave_b}). While Figure~\ref{fig:wave_conceptual} provides a conceptual illustration of the \textsc{WAVE} protocol that merges $\B$ into $\C$. At each level, the advancing wave merges the corresponding nodes in $\B$ into $\C$, progressively building the integrated structure top-down. Furthermore, we refer to Appendix~\ref{apx:wave} for the visual representation of the step-by-step execution of the protocol on the networks in Figure~\ref{fig:clean_buffer_before_merge}.
\begin{figure}[htb!]
	\captionsetup[subfigure]{justification=centering}
	\centering
	\begin{subfigure}{0.4\textwidth}
		\centering
		\includegraphics[scale=0.7]{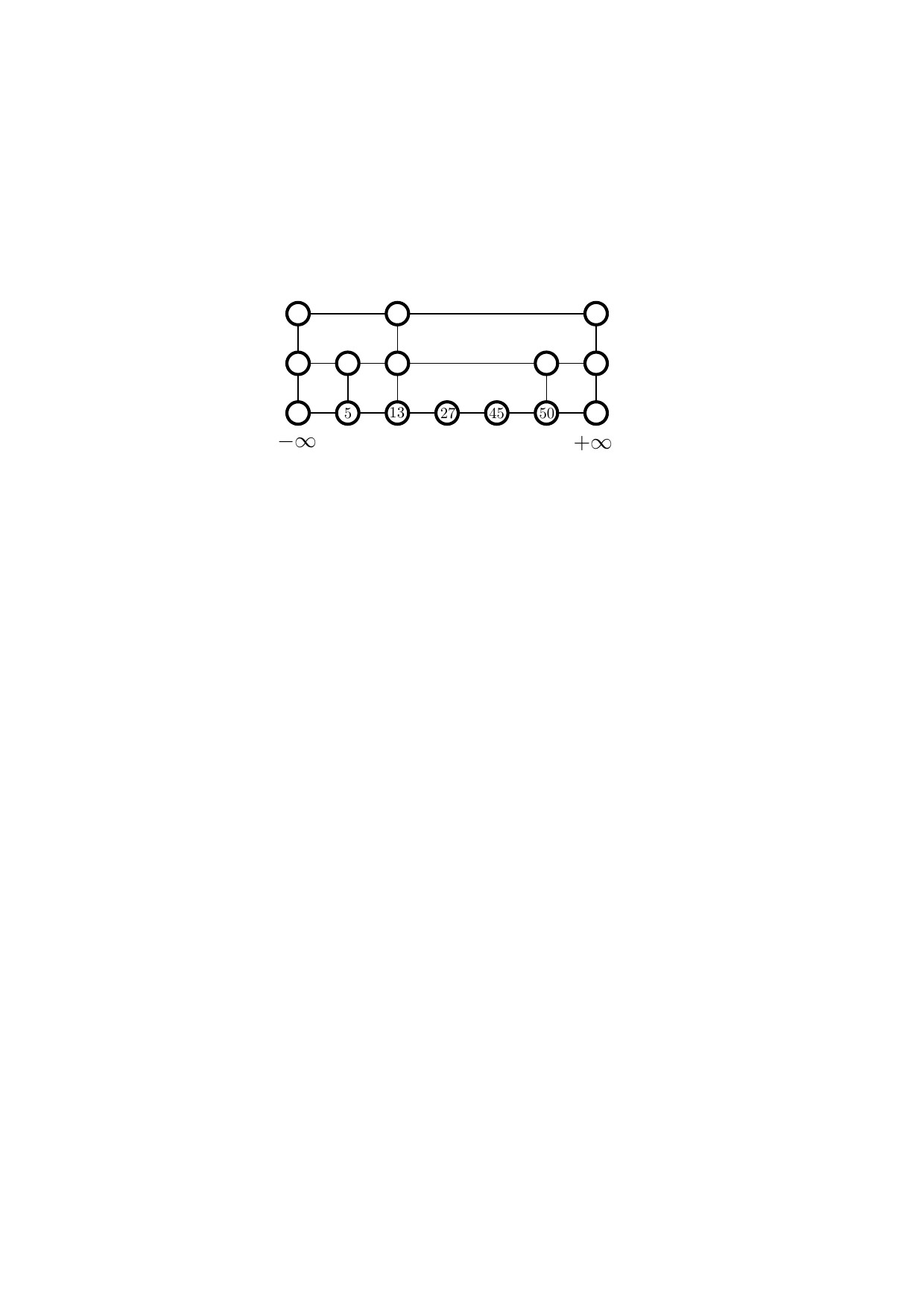}
		\caption{}\label{fig:wave_a}
	\end{subfigure}
	\begin{subfigure}{0.44\textwidth}
		\centering
		\includegraphics[scale=0.7]{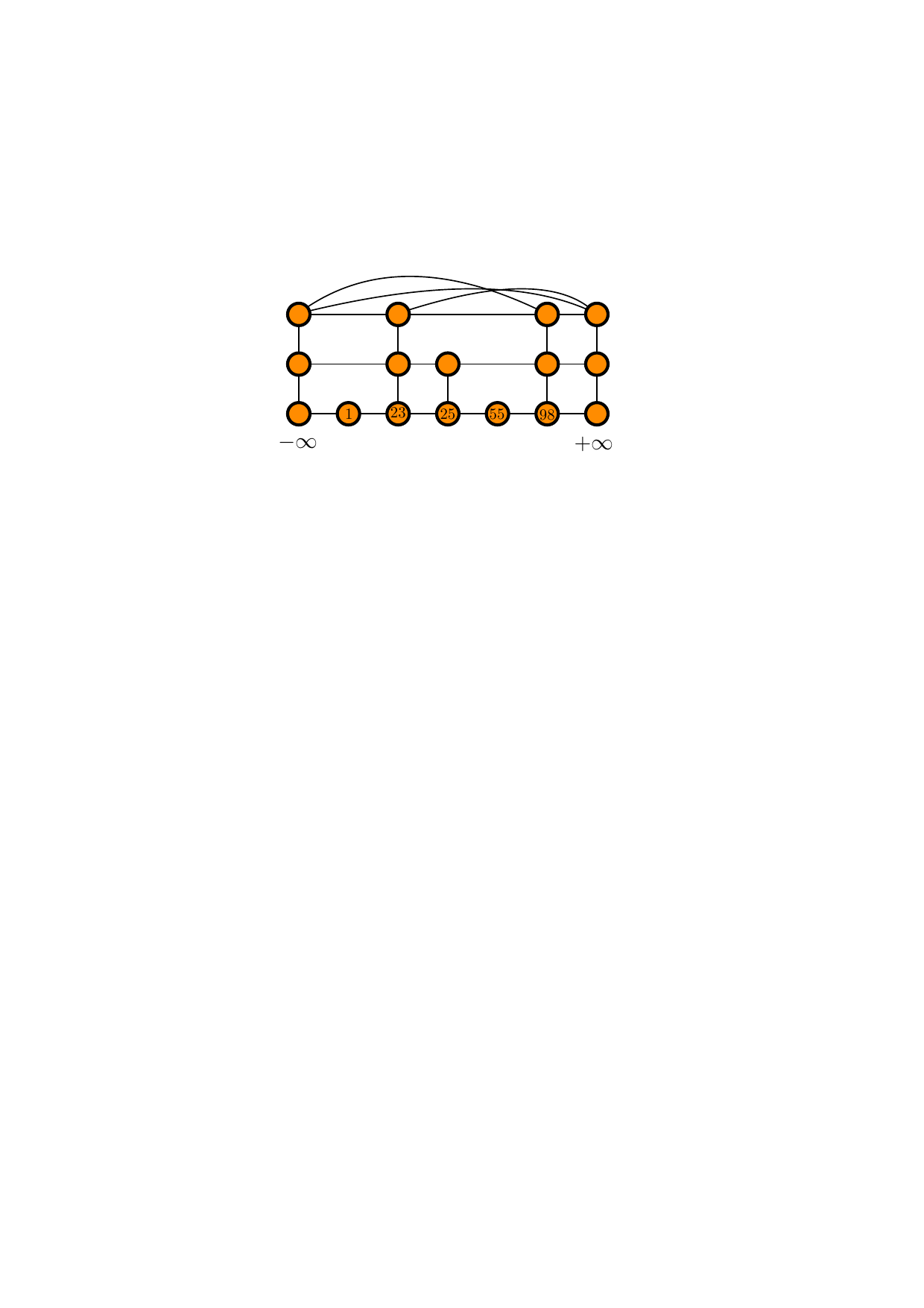}
		\caption{}\label{fig:wave_b}
	\end{subfigure}
	\caption{Figure~\ref{fig:wave_a} is the Clean Network $\C$ before the merge phase, while Figure~\ref{fig:wave_b} shows the Buffer Network $\B$ after the $\Oo(\log n)$ rounds preprocessing phase.}\label{fig:clean_buffer_before_merge}
\end{figure}

\begin{figure}[htb!]

		\centering
		\includegraphics[scale=0.7]{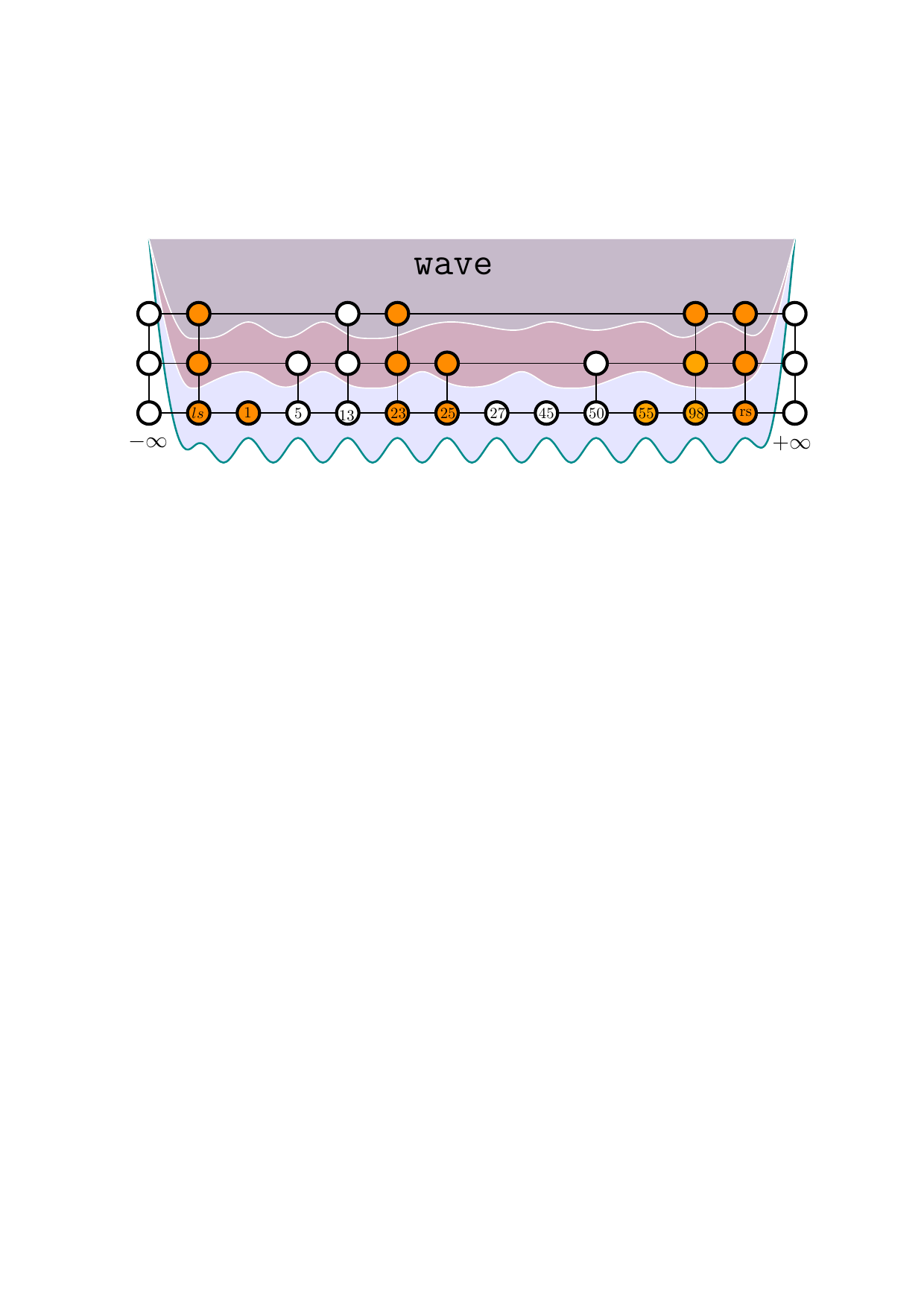}
	
	\caption{Conceptual representation of the {\sc WAVE} protocol. Each wave corresponds to a phase in which cohesive groups at a given level of $\B$ are merged into the clean skip list $\C$, proceeding top-down.}\label{fig:wave_conceptual}
\end{figure}

\subparagraph*{Detailed description of the \textsc{WAVE} Protocol.} In the first round of the merge protocol, the \emph{unique} cohesive group $X_{\Lmax{\B}}=\{x_0,x_1,\dots,x_k\}$ with $x_0=-\infty$ and $x_k = +\infty$ in $\B$'s topmost level changes its state to \texttt{merge} and starts its merging phase in $\C$ from the left-topmost sentry. For the sake of simplicity, we consider $\B$'s left and right sentries as nodes of the buffer network that must be merged with $\C$\footnote{This does not affect the merge outcome. Moreover, when $\B$ is completely merged with $\C$, we can remove $\B$'s sentinels from $\C$ in $\Oo(1)$ rounds.}. We assume that such sentinels are contained in $\C$'s ones. Moreover, at each round, every cohesive group $X\subseteq \B$ on a node $v\in\C$ at level $\ell$ that is in the \texttt{merge} performs a merge phase step that is composed by a Traversal Step in which the cohesive group moves on $\C$ and (if needed) a Merge Step in which $X$ merges in the current level $\ell$ in the clean network.

\begin{algorithm}[htb!]
	\caption{\textsc{Overview of the Merge Phase} for Group $X$}
	\label{algo:merge_loop}
	\KwIn{Cohesive group $X$ in \texttt{merge} state, leader $x_0$}
	\While{$X$ has not merged at level $0$}{
		Let $\ell \leftarrow$ current level of $X$\;
		\textsc{TraversalStep}($x_0$, $X$, $\ell$)\tcp*{$\Oo(1)$ Rounds.}
		\textsc{MergeStep}($X$, $\ell$)\tcp*{$\Oo(1)$ Rounds.}
		Update $X$'s current level to $\ell - 1$\;
	}
	Set state of all $u \in X$ to \texttt{done}\;
\end{algorithm}
Algorithm~\ref{algo:merge_loop} gives an overview of the step executed by a cohesive group in the merge phase. The group, until merging at level $0$, performs a traversal step followed by a potential merge step at the current level. Each of these steps requires $\Oo(1)$ number of rounds. Algorithm~\ref{algo:traversal} and Algorithm~\ref{algo:merge} provide a formal description of the steps performed by the nodes while merging. A key idea of the protocol is that every time a node in the \texttt{merge} state performs one move on $\C$, it communicates its progress to all its children. Notice, that because of the preprocessing phase parents can communicate with their children in constant number of rounds. Nodes, that are \texttt{idle} in $\B$ receiving updates from their parents can use this information to ``virtually walk'' on the clean network along their parents. Indeed, a node $v$, in the buffer network $\B$ in \texttt{idle} state is continuously listening to its parents. $v$'s behavior changes according to the type of message it receives. More precisely, let us assume that $v$ receives a message of the type $m(u)= \langle L,R,\texttt{move},\texttt{level} \rangle$ from its parent $u$: If $v$ is \texttt{idle} and discovers that both its parents successfully merged in $\C$ in level $\ell= \Lmax{v}$, then it must change state to \texttt{merge} and proceed with its own merge phase in $\C$ starting from $v.\texttt{starting\_point}$ at level $v.\texttt{level\_to\_start}$. Algorithm~\ref{algo:virtualwalk} describes the actions performed by the \texttt{idle} nodes in the buffer network upon receiving messages from their parents.

Before analyzing the merge runtime, we first observe that idle nodes only update their state using information from their parents. This ensures the correctness of their updated search path:

\begin{lemma}
	Let $v$ be an idle node in the buffer skip list $\mathcal{B}$ at level $\ell$. Throughout the execution of the \textsc{WAVE} protocol, every update to $v.\texttt{starting\_point}$ results in a node that lies on the optimal search path for $v$ in the augmented skip list $\mathcal{C}' = \mathcal{C} \cup \mathcal{B}_{>\ell}$. Where $\C' = \C\cup \B_{>\ell}$ is obtained by merging $\C$ and the elements of $\B$ situated in the level above $\ell$. 
\end{lemma}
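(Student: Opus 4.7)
The plan is to prove the statement by induction on the sequence of updates that $v.\texttt{starting\_point}$ undergoes during the execution of the \textsc{WAVE} protocol. The base case is immediate: before receiving any message, $v.\texttt{starting\_point}$ is initialized to the top-left sentinel of $\C'$, which by definition lies on the optimal search path of every node in the skip list. For the inductive step, we need to argue that if $v.\texttt{starting\_point}$ is currently on $v$'s optimal search path in $\C'$, then any subsequent update triggered by a message from a parent $u$ replaces it with another node on the same search path.

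The first key step is to establish the following structural claim: the optimal search path for $v$ in $\C'=\C\cup\B_{>\ell}$ necessarily passes through $v$'s two parents (left and right) at every level $\ell'>\ell$ at which those parents exist in $\B$. This follows from Proposition~\ref{proposition:cohesive_neighbors} together with the definition of the parent-child relation in $\B$: the left parent $p_L$ and right parent $p_R$ are precisely the closest nodes enclosing $v$'s key in $\B$ at every level above $\ell$, so the standard skip-list search from the top-left sentinel for $v$'s key in $\C'$ must descend through $p_L$ (and analogously touch $p_R$) before dropping below level $\ell+1$.

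The second step is to tie the parent's active traversal in $\C$ to the search path of $v$ in $\C'$. When $p_L$ (say) is in the \texttt{merge} state and currently sits on a node $w\in\C$ at level $\ell'>\ell$, the invariant (guaranteed by Lemma~\ref{lemma:choesive_group_insertion} applied to $p_L$'s own cohesive group and by the correctness of skip-list search~\cite{Pugh_1990}) is that $w$ lies on the optimal search path for $p_L$'s key in $\C$. Since $v$'s key lies just to the right of $p_L$'s key in $\B$, and since all nodes between them at level $\ell'$ belong to $\C$ (because no node of $\B$ at level $\ge\ell'$ separates them, by the definition of parenthood), the search paths for $p_L$ and $v$ in $\C'$ coincide up to the point at which the algorithm descends below level $\ell$. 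In particular, $w$ lies on $v$'s search path in $\C'$ as well. An analogous argument covers updates driven by $p_R$ and the case where $v$ activates early after detecting divergence: the last received trace from its parents is, by the same reasoning, a node on $v$'s search path up to the divergence level.

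The main obstacle, in my view, will be handling the interaction between \emph{two} concurrently active parents. A careful case analysis is needed to verify that whichever parent delivers the latest message, the induced update is still consistent with a single well-defined search path for $v$ in $\C'$ (as opposed to an inconsistent mix of the two). Proposition~\ref{proposition:m_cohev_behav} together with the observation that $p_L$ and $p_R$ share a common prefix of their search paths (they coincide until the first level at which the algorithm must choose strictly between $p_L$'s key and $p_R$'s key, which happens strictly below $\ell+1$ because they enclose $v$ at every level above $\ell$) should be sufficient to rule out inconsistency and close the induction.
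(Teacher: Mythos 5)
Your proposal follows essentially the same route as the paper: its (three-sentence) proof likewise argues that every update originates from a parent residing at a level strictly above $\ell$, that the broadcast positions lie on the parents' own optimal search paths while merging into $\C$, and that $v$'s optimal path in $\C' = \C\cup\B_{>\ell}$ shares this upper-level prefix, so any updated \texttt{starting\_point} is a valid prefix node of $v$'s path. The extra structure you introduce---the induction on the update sequence, the claim that $v$'s path passes through both parents at every higher level, and the two-parent consistency analysis you flag as the main obstacle---is elaboration the paper omits entirely, so be aware that the paper's proof does not resolve that obstacle either.
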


\begin{proof}
	Each idle node $v$ receives traversal updates exclusively from its two parents in $\B$, which reside at levels strictly greater than $\ell$. These messages reflect nodes visited during the parents' optimal search paths while merging into $\C$. Since $v$'s own optimal search path in $\C'$ would also pass through the same upper-level nodes before descending, any such updated $\texttt{starting\_point}$ lies on a valid prefix of $v$'s optimal path. Thus, the virtual walking behavior maintains correctness.
\end{proof}
\begin{algorithm}[htb!]
	\caption{\textsc{TraversalStep} for Group Leader $x_0$}
	\label{algo:traversal}
	\KwIn{Cohesive group $X$ at level $\ell$ in $\mathcal{B}$, leader $x_0$}
	\KwOut{Traversal continuation or split + merge at level $\ell$}
	
	$v \leftarrow$ current position of $x_0$ in $\mathcal{C}$\;
	$z \leftarrow$ right neighbor of $v$ at level $\ell$ in $\mathcal{C}$\;
	Broadcast $z$ to all followers in $X$\;
	
	$X_{\texttt{right}} \leftarrow \{ u \in X \mid u > z \}$\;
	$X_{\texttt{left}} \leftarrow X \setminus X_{\texttt{right}}$\;
	
	\If{$X_{\texttt{right}} \neq \emptyset$}{
		Elect new leader $x'_0 \leftarrow \min(X_{\texttt{right}})$\;
		\ForEach{$u \in X_{\texttt{right}}$}{
			Send message $\langle v, z, \text{``right''}, \ell \rangle$ to $\Gamma(u)$\tcp*{$\Oo(1)$ rounds.}
		}
		\textsc{TraversalStep}($x'_0$, $X_{\texttt{right}}$, $\ell$)\;
	}
	
	\If{$X_{\texttt{left}} \neq \emptyset$}{
		\ForEach{$u \in X_{\texttt{left}}$}{
			Send message $\langle v, z, \text{``down''}, \ell \rangle$ to $\Gamma(u)$      \tcp*{$\Oo(1)$ rounds.}

		}
		\textsc{MergeStep}($X_{\texttt{left}}, \ell$)\;
	}
\end{algorithm}

\begin{algorithm}[htb!]
	\caption{\textsc{MergeStep} for Group $X$}
	\label{algo:merge}
	\KwIn{Group $X$ at level $\ell$ in $\mathcal{B}$}
	\KwOut{Group merged at level $\ell$ and notified children}
	
	Insert $X$ between $v$ and $z$ in $\mathcal{C}$ at level $\ell$\;
	\ForEach{$u \in X$}{
		$y \leftarrow$ right neighbor of $u$ in newly merged segment\;
		Send message $\langle u, y, \text{``down''}, \ell \rangle$ to $\Gamma(u)$\tcp*{$\Oo(1)$ rounds.}
		Notify $\Gamma(u)$ of successful merge\tcp*{$\Oo(1)$ rounds.}
	}
	
	\uIf{$\ell > 0$}{
		\If{$X$ has neighbors in $\mathcal{B}$ at level $\ell$}{
			Wait for neighbors to be merged at level $\ell - 1$\;
		}
		\textsc{TraversalStep}(\texttt{leader}($X$), $X$, $\ell - 1$)\;
	}
	\Else{
		Set state of all $u \in X$ to \texttt{done}\;
	}
\end{algorithm}

\begin{algorithm}[htb!]
	\caption{\textsc{VirtualWalk} for Idle Node $v$}
	\label{algo:virtualwalk}
	\KwIn{Messages $m(u)$ from parent(s)}
	\KwOut{Node state update and possible activation}
	
	\ForEach{received $m(u) = \langle L, R, \texttt{move}, \texttt{level} \rangle$}{
		Update $v.\texttt{starting\_point}$ and $v.\texttt{level\_to\_start}$ based on $L$, $R$\;
		Update parent locations depending on move\;
	}
	
	\If{$v$ is independent from both parents}{
		\uIf{left neighbor $z$ shares same starting point}{
			Set $v$ as follower\;
		}
		\Else{
			Set $v$ as leader\;
			Form group $X$ with $v$ and right neighbors sharing starting point\;
			\textsc{TraversalStep}($v$, $X$, $v.\texttt{level\_to\_start}$)\;
		}
	}
	
	\If{both parents merged at level $\ell = \Lmax{v}$}{
		Set $v$ to \texttt{merge}\;
		Proceed with group formation and merging as above\;
	}
\end{algorithm}

\paragraph*{Analysis of the \textsc{WAVE} Protocol}

A key aspect of the protocol is that a cohesive group $X$, after being successfully merged at level $\ell$, may need to wait before proceeding to level $\ell-1$. This delay occurs if one of $X$'s neighbors in $\C$ at level $\ell$ is a node from the buffer $\B$ (i.e., one of its parents) that has not yet completed its merge at level $\ell-1$. The following lemma provides an upper bound on the time required for such a group $X$ to fully merge into $\C$, accounting for possible dependencies on its parents at each level.

\begin{lemma}\label{lemma:merge_generalized}
	Let $X$ be a cohesive group at level $0\leq \ell\leq \Lmax{\B}$ in $\B$. The merge time of $X$ in $\C$ takes $\Oo(\log n+\Lmax{\B}-\ell)$ rounds \whp
\end{lemma}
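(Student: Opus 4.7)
The plan is to bound the merge time of $X$ by decomposing it into a waiting time $T_{\text{wait}}$ (from round zero until $X$ changes state to \texttt{merge}) and an active merge time $T_{\text{active}}$ (from activation until the group finishes merging at level $0$). I will show separately that $T_{\text{wait}}\in \Oo(\Lmax{\B}-\ell)$ and $T_{\text{active}}\in \Oo(\log n)$ \whp, which together give the claimed bound.

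For $T_{\text{active}}$, once $X$ has activated, its leader drives a top-down traversal from some $\texttt{starting\_point}$ at some $\texttt{level\_to\_start}\le \ell$ down to level $0$. By the lemma preceding Algorithm~\ref{algo:merge_loop}, this starting point lies on the optimal search path of $X$ in $\C'=\C\cup\B_{>\ell}$, so the traversal behaves like a standard skip list insertion seeded on its optimal path. Using Lemma~\ref{lemma:choesive_group_insertion}, the cost of moving the whole cohesive group is within a factor of two of the cost of inserting its largest element alone. Since each call to \textsc{TraversalStep} and \textsc{MergeStep} executes in $\Oo(1)$ rounds (this is where preprocessing pays off: leaders can broadcast to followers and children in constant time), and since the total number of $\C$-nodes scanned along an optimal search path in a skip list on $\Oo(n)$ elements is $\Oo(\log n)$ \whp\ (Appendix~\ref{apx:skip_list}), the active phase completes in $\Oo(\log n)$ rounds \whp.

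For $T_{\text{wait}}$, I would proceed by induction on $\Lmax{\B}-\ell$. The base case $\ell=\Lmax{\B}$ is handled by Proposition~\ref{proposition:cohesive_neighbors}: the unique top-level group starts immediately, so $T_{\text{wait}}=0$. For the inductive step, consider a cohesive group $X$ at level $\ell<\Lmax{\B}$ with its two parents $p_L,p_R$ at levels $>\ell$ in $\B$. By Algorithm~\ref{algo:virtualwalk}, every node in $X$ keeps its $\texttt{starting\_point}$ synchronized with the parents' latest trace messages, and activates as soon as both parents have merged at level $\ell$ or as soon as virtual walking certifies that $X$'s search path has diverged from its parents'. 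Each parent advances through $\C$ one level per $\Oo(1)$ rounds (by the argument for $T_{\text{active}}$), and parent-to-child messaging also costs $\Oo(1)$ rounds thanks to preprocessing, so the activation wave descends level by level with $\Oo(1)$ slack per level. Applying the inductive bound to $p_L$ and $p_R$ (which are at levels $\ge \ell+1$) yields $T_{\text{wait}}\in \Oo(\Lmax{\B}-\ell)$.

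The main obstacle I anticipate is the virtual-walking argument: I need to rule out the pathological case where an idle node at level $\ell$ cannot activate in constant time after its parents' progress, for instance because its search path repeatedly coincides with a parent's path and then diverges at a late level. Here I would exploit the invariant that the trace received from a parent is always a prefix of $X$'s own optimal path in $\C'$, so the ``wait then split and activate'' rule in Algorithm~\ref{algo:virtualwalk} cleanly charges each additional round to a new level of $\C$, preventing any accumulated slowdown beyond $\Oo(\Lmax{\B}-\ell)$. A further subtlety is that a cohesive group may split en route; by Proposition~\ref{proposition:m_cohev_behav} each split produces exactly two contiguous subgroups, each inheriting the same active-time bound, so splits do not inflate $T_{\text{active}}$ asymptotically. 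Combining the two bounds and taking a union bound over the high-probability events (skip list height $\Oo(\log n)$, cohesive group length $\Oo(\log n)$, and search path length $\Oo(\log n)$) yields the claimed $\Oo(\log n + \Lmax{\B}-\ell)$ bound \whp.
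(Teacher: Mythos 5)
Your overall architecture (pipeline of parent dependencies with $\Oo(1)$ slack per level, plus $\Oo(\log n)$ for an actual skip-list traversal, plus the split-handling via Proposition~\ref{proposition:m_cohev_behav}) is the same as the paper's, and your final bound is right, but the specific decomposition you chose has a genuine flaw: the claim $T_{\text{wait}}\in\Oo(\Lmax{\B}-\ell)$ is false. It rests on the assertion that ``each parent advances through $\C$ one level per $\Oo(1)$ rounds,'' which does not hold: during its active traversal a group descends a level only after finishing its horizontal scan at the current level, and the number of horizontal moves at a \emph{single} level can be $\Theta(\log n)$ \whp\ (only the total across all levels is $\Oo(\log n)$). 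Concretely, take a group $X$ at level $\Lmax{\B}-1$ whose parents belong to the unique top-level group; if that top-level group spends $\Theta(\log n)$ rounds moving right at level $\Lmax{\B}$ before merging there and descending, then $X$ cannot activate before time $\Theta(\log n)$, contradicting your claimed $T_{\text{wait}}=\Oo(1)$ for that case. So the inductive step ``the activation wave descends level by level with $\Oo(1)$ slack per level'' conflates two different quantities: the pipeline slack per \emph{ancestor link} (which is indeed $\Oo(1)$) and the ancestors' own traversal time per \emph{level of $\C$} (which is not).

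The paper avoids this by telescoping along the ancestor chain rather than splitting into waiting and active phases: writing $T_X^{\ell'}$ for the time at which $X$ merges at level $\ell'$, full dependence on a parent $z$ gives $T_X^{\ell'}\le T_z^{\ell'}+\Oo(1)$, and iterating up to the topmost ancestor $y$ (the unique group at $\Lmax{\B}$, whose total completion time is $\Oo(\log n)$ \whp) yields $T_X^{0}\le T_y^{0}+\Oo(\Lmax{\B}-\ell)$. In other words, the $\Oo(\log n)$ cost of actually walking through $\C$ is charged once, to the root of the dependency chain, and each of the at most $\Lmax{\B}-\ell$ links contributes only an additive $\Oo(1)$; the independent and partially-dependent cases then reduce to a plain $\Oo(\log n)$ insertion by Lemma~\ref{lemma:choesive_group_insertion}. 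Your proof becomes correct if you replace the $T_{\text{wait}}/T_{\text{active}}$ split with this telescoping (or, equivalently, weaken your inductive hypothesis to bound the \emph{total} completion time of a level-$\ell$ group by $\Oo(\log n+\Lmax{\B}-\ell)$ and show it increases by only $\Oo(1)$ per level of descent in the dependency chain).
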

\begin{proof}
	Assume that the cohesive group $X$ was merged at level $\ell$ of the clean network and has to proceed with its own merge at level $\ell-1$. Let $v$ and $z$ be $X$'s left and right parents in $\B$. We have three cases to consider: (i) $X$ is independent from its parents; (ii) $X$ depends on at least one of its parents through all the levels $\ell'<\ell$; and, (iii) $X$ may depend on one of its parents for some time and become independent at some point. In the first case, $X$ is merging itself in a sub-skip list that contains only elements in the clean network. Thus, $X$'s overall merging time requires $\Oo(\log n)$ \whp~To analyze the second case, we notice that as $X$ depends on its parents, also $X$'s parents may depend on their parents as well and so on. We can model the time a cohesive group $X$ at level $\ell$ has to wait in order to start its own merge procedure at level $\ell-1$ as the time it would have taken had it traveled in a pipelined fashion with its parents. Moreover, such pipelining effect can be extended to include, in a recursive fashion, the parents of $X$'s parents, the parents of the parents of $X$'s parents, and so on. Thus, $X$ can be modeled as the tail of such ``parents-of-parents'' chain (or pipeline), in which $X$ must wait for all the nodes in the pipeline to be merged at level $\ell-1$ before proceeding with its own merge in such level. Moreover, we observe that, by our assumption, $X$ is dependent on at least one of its parents throughout the merge phase. Without loss of generality, let us assume that $X$ depends on one parent, say $z$. In other words, $X$ is never splitting, and its placing itself right after/before $z$ at each level $\ell'<\ell$. The time $X$ takes to merge in one level $\ell'$ after $z$ is $\Oo(1)$ (see Lemma~\ref{lemma:choesive_group_insertion}). Thus, $X$ merge itself in some level $\ell'$ by time $T_z^{\ell'}+\Oo(1)$ where $T_z^{\ell'}$ is the time $z$ needs to merge itself in level $\ell'$. Furthermore, $z$ may be dependent to one of its parents, say $w$. Thus the time to insert $z$ at level $\ell'$ is $T_z^{\ell'} = T_w^{\ell'} +\Oo(1)$. If we repeat this argument for all the levels $\hat{\ell}>\ell'$ we obtain that the time to merge the cohesive group $X$ at level $\ell'$ can be defined as $T_y^{\ell'} + \Lmax{\B} - \ell+1 = T_y^{\ell'} + \Lmax{\B} - \lmax{\B}{X}+1$ where $y$ is the topmost parent in the pipeline of dependencies. Thus $X$'s overall merge time is $T_X^{0} = T_y^{0}+  \Lmax{\B} - \lmax{\B}{X}+1 = \Oo(\log n + \Lmax{\B} - \lmax{\B}{X}+1) $ \whp~In the last case, $X$ may depend on its parents for some time and then gain independency. Assume that $X$ depends its parent $z$ until level $\ell'$ and then gains independency. $X$'s overall merge time is $T_X^{0} = \Oo(T_z^{\ell'} + 1 + \log n) = \Oo(\log n)$ \whp
\end{proof}
Given Lemma~\ref{lemma:merge_generalized}, we can bound the overall running time of the {\sc WAVE} protocol.
\begin{lemma}\label{lemma:merge_wave}
	Let $\C$ and $\B$ be two skip lists of $n$ elements built using the same $p$-biased coin. Then,
	the WAVE protocol merges $\C$ and $\B$ in $\Oo(\log n)$ rounds \whp.
\end{lemma}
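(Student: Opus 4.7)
The plan is to combine three ingredients already established in the excerpt: the height bound of a randomized skip list, the per-group merging bound of Lemma~\ref{lemma:merge_generalized}, and the preprocessing cost from Lemma~\ref{lemma:preprocessing}. The overall merge finishes only when the last cohesive group at level $0$ has merged, so it suffices to bound (i) the time before any group starts merging, (ii) the merging time of an arbitrary cohesive group, and (iii) argue the bounds hold simultaneously over all groups.

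First, I would invoke the standard skip-list height bound (Lemma~\ref{lemma:exp_hight} in Appendix~\ref{apx:skip_list}) to conclude that $\Lmax{\B}\in \Oo(\log n)$ with high probability, and similarly for $\C$. This immediately tames the $\Lmax{\B}-\ell$ term that appears in Lemma~\ref{lemma:merge_generalized}: for any cohesive group $X$ at any level $0\le \ell \le \Lmax{\B}$, the per-group merging bound becomes
\[
T_X \in \Oo\!\left(\log n + \Lmax{\B} - \ell\right) \subseteq \Oo(\log n)
\]
with high probability.

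Next, I would observe that the protocol terminates when every cohesive group reaches its bottom-level insertion point and the \texttt{done} state is set on all nodes. The number of distinct cohesive groups is at most the number of nodes in $\B$, hence $\Oo(n)$. Lemma~\ref{lemma:merge_generalized} gives, for each fixed group, a failure probability of at most $1/n^{c}$ for an arbitrarily large constant $c$ (since the ``\whp'' bounds on height and on group lengths are tunable through the standard Chernoff-type analysis for skip lists, see Lemma~\ref{lemma:run_length}). A union bound over all $\Oo(n)$ groups then yields that \emph{every} cohesive group completes its merge within $\Oo(\log n)$ rounds, with high probability. Finally, by Lemma~\ref{lemma:preprocessing}, the preprocessing phase adds another $\Oo(\log n)$ rounds, which is absorbed into the same asymptotic bound.

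The main obstacle I anticipate is handling the dependency pipeline cleanly: a low-level group may wait on several ancestors, and one must ensure that the recursive ``parents-of-parents'' chain invoked in the proof of Lemma~\ref{lemma:merge_generalized} does not accumulate more than $\Lmax{\B}$ constant-round delays. Since each level contributes at most $\Oo(1)$ additional waiting per pipeline stage, and the pipeline depth is bounded by $\Lmax{\B} - \lmax{\B}{X} \in \Oo(\log n)$ w.h.p., the worst dependent group still finishes in $\Oo(\log n)$ rounds. The remaining care is to confirm that idle nodes activated early via the \textsc{VirtualWalk} routine do not invalidate the optimality of their search paths (which is exactly what was justified just before the lemma), so no group incurs an extra $\omega(\log n)$ traversal. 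Putting the three ingredients together concludes that the entire \textsc{WAVE} protocol merges $\B$ into $\C$ in $\Oo(\log n)$ rounds with high probability.
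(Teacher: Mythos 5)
Your proposal is correct and follows essentially the same route as the paper: both rest on Lemma~\ref{lemma:merge_generalized} for the per-group merge time, absorb the $\Lmax{\B}-\ell$ term using the $\Oo(\log n)$ height bound, and conclude that every group finishes within $\Oo(\log n)$ rounds w.h.p. Your version is slightly more careful in making the union bound over the $\Oo(n)$ cohesive groups explicit (the paper leaves this implicit) and in folding in the preprocessing cost, but these are refinements of the same argument rather than a different approach.
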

\begin{proof}
	The proof follows by noticing that a \texttt{idle} cohesive group $X$ in some level $\ell$ starts its merge procedure when the wave sweeps through it. This happens within $\Oo(\log n)$ rounds \whp~from the {\sc WAVE} protocol initialization time instant. Moreover, a \texttt{idle} node/cohesive group, after entering the \texttt{merge} state needs $\Oo(\log n+\Lmax{\B}-\ell)$ rounds \whp~to be fully merged in $\C$ (Lemma~\ref{lemma:merge_generalized}). That is, the merge time is at most $\Oo(\log n +\log n+\Lmax{\B}-\ell) = \Oo(\log n)$ \whp
\end{proof}

In summary, the \textsc{WAVE} protocol enables a fully distributed and parallel merge of the buffer skip list $\B$ into the clean skip list $\C$, using a top-down wave of coordinated insertions. Cohesive groups ensure local structure and predictability, while the parent-children dependencies across levels enable global synchronization. Through virtual walking and early activation, the protocol supports a high degree of parallelism and completes in $\Oo(\log n)$ rounds with high probability.
\subsection{Update}~\label{sec:duplicate}
After performing the previous phases, the live network $\Ll$ must be coupled with the newly updated clean network $\C$. A high level description on how to perform such update in constant time is the following: during the duplication phase a ``snapshot'' (e.g., a copy) of the clean network $\C$ is taken and used as the new $\Ll$. Intuitively, this procedure requires $\Oo(1)$ rounds because each node in $\C$ is taking care of the snapshot that can be considered as local computation. However, we must be careful with the notion of snapshot. To preserve dynamic resource-competitiveness we show how to avoid creating new edges while taking a snapshot of $\C$ during this phase. Assume that every edge in the clean network $\C$ has a length two local\footnote{Meaning that each node $u\in \C$ has its own copy of the label $\lambda$ for each edge incident to it. Observe that for an edge $(u,v)$ the local label of $u$ can not differ from the one of $v$.} binary label $\lambda$ in which the first coordinate indicates its presence in $\C$ and the second one in $\Ll$. Moreover, each label $\lambda$ can be of three different types: (1) ``$10$'' the edge is in $\C$  not in $\Ll$; (2) ``$01$'' the edge is not in $\C$ but it is in $\Ll$; and, (3) ``$11$'' the edge is in both $\C$ and $\Ll$. Moreover, assume that each node $u$ in $\C$ has the label $\lambda(u,v)$ associated to its port encoding the connection with node $v$. Then, each node $u$ in $\C$ for each $0\leq \ell\leq \Lmax{u}$ performs the following local computation: 
\begin{description}
	\item[1.] If $u$'s port-label associated to the edge connected to the left neighbor at level $\ell$ (i.e., $\lambda(u,v)$ such that $v\in \neig{\C}{\texttt{left}}{\ell}{u}$) is ``$10$'' then set it to ``$11$'';
	\item[2.]If $u$'s port-label associated to the edge connected to the right neighbor at level $\ell$ (i.e., $\lambda(u,w)$ such that $w\in \neig{\C}{\texttt{right}}{\ell}{u}$) is ``$10$'' then set it to ``$11$'';
\end{description}
Notice that during this phase no edge has the label ``$01$'' because $\C$ was ``cleaned'' during the deletion phase. All the nodes in $\C$ executed the labeling procedure, and the live network $\Ll$ is given by the edges with label ``$11$''. 
It follows that this phase requires a constant number of rounds.
\begin{lemma}\label{lemma:duplicate}
	The update phase requires constant number of rounds.
\end{lemma}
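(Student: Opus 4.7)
The plan is to argue that the update phase is a purely local relabeling operation that requires no inter-node communication, and therefore completes in a constant number of rounds regardless of the size of $\C$ or the height of the skip list.

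First, I would observe the starting invariant: immediately before the update phase runs, the deletion phase of Section~\ref{sec:delete} has already removed all patched-up nodes from $\C$, and the merge phase has incorporated the buffer $\B$. As a consequence, no edge in $\C$ carries the label ``$01$'' at this point (every edge currently in $\C$ either coincides with an edge of $\Ll$, carrying label ``$11$'', or has just been introduced and carries label ``$10$''). This is the key invariant that makes the two-bit relabeling well-defined.

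Next, I would analyze the work performed per node. Each node $u \in \C$ iterates over its levels $0 \leq \ell \leq \lmax{}{u}$ and, at each level, inspects the two port-labels associated to its left and right neighbors in $\C$ and, whenever it sees the label ``$10$'', flips it to ``$11$''. By the footnote, node $u$ already stores its own copies of $\lambda(u,v)$ at each of its incident ports, so no message needs to be sent or received to read or update these labels. All of this is therefore local computation at $u$, executed in parallel with every other node in $\C$. Even though the total local work at $u$ is $\Oo(\lmax{}{u}) = \Oo(\log n)$ \whp, the number of communication rounds is $\Oo(1)$ because local computation does not consume rounds in the model.

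I would then verify consistency: for every edge $(u,v)$ in $\C$, both endpoints observe the same local label (by the footnote's assumption), and both endpoints apply the identical deterministic rewrite rule ``$10 \mapsto 11$'' simultaneously; hence the two local copies of $\lambda(u,v)$ remain equal after the update, and no further exchange of information is needed to repair inconsistency. Finally, since $\Ll$ is defined to be the set of edges whose label is ``$11$'', the new $\Ll$ coincides exactly with $\C$ after the relabeling step, as required by Phase~4 of Algorithm~\ref{algo:overview}.

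The only mildly subtle point, and the one I would emphasize, is why no new edges are ever created during this phase: the whole point of the two-bit labeling scheme is to encode membership in $\Ll$ without instantiating a separate copy of the skip list. This is crucial for dynamic resource-competitiveness, as otherwise duplicating $\C$ into $\Ll$ could cost $\Omega(n)$ work per cycle irrespective of the churn rate. Since the actual operation is just a local flip of pre-existing labels on pre-existing edges, the update phase finishes in $\Oo(1)$ rounds, as claimed.
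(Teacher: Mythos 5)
Your proposal is correct and follows essentially the same route as the paper: the update is a purely local flip of the two-bit port labels from ``$10$'' to ``$11$'', no ``$01$'' labels exist after the deletion phase, no messages are exchanged and no edges are created, so the phase costs $\Oo(1)$ rounds. Your added remarks on the consistency of the two endpoint copies of $\lambda(u,v)$ and on why avoiding edge creation is needed for resource-competitiveness are details the paper leaves implicit, but they do not change the argument.
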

Next, we show that all the phases in the maintenance cycle satisfy the dynamic resource competitiveness constraint defined in Section~\ref{sec:contributions}.
\begin{lemma}\label{lemma:workload}
    Each maintenance cycle is $(\alpha,\beta)$-dynamic resource competitive with $\alpha = 
    \Oo(\log n)$ and $\beta = \Oo(\Polylog{n})$.
\end{lemma}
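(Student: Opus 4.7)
The approach is to analyze the four phases of a single maintenance cycle in isolation, bound the work of each by a $\Polylog{n}$ factor times the churn events that triggered it, and then sum over all cycles intersecting the interval $[t_s,t_e]$. Since each phase takes $\Oo(\log n)$ rounds (Lemmas~\ref{lemma:delete},~\ref{lemma:creation},~\ref{lemma:merge_wave}, and~\ref{lemma:duplicate}), the overall cycle length is $\Oo(\log n)$, which naturally fixes $\alpha=\Oo(\log n)$: work performed inside a cycle that began at time $t$ reacts to churn events occurring in $[t-\Oo(\log n),t]$, so the look-back parameter only has to cover the in-flight cycle that opened just before $t_s$.

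For the \emph{Covering} step of Section~\ref{sec:covering_for_nodes} Lemma~\ref{lemma:taking_over} already gives $\Polylog{n}$ work per departing node. For \emph{Deletion} Lemma~\ref{lemma:delete} directly states that the total work is within a $\Polylog{n}$ factor of the number of red (patched-up) nodes, and these are precisely the peers the adversary removed since the previous cycle. For \emph{Buffer Creation} I would combine Lemma~\ref{lemma:constr_twinb} with the message complexity of AKS on the twinbutterfly: the twinbutterfly has $\Oo(|C|\log|C|)$ committees, each exchanges $\Polylog{n}$ messages over its $\Oo(\log n)$ lifetime, and the parallel rewiring of the fill-in nodes costs $\Polylog{n}$ per buffer element, giving a total of $\Oo(|C|\Polylog{n})$, where $|C|$ is the number of recent joiners. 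For \emph{Merge}, the \textsc{WAVE} protocol's work per buffer node is proportional to the length of its search path in $\C$, which is $\Oo(\log n)$ w.h.p.~by Lemma~\ref{lemma:merge_wave}; multiplying by the committee-level redundancy yields $\Oo(|\B|\Polylog{n})$ overall. Finally, for \emph{Update}, the key observation is that a label can transition from ``10'' to ``11'' only if its underlying edge was just created by the Merge (or the Deletion rewiring), and from ``11'' to ``01'' only if its endpoint was just dropped with a red node; the number of affected ports is therefore bounded by $\Polylog{n}$ times $|C|+|\text{red}|$.

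Summing the four contributions, the work performed inside a single cycle is $\Oo(\Polylog{n}\cdot(|C|+|\text{red}|))$, which is $\Polylog{n}$ times the churn it processes. Summing over the $\Oo((t_e-t_s)/\log n)+1$ cycles that intersect $[t_s,t_e]$ and charging the first (in-flight) cycle to the churn in $[t_s-\Oo(\log n),t_s]$ yields $\W(t_s,t_e)\in\Oo(\Polylog{n}\cdot C(t_s-\Oo(\log n),t_e))$, which is exactly the $(\Oo(\log n),\Polylog{n})$-dynamic resource competitiveness statement.

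The main obstacle I expect is the \emph{Update} phase: the naive reading of Section~\ref{sec:duplicate} iterates over every node and every level of $\C$, which would give $\Oo(n\log n)$ work per cycle independently of the churn and destroy competitiveness. The delicate point is to argue that only ports incident to \emph{recently changed} edges actually participate in the relabelling, for instance by having each endpoint keep the previous cycle's port-labels and trigger the update only when its current label differs. A secondary subtlety is the Merge phase: because cohesive groups share parents across levels, a single $\C$-node might be visited by several groups, so these visits must be amortized against $|\B|$ via Lemma~\ref{lemma:choesive_group_insertion} rather than against $|\C|$; together with the cohesive-group length bound (Lemma~\ref{lemma:run_length} in the appendix), this keeps the per-buffer-node overhead polylogarithmic.
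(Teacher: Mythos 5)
Your proposal is correct and follows essentially the same phase-by-phase accounting as the paper's own proof, invoking the same supporting lemmas; if anything, you are more careful than the paper in charging each phase's work to the actual churn $|C|+|\text{red}|$ rather than to the worst-case bound $\Oo(n)$ per cycle. One clarification: the ``main obstacle'' you flag in the Update phase is not an issue under the paper's definition of work, which counts only exchanged messages and newly formed edges --- the port relabelling of Section~\ref{sec:duplicate} is purely local computation, and the paper disposes of that phase in a single line (``there is no exchange of messages and no edge is created''). Your proposed refinement of relabelling only recently changed ports is therefore unnecessary, though harmless.
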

\begin{proof}
	In order to prove that our maintenance protocol is dynamic resource competitive it suffices to show that a generic iteration of our maintenance cycle respects this invariant. 
	    
     Let $t_s$ be the time instant in which the deletion phase starts, and let $t_e= \Oo(t_s+\log n)$. The overall churn between the previous deletion phase and the current one is $C(t_s-\Oo(\log n),t_s) =\Oo(n)$, that is because we have $\Oo(n/\log n)$ churn at each round for $\Oo(\log n)$ rounds. The total amount messages sent during the phase is  $\tilde{\Oo}(C(t_s-\Oo(\log n),t_s))$ and the number of formed edges is  $\tilde{\Oo}(C(t_s-\Oo(\log n),t_s))$. Thus the overall amount of work during the deletion phase is $\W(t_s,t_e) =  \tilde{\Oo}(C(t_s-\Oo(\log n),t_s))$. 
	
	During the buffer creation phase, we create a twin-butterfly network $\M$ in $\Oo(\log n)$ rounds using $\Oo(\log n)$ messages for each node at every round. Moreover, the twinbutterfly has $\Oo(n\log n)$ nodes of which $2n$ nodes have degree $4$ while the rest have degree $8$. This implies that while building the twinbutterfly we create roughly $\Oo(n\log n)$ edges and we exchange $\Oo(n\log^2 n)$ messages. Next, during each round of the AKS algorithm, each node $u$ in $\M$ sends and receives a constant number of messages. Then, we copy such a list $\Oo(\log n)$ times \whp, in this way we create $\Oo(n\log n)$ edges. Finally, in the last step of the buffer creation phase, we run the deletion algorithm used in the previous phase on a subset of the nodes in the sorted list. Putting all together, during this phase, we have an overall amount of work of $\W(t_s,t_e) = \Oo(C(t_s-\Oo(\log n),t_s)\cdot\Polylog{n}) = \tilde{\Oo}(C(t_s-\Oo(\log n),t_s))$.
	
	During the merge phase, we merge the buffer network with the clean one. The preprocessing step creates $\Oo(n\log n)$ edges on the buffer network using at most $\Polylog{n}$ number of messages. During each step of the WAVE protocol sends $\Oo(n\cdot \Polylog{n})$ overall number of messages on the buffer network and creates $\Oo(n\log n)$ edges in the clean network. Putting all together, we have that the merge phase does not violate our dynamic work efficiency requirements. Indeed the overall work is bounded by 
	$\W(t_s,t_e) = \Oo(C(t_s-\Oo(\log n),t_s)\cdot\Polylog{n}) = \tilde{\Oo}(C(t_s-\Oo(\log n),t_s))$.
	
	Finally, during the duplication phase there is no exchange of messages and no edge is created. 
	
	We have that each phase of the maintenance cycle does not violate our resource competitiveness  requirements in Definition~\ref{def:dynamic_resource_comp}.
\end{proof}
We are now ready to show that the skip list is maintained with high probability for $n^d$ rounds, where $d\geq 1$ is an arbitrary big constant.
\begin{lemma}\label{lemma:resilience_whp}
The maintenance protocol ensures that the resilient skip list is maintained effectively for at least $\Poly{n}$ rounds with high probability. 
\end{lemma}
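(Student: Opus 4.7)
The plan is to show that each maintenance cycle succeeds with probability at least $1-1/n^{c}$ for an arbitrarily large constant $c$, and then apply a union bound over $n^{d}/\log n$ consecutive cycles (since each cycle runs in $\Oo(\log n)$ rounds by Lemmas~\ref{lemma:delete},~\ref{lemma:creation},~\ref{lemma:merge_wave}, and~\ref{lemma:duplicate}). By choosing the hidden constants in the w.h.p.\ guarantees large enough relative to $d$, the probability that any single cycle fails within a window of $n^d$ rounds is at most $n^{-\Omega(1)}$.

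First I would enumerate the sources of randomness that could cause a cycle to fail. There are essentially four: (i) the Spartan overlay losing a committee to churn (so that a removed node cannot be covered as in Lemma~\ref{lemma:taking_over}); (ii) the delete routine failing because the tree $T$ exceeds its expected $\Oo(\log n)$ height or a cohesive group exceeds $\Oo(\log n)$ in length; (iii) the buffer creation failing because AKS or the duplication of levels exceeds its expected $\Oo(\log n)$ depth; and (iv) the \textsc{WAVE} protocol failing to finish in $\Oo(\log n)$ rounds, which would in turn require some level of $\B$ or $\C$ to exceed $\Oo(\log n)$. For each of these, the underlying concentration bounds (Theorem~6 of~\cite{Augustine_conference_18} for (i), and the skip list tail bounds cited in Appendix~\ref{apx:skip_list} for (ii)--(iv)) already give failure probability at most $1/n^{c'}$ for an arbitrarily tunable constant $c'$. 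A union bound over the $\Oo(1)$ sub-phases of a single cycle therefore yields per-cycle success probability at least $1-1/n^{c}$ for any desired $c$.

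Next I would perform the union bound across cycles. Fix an arbitrary constant $d\geq 1$. Set the per-cycle failure-probability constant to $c=d+2$. Over an interval of $n^{d}$ rounds there are at most $n^{d}/\Theta(\log n)$ cycles, so the probability that \emph{any} cycle in this interval fails is at most
\[
\frac{n^{d}}{\Theta(\log n)}\cdot \frac{1}{n^{d+2}} \;\leq\; \frac{1}{n^{2}\,\Theta(\log n)} \;=\; n^{-\Omega(1)}.
\]
Conditioned on every cycle succeeding, the invariants carried from one cycle to the next (namely that $\LL{c}$ and $\CC{c}$ agree at the start of cycle $c$, and that every node removed by the adversary is covered by a surviving committee until it is deleted in Phase~1) are maintained inductively, and the resource-competitiveness bound of Lemma~\ref{lemma:workload} applies in every window.

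The main obstacle is not a single concentration bound but the chaining of the different w.h.p.\ guarantees. The delicate point is that the \textsc{WAVE} analysis in Lemma~\ref{lemma:merge_generalized} uses the assumption that $\B$ and $\C$ are built using the same $p$-biased coin so that their heights are both $\Oo(\log n)$ w.h.p.; one must therefore verify that across polynomially many cycles the sizes of $\B$ and $\C$ remain within $\Oo(n)$ so that these height bounds continue to apply, and that the Spartan committees survive reshaping (Appendix~\ref{apx:reshaping_protocol}) during the entire window. Both follow from the fact that the network size is stable at $n$ and the churn rate is $\Oo(n/\log n)$, which keeps the relevant tail events under the $1/n^{c}$ threshold chosen above, so the inductive argument goes through and the skip list is maintained for $n^{d}$ rounds with probability $1-n^{-\Omega(1)}$.
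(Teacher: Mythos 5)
Your proposal is correct and follows essentially the same route as the paper: establish that each maintenance cycle succeeds with probability at least $1-1/n^{c}$ for a tunable constant $c$, then extend this guarantee over polynomially many cycles. The only difference is mechanical --- the paper models the first failure as a geometric random variable with parameter $1/n^{d}$ and bounds $(1-1/n^{d})^{n^{d-1}}$ directly (via an inequality chain that is in fact garbled in the paper's text), whereas your union bound over the $n^{d}/\Theta(\log n)$ cycles reaches the same conclusion more cleanly and without any implicit independence assumption across cycles.
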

\begin{proof}
	Each phase of the maintenance protocol (i.e., Algorithm~\ref{algo:overview}) succeeds with probability at least $1-\frac{1}{n^d}$, for some arbitrary big constant $h\geq 1$. Let $X$ be a geometric random variable of parameter $p =\frac{1}{n^d}$ that counts the number of cycle needed to have the first failure, then its expected value is $\expect{X}  = n^d$, for $d\geq 1$. Moreover, the probability of maintenance protocol succeeding in a round $n^r$ for some $r<d$ is $(1-\frac{1}{n^d})^{n^r}$, Without loss of generality assume $d>1$ and $r = d-1$, then $(1-\frac{1}{n^d})^{n^{d-1}} \leq e^{-n^{d-1}/n^d}\geq 1-\frac{1}{n}$. 
\end{proof}
We conclude by proving the main theorem in Section~\ref{thm:overall_result}.
\subparagraph*{Proof of the Main Theorem (Theorem~\ref{thm:overall_result}).}
To prove the main theorem it is sufficient to notice that the skip list will always be connected and will never lose its structure thanks to Lemma~\ref{lemma:taking_over}. Indeed, every time a group of nodes is removed from the network, there are committees that in $\Oo(1)$ rounds will take over and act on their behalf. This allows all the queries to go through without any slowdown. Thus all the queries will be executed in $\Oo(\log n)$ rounds with high probability even with a churn rate of $\Oo(n/\log n)$ per round. Next we can show that the distributed data structure can be efficiently maintained using maintenance cycles of $\Oo(\log n)$ rounds each. To do this, it is sufficient to show that each phase of our maintenance cycle (Algorithm~\ref{algo:overview}) can be carried out in $\Oo(\log n)$ rounds (see Lemma~\ref{lemma:delete}, Lemma~\ref{lemma:creation}, Lemma~\ref{lemma:merge_wave}, and Lemma~\ref{lemma:duplicate}). Moreover, from Lemma~\ref{lemma:workload} we have that each phase is $(\alpha,\beta)$-dynamic resource competitive with $\alpha = \Oo(\log n)$ and $\beta = \Oo(\Polylog{n})$. Finally, the maintenance protocol ensures that the distributed data structure is maintained for at least $n^c$ rounds with high probability where $c\geq 1$ is an arbitrarily large constant.


\subsection{Extending our approach to other data structures}\label{sec:generalize_ds}
The outlined maintenance protocol provides a convenient framework for constructing churn-resilient data structures. Furthermore, it can be easily adapted to maintain a more complex distributed pointer based data structure such as a skip graphs~\cite{Harvey_2003,Aspnes_2007,Goodrich_2006,Jacob_2014}. Indeed, with minor adjustments to the phases outlined in Algorithm~\ref{algo:overview}, a maintenance cycle capable of maintaing such data structures against \emph{heavy churn rate} can be devised. To this end, we briefly discuss how to adapt our results to skip graphs. A skip graph~\cite{Aspnes_2007}, can be viewed as an extension of skip lists. Indeed, both consists of a set of increasingly sparse doubly-linked lists ordered by levels starting at level $0$, where membership of a particular node $u$ in a list at level $\ell$ is determined by the first $\ell$ bits of an infinite sequence of random bits associated with $u$, referred to as the \emph{membership vector} of $u$, and denoted by $m(u)$. Let the first $\ell$ bits of $m(u)$ as $m(u)|\ell$. In the case of skip lists, level $\ell$ has only one list, for each $\ell$, which contains all elements $u$ such that $m(u)|\ell=1^\ell$, i.e., all elements whose first $\ell$ coin flips all came up heads. Skip graphs, instead, have $2^\ell$ lists at level $\ell$, which we can index from $0$ to $2^\ell-1$. Node $u$ belongs to the $j$-th list of level $\ell$ if and only if $m(u)|\ell$ corresponds to the binary representation of $j$. Hence, each node is present in one list of every level until it eventually becomes the only member of a singleton list. Without loss of generality, assume that the skip graph has sentry nodes as classical skip lists\footnote{Thus, at each level $\ell$ there are $2^\ell$ sentinels (one for each list in such level).}. Thus, the skip graph can be maintained as follows:
\begin{description}
    \item[Deletion.] We can use the same $\Oo(\log n)$ rounds approach described in Section~\ref{sec:delete} to remove from the \emph{clean skip graph} the nodes that have left the network.
    \item[Buffer Creation.] To build the base level of the \emph{buffer skip graph} we can use the same $\Oo(\log n)$ rounds approach described in Section~\ref{sec:creation}. While, to construct the skip graph from level $0$, we build an ``augmented'' skip list in which each node $u$ in each level $\ell$ is identified by its ID/key and the list $j$ at level $\ell$ in which $u$ appears (there are $2^\ell$ lists at level $\ell$). Thus, each node $u$ at some level $\ell$ can be: (i) \emph{effective} in the list $j$ and \emph{fill-in} for the remaining $i\neq j$, for $j,i\in  [1,2^\ell]$; or, (ii) \emph{fill-in} in all the lists $j\in [1,2^\ell]$. Finally, such augmented skip list is transformed into a skip graph by running (in parallel) Algorithm~\ref{algo:delete} in Section~\ref{sec:delete} on each level $\ell$ for which there exists at least one fill-in node.
    \item[Merge.] We can use the same merge phase described in Section~\ref{sec:merge}, with the only difference that a node $u$ in the \texttt{merge} state at some level $\ell$, must be merged in one of the $2^\ell$ list in such level.
    \item[Duplicate.] We can use the same approach described in Section~\ref{sec:duplicate}.
\end{description}

Consequently, we can claim a similar result to Theorem~\ref{thm:overall_result} for skip graphs. In essence, with small changes in the delete, buffer creation, merge and update algorithms, our maintenance algorithm can be tailored to the specific data structure in question. Thus, making our approach ideal for building complex distributed data structures in highly dynamic networks.
\begin{corollary}
\label{cor:skip_graphs}
The maintenance cycle described in Algorithm~\ref{algo:overview} can be adapted to support skip graphs. In particular, the resulting data structure remains resilient to churn at a rate of up to $ \Oo(n / \log n) $ per rounds, and all operations complete within $ \Oo(\log n) $ rounds with high probability, using at most $ \Oo(\Polylog{n}) $ messages per node per round.
\end{corollary}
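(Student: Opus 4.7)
The plan is to show that each of the four phases of Algorithm~\ref{algo:overview} admits an adaptation to skip graphs with the same asymptotic round and message complexity as in Theorem~\ref{thm:overall_result}, so that the overall maintenance cycle and its resource-competitiveness carry over verbatim. The Spartan overlay $\mathcal{S}$ and the covering-by-committees mechanism of Section~\ref{sec:covering_for_nodes} are entirely agnostic to which pointer-based data structure sits on top, so Lemma~\ref{lemma:taking_over} applies unchanged: any node removed by the adversary is replaced in $\Oo(1)$ rounds by its committee, which by assumption stores the local neighborhood in the live/clean skip graph. Hence queries can still traverse the live skip graph despite the $\Oo(n/\log n)$-per-round churn, and the initial bootstrap simply builds one skip graph (rather than one skip list) on top of $\mathcal{S}$ via the construction used for the buffer phase below.

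For the \textbf{Deletion} phase, observe that each node $u$ of a skip graph participates in at most one list per level, and that its expected maximum level is $\Oo(\log n)$ \whp. Thus, across all $2^\ell$ lists at level $\ell$, the totality of ``red'' (patched-up) nodes at that level can be cleaned by running Algorithm~\ref{algo:delete} independently inside each list at level $\ell$, and in parallel across all levels. Since the lists at a fixed level are vertex-disjoint and every node belongs to one list per level, the per-round message and congestion bounds are unaffected, and Lemma~\ref{lemma:delete} applies list-by-list. The total work is within a $\Polylog{n}$ factor of the number of red nodes and the entire deletion runs in $\Oo(\log n)$ rounds \whp.

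For the \textbf{Buffer Creation} phase, I would first build the base level of the buffer skip graph using exactly the twin-butterfly/AKS construction of Section~\ref{sec:creation} to sort the new nodes in $\Oo(\log n)$ rounds, then assign each new node $u$ a membership vector $m(u)$. Next, replicate the base level $\ell_{\max}(\B) = \Oo(\log n)$ times \whp, and tag each copy of $u$ at level $\ell$ as \emph{effective} in the list indexed by $m(u)|\ell$ and \emph{fill-in} in all other $2^\ell-1$ lists of that level (using a logical, not physical, list identifier stored locally at $u$). Running the rewiring subroutine of Section~\ref{sec:delete} in parallel at each level, and within each logical list, excises the fill-in nodes and realizes the $2^\ell$ lists of level $\ell$. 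The only new bookkeeping is that each list-identifier fits in $\Oo(\log n)$ bits and hence in a single message; congestion is preserved because each physical node still participates in one list per level. By Lemma~\ref{lemma:creation} this phase still completes in $\Oo(\log n)$ rounds \whp.

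For the \textbf{Merge} phase, the key adaptation is to generalize the notion of cohesive group: at level $\ell$, a cohesive group is a maximal contiguous run of nodes of height exactly $\ell$ that all belong to the same list of level $\ell$ (equivalently, share the same prefix $m(\cdot)|\ell$). Proposition~\ref{proposition:m_cohev_behav}, Proposition~\ref{proposition:cohesive_neighbors}, and Lemma~\ref{lemma:choesive_group_insertion} all go through by the same arguments, with each cohesive group now being merged into its unique target list in $\C$ (one of the $2^\ell$ lists at level $\ell$). The parent–children relationship is redefined so that the parents of a node $u$ at level $\ell$ are its immediate higher-level neighbors in the list of level $\ell+1$ to which $u$ belongs; because each node still has exactly two such parents, the dependency graph of the \textsc{WAVE} protocol has the same structure. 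The expected run-length bound used in Lemma~\ref{lemma:preprocessing} and Lemma~\ref{lemma:merge_generalized} also holds for skip graphs (see Appendix~\ref{apx:skip_list}), so by the same top-down pipelining argument the adapted \textsc{WAVE} merges $\B$ into $\C$ in $\Oo(\log n)$ rounds \whp. The main obstacle I expect here is precisely the correctness of the cohesive-group redefinition when a group splits during traversal: one must verify that a split simultaneously respects the numerical ordering (as in Proposition~\ref{proposition:m_cohev_behav}) and the membership-vector partition that determines the target list at the next level down. This is handled by noting that each cohesive group at level $\ell$ belongs to one list at level $\ell$, and the candidate lists at level $\ell-1$ for its members are determined solely by the $\ell$-th bit of each member's membership vector, which is fixed and independent of the traversal.

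Finally, the \textbf{Update} phase is unchanged: the three-valued port-label scheme of Section~\ref{sec:duplicate} is applied edge-by-edge and list-by-list, requiring only $\Oo(1)$ rounds by Lemma~\ref{lemma:duplicate}. Combining the four adapted phases, the resource-competitiveness bookkeeping of Lemma~\ref{lemma:workload} goes through since every per-phase message/edge count scales with the number of new or removed nodes up to $\Polylog{n}$ factors, and the high-probability survival bound of Lemma~\ref{lemma:resilience_whp} applies verbatim. Hence, the overall maintenance cycle yields a skip graph with the claimed guarantees.
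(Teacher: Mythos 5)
Your proposal is correct and follows essentially the same route as the paper: the paper's own justification for Corollary~\ref{cor:skip_graphs} is exactly the phase-by-phase adaptation sketched in Section~\ref{sec:generalize_ds} (same deletion routine per list, augmented skip list with effective/fill-in nodes per list for buffer creation, \textsc{WAVE} merging each group into its target list among the $2^\ell$ lists, and the unchanged update phase). Your write-up is in fact somewhat more explicit than the paper's about the cohesive-group and parent--child redefinitions, but it introduces no new decomposition or key lemma.
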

\subsection{Extending our approach to deal with multiple keys on each node}\label{sec:generalize_keys}
Throughout the paper we assumed that each node in the network posses \emph{exactly} one element of the data structure. This assumption can be relaxed to deal with multiple keys on each node. Indeed, in the case in which in the network there are $t\cdot n$ keys, where $t=\Polylog{n}$, all the techniques described above allow to maintain the data structure in the presence of the same adversarial churn rate. 
\begin{corollary}\label{cor:generalization}
    Given a network with $n$ nodes in which each vertex $v$ possesses $t= \Polylog{n}$ elements in the skip list. Then maintenance protocol requires $\Oo(\log n)$ rounds to build and maintain a resilient skip list that can withstand heavy adversarial churn at a churn rate of up to $\Oo(n/\log n)$ nodes joining/leaving per round. 
\end{corollary}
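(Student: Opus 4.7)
\subparagraph*{Proof Proposal for Corollary~\ref{cor:generalization}.}
The plan is to show that substituting $tn = n\cdot\Polylog{n}$ elements (spread as $t$ elements per node) for $n$ singleton-node elements preserves every bound established for the maintenance cycle, because the key quantities scale at most by $\Polylog{n}$ factors. The first observation I would record is that the skip list height over $tn$ elements is $\Oo(\log(tn)) = \Oo(\log n + \log t) = \Oo(\log n)$ whp (see Lemma~\ref{lemma:exp_hight}), so the number of levels on which the phases of Algorithm~\ref{algo:overview} operate does not increase asymptotically. Moreover, the adversarial churn model still removes at most $\Oo(n/\log n)$ nodes per round, which corresponds to at most $\Oo(tn/\log n) = \Oo(n\cdot \Polylog{n}/\log n)$ elements per round, a factor that only inflates all edge/message counts by $\Polylog{n}$.

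Next I would revisit each phase of the cycle and argue that the analysis goes through verbatim after a $\Polylog{n}$ blow-up. For the replacement mechanism of Section~\ref{sec:covering_for_nodes}, a departing node $v$ now forces its committee in $\X$ to cover for all $t$ of $v$'s elements; since each element has $\Oo(\log n)$ skip list neighbors whp and a committee has $\Theta(\log n)$ members, every committee member absorbs $\Oo(t\cdot \log n / \log n) = \Oo(t) = \Polylog{n}$ extra data-structure edges, staying within the per-round message budget. For the Deletion phase (Section~\ref{sec:delete}), Algorithm~\ref{algo:delete} is run in parallel on each of the $\Oo(\log n)$ levels, where black/red roles are now attached to \emph{elements} rather than to physical nodes; since two elements hosted by the same node can exchange information locally at no communication cost, the $\Oo(\log n)$ round bound of Lemma~\ref{lemma:delete} still applies. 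For the Buffer Creation phase (Section~\ref{sec:creation}), the twin-butterfly on top of Spartan now needs to sort $\Oo(tn_\B)$ elements; building it still takes $\Oo(\log n)$ rounds since $tn_\B = n\cdot\Polylog{n}$, and the AKS run on the embedded network terminates in $\Oo(\log(tn_\B)) = \Oo(\log n)$ rounds. The \textsc{WAVE} Merge of Section~\ref{sec:merge} inherits its runtime from Lemmas~\ref{lemma:merge_generalized} and~\ref{lemma:merge_wave}, which depend only on $\log n$ and on the height of $\B$, both of which remain $\Oo(\log n)$. The Update phase is local and unaffected.

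I would then check dynamic resource competitiveness. By Lemma~\ref{lemma:workload}, each phase already pays $\tilde{\Oo}(C(t_s-\Oo(\log n),t_s))$ work per cycle. In the generalized setting, $C(\cdot)$ still measures \emph{node} churn, but the number of elements affected per churn event grows by the factor $t = \Polylog{n}$, which is absorbed into the $\tilde{\Oo}(\cdot)$ notation; consequently the cycle remains $(\Oo(\log n),\Polylog{n})$-dynamic resource competitive. Combining these observations with Lemma~\ref{lemma:resilience_whp} for the $n^d$-round stability guarantee yields the statement of the corollary.

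The main obstacle I anticipate is the per-round, per-node communication bound: with $t$ elements per node and up to $\Oo(\log n)$ skip list neighbors per element, the naïve count of pointers and pending messages at a single physical node rises to $\Oo(t \log n)$, and during a merge wave or a committee takeover that figure could spike further. The careful step is therefore to verify that all of these events --- committee coverage after churn, forwarding in the \textsc{WAVE} traversal when multiple cohesive groups cross a node, and the parallel rewiring in deletion and creation --- introduce only additional $\Polylog{n}$ multiplicative factors and never super-polylogarithmic bursts at a single node; this boils down to arguing that the workload incurred by the $t$ logical copies hosted at any physical node can be pipelined within the $\Polylog{n}$ message budget, exploiting the fact that intra-node communication is free.
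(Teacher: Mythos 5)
Your proposal is correct and rests on the same core observation as the paper's own justification: all phase runtimes scale with the logarithm of the number of elements, and $\Oo(\log(t\cdot n)) = \Oo(\log n)$ for $t=\Polylog{n}$, while per-node work inflates only by $\Polylog{n}$ factors absorbed into the message budget and the $\tilde{\Oo}(\cdot)$ notation. In fact your phase-by-phase verification and the discussion of the per-node communication spike are more detailed than the paper's one-paragraph argument, which explicitly defers the handling of the increased per-node load to an ``implementation detail.''
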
 
The corollary follows from the fact that the adversary can assign the $t$ elements to each node in the network in such a way that each of their elements will be in a different cohesive group. Indeed, the first three phases of the maintenance cycle (i.e., Delete, Creation and Merge) will require $\Oo(\log (t\cdot n))$ rounds and for $t = \Polylog{n}$ we have that the running time of the maintenance cycle is $\Oo(\log (\Polylog{n}\cdot n)) = \Oo(\log \log^k n+\log n)=\Oo(\log n)$, where $k>0$.
Furthermore, to deal with larger $t$ the protocol must be slightly adjusted to take care of the higher amount of keys per node. Here, we do not delve into this problem since it boils down to an implementation detail that can be taken care of while implementing the protocol on real world P2P Networks. 
\section{Implications and Generalization of the Maintenance Framework}
We showed, for the first time, that complex pointer-based data structures can be maintained efficiently in the DNC model under near-linear adversarial churn. Our four-phase maintenance cycle provides the foundation for a general approach to maintaining overlays in highly dynamic environments. In Section~\ref{sec:applications} we propose some possible adaptations of our results to other distributed computing problems in the DNC model. In Section~\ref{sec:general_fw} we rephrase our results in terms of general framework for any distributed data structure that satisfies some specific properties. Finally, we propose a classification of distributed data structures in the DNC model that depends on the churn rate that they can tolerate.

\subsection{Applications of the Maintenance Framework}\label{sec:applications}
We describe several case studies illustrating how our maintenance protocol could allow other distributed data structures to be maintained in highly dynamic networks.
\paragraph*{\textbf{Maintaining expander graphs.}}
As an example of graph maintenance we briefly discuss how to build and maintain a (constant-degree) expander graph from an \emph{arbitrary} connected graph. We will outline how this can be accomplished using an $\Oo(\log n)$ rounds maintenance cycle, which is a consequence of this work and prior works. First, during the bootstrap phase, we build our overlay maintenance network in which each node $u$ has access to a set of well-mixed node IDs\footnote{Sampling from a set of well-mixed tokens is equivalent to sampling uniformly at random from the set $\{1,\dots, n\}$.}. Once such a network has been built, the bootstrap phase continues with the constant degree expander construction. Such a task can be accomplished using (for example) the \emph{Request a link, then Accept if Enough Space} (RAES) protocol by Becchetti et al.,~\cite{Becchetti_2020} with parameters $d\geq 1$ and $c\geq 2$. This technique builds a constant degree expander (in which all the nodes have degrees between $d$ and $c\cdot d$) in $\Oo(\log n)$ rounds and using overall $\Oo(n)$ messages \whp~Once the constant degree expander is constructed, the bootstrap phase ends and the adversary begins to exert its destructive power on the overlay network. Using our data structure maintenance protocol in parallel with the expander maintenance by Augustine et al.,~\cite{Augustine_2015_b} we can maintain a constant degree expander using $\Oo(\log n)$ rounds maintenance cycles in which each node sends and receives $\Oo(\Polylog{n})$ messages at each round.
\paragraph*{\textbf{Minimum Spanning Tree Maintenance.}} The minimum spanning tree (MST) problem can be solved  efficiently in the DNC model. In the MST problem we are given an arbitrary connected undirected graph $G$ with edge weights, and the goal is to find the MST of $G$. This can be accomplished 
using a $\Polylog{n}$ rounds bootstrap phase and maintenance cycles. During the bootstrap phase, we build the overlay churn resilient network in which each node has access to a set of well-mixed tokens and we build a constant-degree expander $H$ overlay on the given graph $G$ (the expander edges are added to $G$'s edge set). For this, we convert the expander (that is not addressable) into a butterfly network (that is addressable) which allows for efficient routing between any two nodes in $\Oo(\log n)$ rounds. This conversion can be accomplished using techniques of~\cite{Angluin_2005,Gmyr_2017,Gotte_2019,Gotte_2021,Augustine_2021}. All these protocols takes $\Polylog{n}$ rounds and $\tilde{\Oo}(n)$ messages to convert a constant-degree expander into an hypercubic (i.e., butterfly) network. Using the addressable butterfly on top of $G$, we can efficiently implement the Gallagher-Humblet-Spira (GHS) algorithm~\cite{Gallager_1983} as shown by Chatterjee et al.~\cite{Chatterjee_2020} to compute the MST of $G$ in $\Polylog{n}$ rounds and $\tilde{\Oo}(n)$ messages using routing algorithms for hypercubic networks~\cite{Valiant_1982,Upfal_1992}. After the bootstrap phase, we maintain (and update) the MST using our data structure maintenance protocol, the expander maintenance technique described above, and the MST computation techniques used in the bootstrap phase.

\paragraph*{\textbf{Maintaining Skip Graphs.}} Another implication is the construction and maintenance of other more sophisticated data structures like skip graphs~\cite{Aspnes_2007,Goodrich_2006,Jacob_2014}. All these data structures are not resilient to churns, and their maintenance protocols are not fast enough to recover the data structure after the failure of some nodes. Indeed, these protocols (see~\cite{Aspnes_2007,Goodrich_2006,Jacob_2014}) might need $\Oo(n)$ rounds to rebuild the skip-graph and they strictly require no additional churn to be able to fix the data structure. Our maintenance protocol overcomes these problems and provides a $\Oo(\log n)$ rounds skip graph constructing protocol and a technique able to repair them in $\Oo(\log n)$ in the presence of an almost linear churn of $\Oo(n/\log n)$ \emph{at every round}. Moreover, our maintenance mechanism allows for the users to query the data structure while being maintained.

\paragraph*{Beyond Distributed Computing.} We believe that some ideas from our results could be used in the centralized batch parallel setting to quickly insert batches of new elements in skip lists (or skip graphs) data structures (see e.g.,\cite{Tseng_2019}) and in the fully dynamic graph algorithms settings (see for example the survey~\cite{Hanauer_2022}) to perform fast updates of fully dynamic data structures. Indeed, in principle (provided the right amount of parallelism), our deletion and merge algorithms could be implemented in a parallel (centralized) setting and used to speed-up all kinds of computations involving these specific data structures.

\subsection{A General Framework for Churn-Resilient Structures}\label{sec:general_fw}
On a broader picture, the above applications can be unified under a single general framework for maintaining pointer-based data structures under adversarial churn. Given a data structure $\mathcal{D}$, we can ``abstract away'' the ideas in Algorithm~\ref{algo:overview} and obtain the following \emph{abstract maintenance cycle} for a generic data structure $\mathcal{D}$:
\begin{enumerate}
	\item \textbf{Delete Phase:} \label{step:delete}
	Identify and remove corrupted, outdated, or disconnected regions of $\mathcal{D}$. Deletion ensures that inconsistencies caused by churn do not propagate through the structure.
	
	\item \textbf{Creation Phase:} \label{step:create}
	Organize the set of newly arrived nodes into a provisional structure $\mathcal{B}$ (the buffer data structure). The goal is to prepare these nodes for integration, typically by arranging them into a sorted or partially structured form.
	
	\item \textbf{Merge Phase:} \label{step:merge}
	Integrate the buffer network $\mathcal{B}$ into the main structure $\mathcal{D}$ using a distributed merge protocol. This phase reconstructs the structure while respecting existing invariants.
	
	\item \textbf{Update Phase:} \label{step:update}
	Perform any necessary local corrections, including pointer rebalancing, level adjustments, or redundancy restoration, to finalize the integration.
\end{enumerate}
Provided that we have $\Oo(T)$-round distributed algorithms for each phase of the above abstract maintenance cycle, we can maintain $\mathcal{D}$ against an adversarial churn rate of $\Oo(n/T)$ per round. 
\begin{theorem}
	Let $\mathcal{D}$ be a distributed pointer-based data structure maintained using our four-phase cycle, and let $T$ be the maximum number of rounds needed for any phase. Then $\mathcal{D}$ can tolerate an adversarial churn rate of up to $\Oo(n/T)$ nodes per round, while preserving global correctness for at least $n^c$ rounds \whp, for any fixed constant $c > 0$.
\end{theorem}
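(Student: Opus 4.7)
The plan is to lift the argument used for skip lists in Section~\ref{sec:architecture} to a generic pointer-based data structure $\mathcal{D}$, treating the four phases as black-box primitives whose runtime is $\Oo(T)$. The main skeleton has three ingredients: (i) a churn-resilient overlay that supports covering for departed nodes, (ii) a bound on the total churn experienced during one maintenance cycle, and (iii) a high-probability success analysis that is amplified by a union bound over $n^c$ rounds.

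First I would set up the overlay. The Spartan construction in Section~\ref{sec:churn_resilient_network} together with the reshaping protocol provides, for any cycle length $T$, a committee-based overlay whose committees remain of size $\Theta(\log n)$ with probability $1-1/n^d$ under a per-round churn rate of $\Oo(n/T)$; the argument is identical to Lemma~\ref{lemma:taking_over}, since what matters for the covering mechanism is that the \emph{total} churn over the $\Oo(T)$ rounds of one cycle stays below a constant fraction of $n$. This guarantees that while a cycle is in progress, every node of $\mathcal{D}$ that has been removed by the adversary is temporarily represented by its committee in $\Oo(1)$ rounds after its departure, so queries on $\mathcal{D}$ can still be routed through.

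Next I would analyze a single maintenance cycle. By hypothesis, each of the four phases (Delete, Create, Merge, Update) admits a distributed protocol running in $\Oo(T)$ rounds. Because they are executed sequentially, the cycle length is $\Oo(T)$, and the total churn the adversary can inflict during a cycle is $\Oo(n/T)\cdot \Oo(T) = \Oo(n)$. The deletion phase removes from the clean copy all covered nodes; the creation phase assembles the $\Oo(n)$ newly joined nodes into the buffer $\mathcal{B}$; the merge phase integrates $\mathcal{B}$ into the clean copy of $\mathcal{D}$; and the update phase promotes the clean copy to the live copy. The three-network (live/clean/buffer) architecture is preserved in the abstract setting exactly as in Section~\ref{sec:architecture}, so correctness of one cycle follows verbatim from Lemmas~\ref{lemma:delete}, \ref{lemma:creation}, \ref{lemma:merge_wave}, and \ref{lemma:duplicate}, each taken as an assumed property of the corresponding phase. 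By a union bound over the $\Oo(1)$ phases, a single cycle succeeds with probability at least $1-\Oo(1)/n^d$.

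Finally I would argue the long-term guarantee. The number of cycles executed within a window of $n^c$ rounds is $\Oo(n^c/T)\le n^c$. Fixing $d>c$ when choosing the constants hidden in the Spartan analysis and the high-probability arguments of each phase, a union bound yields that \emph{all} cycles within an $n^c$-round window succeed with probability at least $1-n^c/n^d=1-1/n^{d-c}$, which is $1-o(1)$, so the data structure $\mathcal{D}$ is maintained correctly for at least $n^c$ rounds \whp. The only genuinely delicate step is verifying that the overlay and the covering mechanism scale to an arbitrary $T$, since for very small $T$ one needs committees that can be recomposed at a rate commensurate with $n/T$; this is where one has to be careful that the hypotheses of Theorem~6 of \cite{Augustine_conference_18} still apply, but since committees have size $\Theta(\log n)$ and the per-cycle churn is still a constant fraction of $n$, the argument carries through unchanged.
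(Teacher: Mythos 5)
Your proposal is correct and follows essentially the same route as the paper: the paper's own proof is a one-line reduction to the proof of Theorem~\ref{thm:overall_result}, substituting the $\Oo(T)$-round phase algorithms, and you reconstruct exactly that argument (committee-based covering, per-cycle churn bound of $\Oo(n)$, per-cycle success probability, amplification to $n^c$ rounds). The only cosmetic difference is that you amplify via a union bound over cycles where the paper's Lemma~\ref{lemma:resilience_whp} uses a geometric-random-variable computation, and you explicitly flag the scaling of the overlay to small $T$ — a subtlety the paper leaves implicit — which is a point in your favor rather than a gap.
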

The proof of the theorem follows the same steps as the proof of Theorem~\ref{thm:overall_result} by considering the $\Oo(T)$-round algorithms for the deletion, buffer creation, merge, and update phases for the specific data structure $\mathcal{D}$. 

\subparagraph*{Classes of Churn-Resilient Data Structures.} Our abstract maintenance cycle allows to define classes of churn-resilient data structures in the DNC model, where each class is characterized by the churn rate per round (thus the maintenance cycle runtime) that the data structure can tolerate while still supporting efficient update and query operations. Formally, let $t$ be the maintenance cycle run time function. We say that a distributed data structure $\mathcal{D}$ is \textbf{$t$-maintainable} if it can be successfully maintained by a $t$-round maintenance cycle. 
\begin{observation}
 Under the above general framework, every $t$-maintainable distributed data structure tolerates an adversarial churn rate of $\Oo(n/t)$ per round.
\end{observation}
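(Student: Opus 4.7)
The plan is to argue that the observation follows almost directly from the definition of $t$-maintainability combined with the accumulation bound on churn over a cycle. First, I would note that a $t$-maintainable data structure admits, by definition, a maintenance cycle whose Delete, Creation, Merge, and Update phases each complete in $\Oo(t)$ rounds. Under an adversarial churn rate of $\Oo(n/t)$ per round, the total number of joins and leaves during a single cycle telescopes to $\Oo(n/t) \cdot \Oo(t) = \Oo(n)$, so the Buffer Creation phase receives at most $\Oo(n)$ newly arrived peers and the Delete phase must retire at most $\Oo(n)$ patched-up nodes. These are precisely the input sizes that the four phases are assumed to handle within $\Oo(t)$ rounds, so the cycle closes consistently.

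Second, I would invoke Lemma~\ref{lemma:taking_over} to handle continuity during the cycle: each removed node is covered by its Spartan committee in $\Oo(1)$ rounds, and because Spartan tolerates exactly the assumed churn rate, with high probability every committee retains $\Omega(\log n)$ members throughout the cycle. Thus the live view of $\mathcal{D}$ remains queryable while the clean copy is refreshed in the background, mirroring the skip-list argument used in the proof of Theorem~\ref{thm:overall_result}. The dynamic resource-competitiveness bound carries over by the same accounting as in Lemma~\ref{lemma:workload}, since per-cycle work is $\tilde\Oo(n) = \tilde\Oo(C(t_s-\Oo(t), t_e))$.

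Finally, I would chain the probabilistic correctness of the four phases: each succeeds with probability at least $1 - 1/n^d$ by hypothesis, a union bound over the $\Oo(1)$ phases per cycle preserves polynomially high success probability per cycle, and then the geometric tail argument of Lemma~\ref{lemma:resilience_whp} lifts this to maintenance for at least $n^c$ rounds \whp, for any constant $c \geq 1$.

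The main obstacle, I expect, is not arithmetic but pinning down exactly what ``under the above general framework'' requires of the ambient overlay. The abstract cycle implicitly assumes a Spartan-like committee network that itself survives a churn rate of $\Oo(n/t)$; for $t = \Theta(\log n)$ this is the theorem already proved in Section~\ref{sec:churn_resilient_network}, but for general $t$ a parameterized overlay is needed whose committee sizes (and hence local degree) scale with $t$. I would therefore state the observation as a standing assumption on $\mathcal{D}$'s supporting overlay, and remark that whenever such a compatible overlay exists, the churn-tolerance bound $\Oo(n/t)$ follows unconditionally from the preceding two paragraphs.
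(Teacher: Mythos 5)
Your proposal is correct and follows essentially the same route as the paper: the paper gives no separate proof of this observation, treating it as an immediate consequence of the definition of $t$-maintainability together with the preceding general theorem, whose proof in turn is stated to mirror that of Theorem~1 (churn accumulating to $\Oo(n)$ per $\Oo(t)$-round cycle, committee coverage for continuity, and the geometric-tail argument for polynomially many rounds). Your closing caveat about needing a Spartan-like overlay parameterized for general $t$ is apt — the paper silently assumes this compatibility rather than establishing it — but it does not change the fact that your argument matches the paper's intended reasoning.
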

In this paper, we showed that Skip-Lists (Section~\ref{sec:architecture}) and Skip-Graphs (Section~\ref{sec:generalize_ds}) are both $\Oo(\log n)$-maintainable and that are robust against and oblivious adversarial churn rate of $\Oo(n/\log n)$ per round. We conjecture that our bounds are tight, in the sense that $\log n$ is also a lower bound for the maintenance cycles for these data structures. That is because, in order to beat the $\Omega(\log n)$ barrier we would need to solve distributed sorting faster than in $\log n$ rounds while maintaining $(\alpha,\beta)$-dynamic resource competitiveness. 

This classification also raises a number of intriguing open problems. For example, it remains unclear whether there exist distributed data structures in the DNC model that are $\log\log n$-maintainable, and more generally, how to characterize the precise boundaries between maintainability classes. A key challenge is to establish lower bounds on the maintenance cycle time for fundamental distributed data structures. Do entirely new data structures need to be designed to exploit faster maintenance cycles?  We believe that our formulation of maintainability classes in the DNC model opens up a rich landscape for further exploration.

\section{Related Works}
There has been a significant prior work in designing peer-to-peer (P2P) networks that 
can be efficiently maintained (e.g. see ~\cite{Aspnes_2007,Awerbuch_2004,Bhargava_2004,Rowstron_2001,Malkhi_2002,Naor_2007,Stoica_2001}). A standard approach to design a distributed data structure that is provably robust to a large number of faults is to define an underlying network with good structural properties (e.g., expansion, low diameter, etc.) and efficient distributed algorithms able to quickly restore the network and data structure after a certain amount of nodes or edges have been adversarially (or randomly) removed (e.g., see \cite{Pandurangan_2001,Kuhn_2010,Jacobs_2013,Becchetti_2023}). Most prior works develop algorithms that will work under the assumption that the network will eventually stabilize and stop changing or that an overall (somehow) limited amount of faults can occur. 

Distributed Hash Tables (DHTs) (see for example \cite{Rowstron_2001_a,Rowstron_2001_b,Zhao_2002,Stoica_2003,Kaashoe_2003,Manku_2003,Ganesan_2004}) are perhaps the most common distributed data structures used in P2P networks. A DHT scheme~\cite{Lua_2005} creates a fully decentralized index that maps data items to peers and allows a peer to search for an item efficiently without performing flooding. Although DHT schemes have excellent congestion properties, these structures do not allow for non-trivial queries on ordered data such as nearest-neighbor searching, string prefix searching, or range queries. 

To this end, Pugh~\cite{Pugh_1990} in the $90$'s introduced the skip list, a randomized balanced tree data structure that allows for quickly searching ordered data in a network. Skip lists have been extensively studied~\cite{Papadakis_2090,Devroye_1992,Kirschenhofer_1994,Kirschenhofer_1995} and used to speed up computation in centralized, (batch) parallel and distributed settings~\cite{Gabarro_1994,Gabarro_1997,Pugh_1998,ShavitL_2000,Sundell_2004,Fraser_2004,Herlihy_2006,Tseng_2019}. However, classical skip lists especially when implemented on a distributed system do not deal with the chance of having failures due to peers (elements) abruptly leaving the network (a common feature in P2P networks). 

With the intent of overcoming such a problem, Aspnes and Shah~\cite{Aspnes_2007} presented a distributed data structure, called a \emph{skip graph} for searching ordered data in a P2P network, based on the \emph{skip list} data structure~\cite{Pugh_1990}. Surprisingly, in the same year, Harvey et al.~\cite{Harvey_2003} independently presented a similar data structure, which they called SkipNet. Subsequently, Aspnes and Wieder~\cite{Aspnes_2009} showed that skip graphs have $\Omega(1)$ expansion with high probability (\whp). 
Although skip graphs enjoy such resilience property, the only way to fix the distributed data structure after some faults is either (i) use a repair mechanism that works only in the absence of new failures in the network and has a linear worst-case running time\footnote{In the size of the skip graph.}~\cite{Aspnes_2007} or (ii) rebuild the skip graph from scratch. Goodrich et al.~\cite{Goodrich_2006} proposed the \emph{rainbow skip graph}, an augmented skip graph that enjoys lower congestion than the skip graph. Moreover, the data structure came with a periodic failure recovery mechanism that can restore the distributed data structure even if each node fails independently with constant probability. More precisely, if $k$ nodes have randomly failed, their repair mechanism uses $\Oo(\min(n,k\log n))$ messages over $\Oo(\log^2 n)$ rounds of message passing to adjust the distributed data structure. In the spirit of dealing with an efficient repairing mechanism, Jacob el at.~\cite{Jacob_2014} introduced \skippl, a self-stabilizing protocol\footnote{We refer to~\cite{Dolev_2004} for an in-depth description of self-stabilizing algorithms.} that converges to an augmented skip graph structure\footnote{It is augmented in the sense that it can be checked \emph{locally} for the correct structure.} in $\Oo(\log^2 n)$ rounds \whp, for any given initial graph configuration in which the nodes are weakly connected. The protocol works under the assumption that starting from the initial graph until the convergence to the target topology, no external topological changes happen to the network. Moreover, once the desired configuration is reached, \skippl can handle a \emph{single} join or leave event (i.e., a new node connects to an arbitrary node in the system or a node leaves without prior notice) with a polylogarithmic number of rounds and messages. While it is shown that these data structures can tolerate node failures, there is no clarity on how to handle persistent churn wherein nodes can continuously join and leave, which is an inherent feature of P2P networks. Moreover, all the proposed repairing mechanisms ~\cite{Aspnes_2007,Aspnes_2009,Goodrich_2006,Jacob_2014}  will not work in a highly dynamic setting with \emph{large, continuous, adversarial} churn (controlled by a powerful adversary that has full control of the network topology, including full knowledge and control of what nodes join and leave and at what time and has unlimited computational power). 
\section{Concluding remarks and discussion}
In this work we proposed the first churn resilient skip list that can tolerate a heavy adversarial churn rate of $\Oo(n/\log n)$ nodes per round. The data structure can be seen as a four networks architecture in which each network plays a specific role in making the skip list resilient to churns and keeping it continuously updated. Moreover, we provided efficient $\Oo(\log n)$ rounds resource competitive algorithms to (i) delete a batch of elements from a skip list (ii) create a new skip list and, (iii) merge together two skip lists. This last result is the first algorithm that can merge two skip lists (as well as a skip list and a batch of new nodes) in $\Oo(\log n)$ rounds \whp. We point out that these algorithms can be easily adapted to work on skip graphs~\cite{Aspnes_2007,Goodrich_2006,Jacob_2014}. 

In a broader sense, our technique is general and can be seen as a framework to 
maintain any kind of distributed data structure despite heavy churn rate. The only requirement is to devise efficient \emph{delete, buffer creation, merge}, and \emph{update} algorithms for the designated data structure. Furthermore, this allows us to define complexity classes for the maintenance of distributed data structure in the Dynamic Networks with Churn Model. Indeed, in this work we showed that skip list and skip graphs belong to the class of data structure that can tolerate a churn rate of $\Oo(n/\log n)$ per round.

An additional contribution of our work is the improvement on the $\Oo(\log^3 n)$ rounds state-of-the-art technique for sorting in the \nccz~model. We show how to sort $n$ elements (despite a high churn rate) in $\Oo(\log n)$ rounds using results from sorting network theory. However, given the impracticability of the AKS sorting network, our result is purely theoretical. In practice, it could be easily implemented using Batcher's network instead of the AKS one. This change would slow down the bootstrap phase and the maintenance cycle to $\Oo(\log^2 n)$ rounds. The churn rate that we can tolerate will only drop down to $O(n/\log^2 n)$.

Finally, given the simplicity of our approach, we believe that our algorithms could be used as building blocks for other non-trivial distributed computations in dynamic networks.



\bibliography{references}

\appendix
\begin{center}
	\LARGE{\textbf{Appendix}}
\end{center} 
\section{Useful Mathematical Tools}\label{sec:appendix}
In Appendix~\ref{apx:reshaping_protocol}, we use the Chernoff bound for the Poisson trials
\begin{theorem}[Theorem 4.4~\cite{Mitzenmacher_2017}]\label{thm:chernoff_posson_trials}
Let $X$ be a sum of $n$ independent Poisson trials $X_i$ such that $\prob{X_i = 1}=p_i$, for $i\in [n]$. Then,
\begin{align}
   & \prob{X\geq R}\leq 2^{-R}&\text{for }R\geq 6\expect{X}&
\end{align}
\end{theorem}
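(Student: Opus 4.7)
The plan is to apply the standard Chernoff-bound argument via the moment generating function (MGF), specialized to the regime $R \geq 6\,\mathbb{E}[X]$. The first step is to use Markov's inequality on $e^{tX}$ for $t > 0$, which gives $\Pr[X \geq R] = \Pr[e^{tX} \geq e^{tR}] \leq e^{-tR}\,\mathbb{E}[e^{tX}]$.

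The second step is to bound the MGF. Since $X = \sum_i X_i$ is a sum of independent Poisson trials, the MGF factorizes, and each factor satisfies the well-known inequality $\mathbb{E}[e^{tX_i}] = 1 + p_i(e^t - 1) \leq \exp(p_i(e^t - 1))$. Letting $\mu = \mathbb{E}[X] = \sum_i p_i$, multiplying over $i$ yields $\mathbb{E}[e^{tX}] \leq \exp(\mu(e^t - 1))$. Substituting back produces $\Pr[X \geq R] \leq \exp(\mu(e^t - 1) - tR)$. Writing $R = (1+\delta)\mu$ and taking the classical optimizing choice $t = \ln(1+\delta)$ gives the standard closed form $\Pr[X \geq R] \leq \left(e^\delta / (1+\delta)^{1+\delta}\right)^\mu$, which after rearrangement is equivalent to $e^{-\mu}\bigl(e/(1+\delta)\bigr)^R$.

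The final step is to plug in the hypothesis $R \geq 6\mu$, i.e.\ $1+\delta \geq 6$. Then $e/(1+\delta) \leq e/6 < 1/2$, and since $e^{-\mu} \leq 1$, the whole bound collapses to $(1/2)^R = 2^{-R}$, which is exactly the required inequality.

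There is no genuine obstacle here; the entire derivation is a routine specialization of the textbook Chernoff inequality. The only mildly delicate point is verifying that the constant $6$ in the hypothesis is large enough to push $e/(1+\delta)$ strictly below $1/2$, which is what produces the particularly clean base-$2$ tail $2^{-R}$ rather than a more awkward constant base. Any constant $c$ satisfying $e/c \leq 1/2$ would work; the value $6$ is chosen for simplicity and because it gives room to drop the factor $e^{-\mu}$ without further effort.
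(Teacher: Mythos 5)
Your derivation is correct: the MGF bound, the optimization $t=\ln(1+\delta)$, the rewriting as $e^{-\mu}\left(e/(1+\delta)\right)^{R}$, and the final observation that $R\geq 6\mu$ forces $e/(1+\delta)\leq e/6<1/2$ are all sound. The paper itself gives no proof of this statement --- it is quoted verbatim as Theorem 4.4 of Mitzenmacher--Upfal --- and your argument is essentially the textbook derivation being cited, so there is nothing to compare beyond noting agreement.
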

Moreover, to provide high confidence bounds for the properties of randomized skip list we use the following tools.
\begin{theorem}[Right tail]
	Let $X$ be a sum of $n$ independent random variables $X_i$ such that $X_i\in [0,1]$. Let $\expect{X} = \mu$. Then,
	\begin{align}
		&\prob{X\geq k}\leq \left(\frac{\mu}{k}\right)^k\left(\frac{n-\mu}{n-k}\right)^{n-k}\leq \left(\frac{\mu}{k}\right)^k e^{k-\mu} &\text{for }k>\mu &\\
		&\prob{X\geq (1+\varepsilon) \mu}\leq \left(\frac{e^\varepsilon}{(1+\varepsilon)^{1+\varepsilon}} \right)^\mu  &\text{for }\varepsilon \geq 0&
	\end{align}
\end{theorem}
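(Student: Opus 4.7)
The plan is to follow the standard exponential-moment (Chernoff) approach, with two separate optimizations corresponding to the two successive bounds in the statement. Throughout, fix $t>0$ (to be optimized later) and write $p_i = \expect{X_i}$, so $\sum_i p_i = \mu$.

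First I would apply Markov's inequality to $e^{tX}$: since $X\ge k$ iff $e^{tX}\ge e^{tk}$, we get $\prob{X\ge k}\le e^{-tk}\expect{e^{tX}}$. By independence $\expect{e^{tX}}=\prod_i \expect{e^{tX_i}}$. Since $X_i\in[0,1]$ and the function $x\mapsto e^{tx}$ is convex on $[0,1]$, the chord bound gives $e^{tX_i}\le 1 - X_i + X_i e^t$, and taking expectations yields $\expect{e^{tX_i}}\le 1+p_i(e^t-1)$.

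For the tighter (binomial) bound, I apply AM--GM to the positive factors: $\prod_i\bigl(1+p_i(e^t-1)\bigr)\le \Bigl(\frac{1}{n}\sum_i(1+p_i(e^t-1))\Bigr)^n = \bigl(1+\tfrac{\mu}{n}(e^t-1)\bigr)^n$. Combining, $\prob{X\ge k}\le e^{-tk}\bigl(1-\tfrac{\mu}{n}+\tfrac{\mu}{n}e^t\bigr)^n$. Setting the derivative of the logarithm with respect to $t$ to zero gives the minimizer $e^t=\frac{k(n-\mu)}{\mu(n-k)}$, which is $>1$ exactly because $k>\mu$. Substituting this value in, the quantity $1-\mu/n+(\mu/n)e^t$ collapses to $(n-\mu)/(n-k)$, and $e^{-tk}=\bigl(\mu(n-k)/(k(n-\mu))\bigr)^k$; after canceling one factor of $\bigl((n-k)/(n-\mu)\bigr)^k$, we obtain
\[
\prob{X\ge k}\le \left(\frac{\mu}{k}\right)^k\left(\frac{n-\mu}{n-k}\right)^{n-k},
\]
which is the first inequality. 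To deduce the second inequality it suffices to show $\bigl(\tfrac{n-\mu}{n-k}\bigr)^{n-k}\le e^{k-\mu}$; taking logarithms this is $(n-k)\ln\!\bigl(1+\tfrac{k-\mu}{n-k}\bigr)\le k-\mu$, which follows from the elementary inequality $\ln(1+x)\le x$ for $x>-1$ applied to $x=(k-\mu)/(n-k)>0$.

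Finally, for the $\varepsilon$-form I substitute $k=(1+\varepsilon)\mu$ into the weaker bound just established: $(\mu/k)^k e^{k-\mu}=(1+\varepsilon)^{-(1+\varepsilon)\mu}e^{\varepsilon\mu}=\bigl(e^\varepsilon/(1+\varepsilon)^{1+\varepsilon}\bigr)^{\mu}$, which is the stated bound (valid also at $\varepsilon=0$ as $1\le 1$). I do not anticipate any serious obstacle; the only slightly delicate step is justifying the use of AM--GM (the factors $1+p_i(e^t-1)$ must be nonnegative, which holds because $t>0$ and $p_i\in[0,1]$) and verifying that the critical point $e^t=k(n-\mu)/(\mu(n-k))$ is indeed a minimum, which follows because the exponent $-tk+n\ln(1-\mu/n+(\mu/n)e^t)$ is convex in $t$.
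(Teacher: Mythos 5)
Your proof is correct. The paper states this theorem in its appendix as a standard tool (the classical Chernoff--Hoeffding bound) and gives no proof of its own, so there is nothing to compare against; your argument is the textbook exponential-moment derivation, and every step checks out: Markov's inequality applied to $e^{tX}$, the chord (convexity) bound $e^{tx}\le 1-x+xe^t$ on $[0,1]$, the AM--GM reduction to the binomial-type bound, the optimization $e^t=\frac{k(n-\mu)}{\mu(n-k)}$, the simplification to $\bigl(\tfrac{\mu}{k}\bigr)^k\bigl(\tfrac{n-\mu}{n-k}\bigr)^{n-k}$, the relaxation via $\ln(1+x)\le x$, and the substitution $k=(1+\varepsilon)\mu$. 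The only points you leave implicit are trivial edge cases: the first bound tacitly assumes $k\le n$ (for $k>n$ the probability is $0$ since $X\le n$ almost surely, and $k=n$ is handled by the usual $0^0=1$ convention), and similarly the $\varepsilon$-form for $(1+\varepsilon)\mu>n$ holds vacuously; neither affects correctness.
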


\begin{theorem}[Left tail]
	Let $X$ be a sum of $n$ independent random variables $X_i$ such that $X_i\in [0,1]$. Let $\expect{X} = \mu$. Then,
	\begin{align}
		&\prob{X\leq k}\leq \left(\frac{\mu}{k}\right)^k\left(\frac{n-\mu}{n-k}\right)^{n-k}\leq \left(\frac{\mu}{k}\right)^k e^{k-\mu} &\text{for }k<\mu &\\
		&\prob{X\leq (1-\varepsilon) \mu}\leq e^{-\frac{\mu\varepsilon^2}{2}}  &\text{for }\varepsilon \in (0,1)&
	\end{align}
\end{theorem}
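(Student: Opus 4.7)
My plan is the standard exponential moment method (Bernstein--Chernoff trick). First I would fix a parameter $t > 0$ (to be optimized at the end) and apply Markov's inequality to the positive random variable $e^{-tX}$, rewriting $\prob{X \leq k} = \prob{e^{-tX} \geq e^{-tk}} \leq e^{tk}\,\expect{e^{-tX}}$. Independence of the $X_i$'s then lets me factor $\expect{e^{-tX}} = \prod_{i=1}^n \expect{e^{-tX_i}}$, so the whole problem reduces to bounding the moment generating function of each bounded variable $X_i \in [0,1]$.

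For each factor, I would exploit the convexity of $x \mapsto e^{-tx}$ on $[0,1]$: the graph on that interval lies below the chord joining $(0,1)$ and $(1,e^{-t})$, so $e^{-tX_i} \leq 1 - (1-e^{-t})X_i$ pointwise. Taking expectations yields $\expect{e^{-tX_i}} \leq 1 - p_i(1-e^{-t}) \leq \exp(-p_i(1-e^{-t}))$, where $p_i = \expect{X_i}$. Multiplying and using $\sum_i p_i = \mu$ gives the clean upper bound $\expect{e^{-tX}} \leq \exp(-\mu(1-e^{-t}))$, and therefore $\prob{X \leq k} \leq \exp(tk - \mu(1-e^{-t}))$. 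Optimizing the right-hand side over $t > 0$ -- setting the derivative to zero produces the critical point $e^{-t} = k/\mu$, which is positive because we are in the regime $k < \mu$ -- and substituting back collapses the exponent to exactly $k\ln(\mu/k) + (k-\mu)$, giving the second (looser) inequality $\prob{X \leq k} \leq (\mu/k)^k e^{k-\mu}$.

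For the sharper binomial-type bound $(\mu/k)^k ((n-\mu)/(n-k))^{n-k}$, I would redo the per-factor step without the final exponential relaxation: from $1 - p_i(1-e^{-t}) = (1-p_i) + p_i e^{-t}$ one recognizes that the maximum of $\prod_i((1-p_i)+p_i e^{-t})$ subject to $\sum p_i = \mu$, $p_i \in [0,1]$ is attained (by AM--GM / convexity of $\log(1-p+pe^{-t})$ in $p$) when all $p_i$ equal $\mu/n$, yielding $\expect{e^{-tX}} \leq (1 - (\mu/n)(1-e^{-t}))^n$. Plugging this into Markov and optimizing $t$ at the same critical value $e^{-t} = k/\mu$ reproduces the tight binomial form after simple algebra. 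Finally, the relative-error statement $\prob{X \leq (1-\varepsilon)\mu} \leq e^{-\mu\varepsilon^2/2}$ follows by substituting $k = (1-\varepsilon)\mu$ into the bound $(\mu/k)^k e^{k-\mu}$ and simplifying: the exponent becomes $\mu[-(1-\varepsilon)\ln(1-\varepsilon) - \varepsilon]$, and a Taylor expansion of $\ln(1-\varepsilon)$ combined with the standard inequality $-(1-\varepsilon)\ln(1-\varepsilon) - \varepsilon \leq -\varepsilon^2/2$ for $\varepsilon \in (0,1)$ finishes the proof.

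The only step that requires genuine care is the sign-check at the optimization of $t$: one must verify that $t = \ln(\mu/k) > 0$ under the hypothesis $k < \mu$, which is immediate. Everything else is routine manipulation; the proof is essentially identical to the classical Chernoff argument for the lower tail of sums of bounded independent variables, and no feature of the paper's dynamic-network setting enters.
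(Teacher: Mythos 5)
The paper does not actually prove this statement: it appears in Appendix~A (``Useful Mathematical Tools'') as a standard Chernoff--Hoeffding bound quoted without proof, so there is no in-paper argument to compare against. Judged on its own, your outline follows the classical exponential-moment method and is almost entirely correct: Markov's inequality applied to $e^{-tX}$, the chord bound $e^{-tx}\leq 1-(1-e^{-t})x$ for $x\in[0,1]$ (which correctly handles general $[0,1]$-valued rather than only Bernoulli variables), the factorization by independence, the concavity/Jensen step showing the product $\prod_i\bigl((1-p_i)+p_ie^{-t}\bigr)$ is maximized when all $p_i=\mu/n$, and the final substitution $k=(1-\varepsilon)\mu$ together with $-(1-\varepsilon)\ln(1-\varepsilon)-\varepsilon\leq-\varepsilon^2/2$ are all sound.

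There is, however, one concrete step that fails as written: for the sharper binomial-form bound you claim that ``optimizing $t$ at the same critical value $e^{-t}=k/\mu$ reproduces the tight binomial form.'' It does not. Plugging $e^{-t}=k/\mu$ into $e^{tk}\bigl(1-\tfrac{\mu}{n}(1-e^{-t})\bigr)^n$ yields $\left(\tfrac{\mu}{k}\right)^k\left(\tfrac{n-\mu+k}{n}\right)^{n}$, which is a valid but strictly weaker bound than $\left(\tfrac{\mu}{k}\right)^k\left(\tfrac{n-\mu}{n-k}\right)^{n-k}$ (e.g.\ $n=2$, $\mu=1$, $k=1/2$ gives $0.5625$ versus $0.5443$). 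The function being minimized has changed, so the optimizer changes too: setting the derivative of $q^{-k}\bigl(1-\tfrac{\mu}{n}(1-q)\bigr)^n$ to zero gives $q=e^{-t}=\frac{k(n-\mu)}{\mu(n-k)}$, for which $1-\tfrac{\mu}{n}(1-q)=\tfrac{n-\mu}{n-k}$, and substituting back does produce exactly $\left(\tfrac{\mu}{k}\right)^k\left(\tfrac{n-\mu}{n-k}\right)^{n-k}$. This is a local and easily repaired slip, but as stated that step of the proposal is incorrect; the looser bound $\left(\tfrac{\mu}{k}\right)^ke^{k-\mu}$ and the $e^{-\mu\varepsilon^2/2}$ corollary are unaffected since they rest only on the first optimization, which you carried out correctly.
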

Furthermore, a useful result about the upper tail value of a negative binomial distribution to a lower tail value of a suitably defined binomial distribution that allows us to use all the results for lower tail estimates of the binomial distribution to derive upper tails estimate for negative binomial distribution. This a very nice result because finding bounds for the right tail of a negative binomial distribution directly from its definition is very difficult.
\begin{theorem}[See Chapter $4$ in~\cite{Ross_1976}]\label{thm:neg_binom}
	Let $X$ be a negative binomial random variable with parameters $r$ and $p$. Then, $\prob{X>n} = \prob{Y<r}$ where $Y$ is a binomial random variable with parameters $n$ and $p$.
\end{theorem}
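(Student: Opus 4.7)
The plan is to prove the identity by constructing a natural coupling: both $X$ and $Y$ can be realized from a single infinite sequence of independent Bernoulli$(p)$ trials $Z_1, Z_2, \dots$, and the equality of the two tail probabilities then follows from an equality of events, not just of probabilities.

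First I would fix the convention and set up the coupling. Taking the convention used in the excerpt (so that $X$ counts the number of trials required to observe the $r$-th success), define $X = \min\{k \ge r : \sum_{i=1}^k Z_i = r\}$, which is negative binomial with parameters $r$ and $p$. On the same probability space define $Y = \sum_{i=1}^n Z_i$, which is binomial with parameters $n$ and $p$. The coupling is the content; after this, the proof is essentially a tautology.

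Next I would argue the equality of events $\{X > n\} = \{Y < r\}$. The forward inclusion: if $X > n$, then the $r$-th success has not occurred within the first $n$ trials, so the number of successes in those $n$ trials is at most $r-1$, i.e.\ $Y < r$. The reverse inclusion: if $Y < r$, then the first $n$ trials produced fewer than $r$ successes, so the $r$-th success must occur at some trial index strictly greater than $n$, i.e.\ $X > n$. Taking probabilities of the identical events yields $\Pr(X>n) = \Pr(Y<r)$.

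There is essentially no technical obstacle here; the only subtle point to flag is the convention for ``negative binomial,'' since some sources instead define $X$ as the number of failures before the $r$-th success. If that convention were in force, one would restate the claim as $\Pr(X > n-r) = \Pr(Y < r)$ and the same coupling argument would go through with a trivial index shift. I would include a one-sentence remark to this effect so that the identity is applied unambiguously in Appendix~\ref{apx:skip_list}, where it is used to transfer lower-tail Chernoff bounds for the binomial into upper-tail bounds for the negative binomial.
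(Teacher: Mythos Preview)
Your coupling argument is correct and is in fact the standard textbook proof of this identity. Note, however, that the paper does not give its own proof of this statement: it is stated in the appendix as a cited result from Ross~\cite{Ross_1976} and used as a black box, so there is no in-paper proof to compare against.
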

\section{Useful Skip list properties}\label{apx:skip_list}
\begin{lemma}\label{lemma:exp_hight}
	The height of a $n$-element skip list is $\Oo(\log n)$ \whp
\end{lemma}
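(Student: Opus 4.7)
The plan is to bound the maximum height via a straightforward union bound over the $n$ elements, using the fact that each level is reached by an element through an independent sequence of biased coin flips.

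First, I would recall the standard construction: each element $u$ is assigned a maximum level $\ell_{\max}(u)$ by flipping a $p$-biased coin (for some fixed $p\in(0,1)$, e.g.\ $p=1/2$) and promoting $u$ to the next level as long as the coin shows heads. Consequently, for any integer $k\geq 0$,
\[
\prob{\ell_{\max}(u)\geq k} \;=\; p^{k}.
\]
This is independent across the $n$ elements.

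Next, I would fix a constant $c>0$ (to be chosen in terms of the desired high-probability exponent $d$) and set $k=\lceil c\log n\rceil$. A union bound over the $n$ elements gives
\[
\prob{\ell_{\max}(\Ll)\geq k} \;\leq\; n\cdot p^{k} \;=\; n^{\,1+c\log p}.
\]
Since $p<1$ we have $\log p<0$, so choosing $c$ large enough (concretely, $c\geq (d+1)/\log(1/p)$) makes $1+c\log p\leq -d$, and hence
\[
\prob{\ell_{\max}(\Ll)\geq c\log n} \;\leq\; n^{-d}.
\]
This gives $\ell_{\max}(\Ll)\in\Oo(\log n)$ with probability at least $1-n^{-d}$, i.e.\ with high probability for arbitrarily large constant $d$.

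There is no real obstacle here: the argument is essentially a one-line union bound on a geometric tail, and the only thing to be careful about is the choice of the constant in the $\Oo(\log n)$ in terms of the desired probability exponent $d$. The same template is reused in the paper to derive the other properties of randomized skip lists (e.g.\ run lengths in Lemma~\ref{lemma:run_length}), so this lemma can be stated as the cleanest instance of that calculation.
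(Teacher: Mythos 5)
Your proposal is correct and takes essentially the same approach as the paper: a union bound over the $n$ elements applied to the geometric tail of each element's tower height, with the threshold set to $\Theta(\log_{1/p} n)$ so that the failure probability is $n^{-\Omega(1)}$. The paper's version additionally derives $\expect{h} = \Oo(\log n)$ from the tail bound, but that is not required by the stated claim.
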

\begin{proof}
	Let $Y_i$ for $i\in [n]$ be the random variable that counts the number of consecutive heads we obtain while tossing a p-biased coin before we get a tail. Moreover, define $h$ to be the maximum height of the skip list i.e., $h = 1+\max\{Y_i: i\in [n]\}$. Observe that $\prob{Y_i\geq k} = p^{k-1}$, and that by a straightforward application of the union bound we obtain the probability of having a skip list of height at least $k$, $\prob{h\geq k}\leq np^{k-1}$. Choosing $k = 4\log_{1/p}n+1$, we obtain a high confidence bound for the number of levels in the distributed data structure 
    \begin{align}
    \prob{h\geq 4\log_{1/p}n+1}\leq n(1/n)^{4} = 1/n^3
    \end{align}
	Moreover, $h$'s expected values is 
	\begin{align}
		&\expect{h} = \sum_{i>0}\prob{h\geq i} = \sum_{i= 1}^{4\log n}\prob{h\geq i} + \sum_{i>4\log n}\prob{h\geq i}\leq \sum_{i = 1}^{4\log n}1 + \sum_{i>4\log n}np^{i-1}&\\
		&\leq 4\log n +np^{4\log n}\left(\sum_{i\geq 1}p^{i-1}\right) = 4\log n+n \left(\frac{1}{n^4}\right)\left(\frac{1}{1-p}\right)\leq 4\log n +1 = \Oo(\log n)&
	\end{align}
	Where the last inequality holds if the $\log$ is in base $1/p$ and if $n$ is sufficiently large.
\end{proof}
Next, we say that $X_\ell=\{x_1,x_2,\dots ,x_k\}$ is a \emph{run}/cohesive group of nodes in a skip list $\Ll$ at level $\ell$, if $X_\ell$ is a set of consecutive nodes such that $\lmax{\Ll}{x_i}=\ell$ for each $x_i\in X_\ell$. Moreover, we give a high confidence bound on the size of a run $X_\ell$.
\begin{lemma}\label{lemma:run_length}
    The size of a run of nodes $X_\ell$ for some level $\ell$, is at most $\Oo(\log n)$ \whp, and its expected value is $1/(1-p)$.
\end{lemma}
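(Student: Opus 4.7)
The plan is to model the length of a run at level $\ell$ as a geometric random variable and then combine the resulting tail bound with a union bound over all potential run starts and all levels of the skip list.

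First, I would observe that conditional on a node $v$ appearing at level $\ell$, the event $\lmax{\Ll}{v} = \ell$ is governed by a single independent coin flip---the promotion trial from level $\ell$ to level $\ell+1$---which is independent across distinct nodes. Walking along the consecutive nodes at level $\ell$, the indicators $\mathbf{1}[\lmax{\Ll}{v}=\ell]$ therefore form an i.i.d.\ Bernoulli sequence, and a cohesive group $X_\ell$ is precisely a maximal consecutive block of such indicator ``successes.'' It follows that the length $L = |X_\ell|$ of the run starting at a fixed position is geometrically distributed: for every $k \geq 1$, $\prob{L \geq k}$ equals the probability that the $k-1$ nodes after the first are also flagged, which decays exponentially in $k$. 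Summing the resulting geometric series $\sum_{k\geq 1}\prob{L\geq k}$ recovers the stated expectation $1/(1-p)$.

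For the high-probability statement I would set $k = c\log n$ for a sufficiently large constant $c>0$, so that $\prob{L \geq c\log n}$ is at most $n^{-c'}$ for some $c'$ that grows with $c$. I would then union bound over the at most $n$ possible starting positions at each level and over the $\Oo(\log n)$ nonempty levels guaranteed by Lemma~\ref{lemma:exp_hight}. The total failure probability is at most $n \cdot \Oo(\log n) \cdot n^{-c'}$, which is $1/n^{\Omega(1)}$ once $c$ is taken large enough, yielding that no run at any level of the skip list exceeds $\Oo(\log n)$ with high probability.

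The only subtlety worth highlighting is the independence structure: one must argue that the coin flips determining $\lmax{\Ll}{\cdot}$ are independent of the identities of the nodes present at level $\ell$. This is standard in the skip list construction since each promotion is a fresh independent coin flip performed on nodes that reached the previous level, so conditioning on the set of level-$\ell$ nodes does not disturb the joint distribution of their promotion outcomes. Once this is spelled out, the remainder of the argument is a routine geometric-tail and union-bound calculation, together with intersecting the resulting whp event with the whp event of Lemma~\ref{lemma:exp_hight}.
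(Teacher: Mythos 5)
Your proof is correct and follows essentially the same route as the paper's: both model the run length as a geometric random variable with success parameter $1-p$, read off the expectation $1/(1-p)$, and take $k=\Theta(\log n)$ in the geometric tail to get the high-probability bound. Your additional union bound over starting positions and levels (and the explicit remark on the independence of promotion coin flips) is a welcome bit of extra care that the paper's one-run argument leaves implicit, but it does not change the underlying approach.
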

\begin{proof}
The size of a run of nodes $|X_\ell|$ is a geometric random variable with parameter $1-p$. Thus, 
    \begin{align}
        \prob{|X_\ell| \geq k} \leq (1-p)^{k-1}
    \end{align}
Choosing $k = c \log_{1-p} n+1$ for $c\geq 1$, gives us a high confidence bound on the number of consecutive nodes in a run of node $X_\ell$ at some level $\ell$. Indeed,
    \begin{align}
        \prob{|X_\ell| \geq c \log_{1-p} n+1}\leq 1/n^c
    \end{align}
     Moreover, its expected value is $\expect{|X_\ell|} = 1/(1-p)$, assuming $p=1/2$ (in other words, constant), we have that the expected length of a run of nodes at some level $\ell$ of the skip list is $\Oo(1)$.
\end{proof}
Finally, we conclude with the analysis of the search/deletion/insertion of an element in a skip list (see Figure~\ref{fig:searc_example} for an example about one of these operations).
\begin{lemma}\label{lemma:classic_insertion}
    The running time for an insertion, deletion and search of an element in a skip list takes $\Oo(\log n)$ rounds \whp
\end{lemma}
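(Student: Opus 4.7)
\textbf{Proof proposal for Lemma~\ref{lemma:classic_insertion}.}

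The plan is to use the standard backward-analysis argument for skip lists, adapted to our distributed setting where each pointer traversal costs one round. Consider a search for an element $x$ starting at the top-left sentinel. Rather than analyze the forward path, I would reverse it: starting at $x$'s location at level $0$, at each step either move up (if the current node exists at the next higher level, which happens independently with probability $p$) or move left (with probability $1-p$). The reverse walk terminates upon reaching the top-left sentinel, and the number of rounds for the forward search equals the length of this reverse path.

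First, by Lemma~\ref{lemma:exp_hight} the height $h$ of the skip list is at most $c\log n$ \whp~for a suitable constant $c$, so the total number of up-moves is $\Oo(\log n)$ \whp. To bound the number of left-moves, I would use Theorem~\ref{thm:neg_binom}: the total path length $L$ until $h$ up-moves have occurred is a negative binomial random variable with parameters $h$ and $p$, and $\prob{L>m}=\prob{Y<h}$ where $Y$ is $\mathrm{Binomial}(m,p)$. Choosing $m = \Theta(\log n)$ with a large enough hidden constant so that $\expect{Y}=mp \gg h$, a standard Chernoff left-tail bound on $Y$ gives $\prob{L>m}\le 1/n^{\Omega(1)}$. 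Conditioning on the high-probability event that $h\le c\log n$ and applying the union bound then yields that the search path length, and hence the number of rounds for a search, is $\Oo(\log n)$ \whp.

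For insertion, after the $\Oo(\log n)$-round search locates the correct position at level $0$, the new element chooses its height by the same geometric process bounded in Lemma~\ref{lemma:exp_hight}, which is $\Oo(\log n)$ \whp. Splicing the new node into each of its levels requires only $\Oo(1)$ rounds per level (two pointer updates at each level, between the found left/right neighbors), so the total insertion time is $\Oo(\log n)$ \whp. Deletion is symmetric: after the search locates the element, its at most $\Oo(\log n)$ levels are spliced out in parallel, each in $\Oo(1)$ rounds.

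The main obstacle is to control the accumulated horizontal traversal across all levels with high probability (not merely in expectation, where Lemma~\ref{lemma:run_length} gives an $\Oo(1)$ bound per level). The negative binomial conversion of Theorem~\ref{thm:neg_binom} is precisely the right tool to convert this into a clean one-shot Chernoff argument on a binomial random variable rather than wrestling with a sum of dependent geometric variables level by level.
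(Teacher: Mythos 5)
Your proposal is correct and follows essentially the same route as the paper's proof: both bound the search cost as (height) plus (total horizontal moves), invoke Lemma~\ref{lemma:exp_hight} for the $\Oo(\log n)$ height bound, convert the negative-binomial path length to a binomial lower-tail via Theorem~\ref{thm:neg_binom}, and finish with a Chernoff bound together with a union bound over the conditioning on the height. The only cosmetic difference is that you derive the per-level geometric horizontal-move distribution through the classical backward analysis, whereas the paper asserts it directly for the forward path; the probabilistic core is identical.
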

\begin{proof}

	Let $R_{i}$ be number of horizontal edges at level $0\leq i \leq h$ crossed by a search operation that starts on the left topmost node of the skip list. Define the random variable $W_h = |R_{0}| +|R_{1}|+\dots+|R_{h}|$ to be the amount of horizontal moves performed by the search algorithm. Observe that each $R_{i}$ is a geometric random variable of parameter $1-p$ and that $h$ is a random variable itself. From Lemma~\ref{lemma:exp_hight} we know that $\expect{h} = \Oo(\log n)$ and that $\prob{h\geq 4\log n +1}\leq \frac{1}{n^3}$. 
	Since $h$ is a random variable, $W_h$ is a \emph{random sum} of random variables thus we can not make a straightforward use of the properties of sum of geometric random variables, rather we can write:
	\begin{align}
		&\prob{W_h> 16\log n} = \prob{W_h>16\log n \cap h\leq 4\log n} + \prob{W_h>16\log n \cap h> 4\log n} & \\ &\leq\sum_{h = 0}^{4\log n}\prob{W_h>16\log n}+\prob{h> 4\log n} \leq (1+4\log n)\prob{W_{4\log n}>16\log n}+\frac{1}{n^3}&
	\end{align}
	We notice that $W_{4\log n}$ is a deterministic sum of geometric random variables. Hence we can use the relation between the upper tail value of a negative binomial distribution and the lower tail value of a suitably defined binomial distribution to derive upper tail estimates for the negative binomial distribution (see Appendix~\ref{sec:appendix}). Thus, we rewrite $\prob{W_{4\log n}>16\log n}$ as $\prob{Y\leq 4\log n}$ where $Y$ is a binomial random variable with parameter $n=16\log n$ and $p$. Now, we apply a Chernoff bound~\cite{Mitzenmacher_2017} and by setting $k=4\log n$, $p = 1/2$ and $\mu = 8\log n$ we obtain 
	\begin{align}
		&\prob{W_{4\log n}>16\log n} = \prob{Y\leq 4\log n}\leq \left(\frac{8\log n}{4\log n}\right)^{4\log n}\left(\frac{8\log n}{12\log n}\right)^{12\log n} = \frac{2^{16\log n}}{3^{12 \log n}} &\\ &=\left(\frac{16}{27}\right)^{4\log n} \leq \frac{1}{n^3}&
	\end{align}
	Therefore, 
	\begin{align}
			\prob{W_{h}>16\log n} \leq (1+4\log n)\left(\frac{1}{n^3}\right)+\frac{1}{n^3}< \frac{1}{n^2}\quad \text{ if }n>32
	\end{align}
	Now that we have derived a high confidence bound for $W_h$ we can obtain a bound for its expected value:
	\begin{align}
	\expect{W_h} = \sum_{i= 1}^{16\log n}\prob{W_h\geq i}+\sum_{i>16\log n}\prob{W_h\geq i}\leq 16\log n + c = \Oo(\log n)
	\end{align}
	Observe that the first sum is bounded above $16\log n$ because every probability is less than $1$ and the second one is dominated by $\sum_{i\geq 1}1/i^2$ which is a constant.
	Since the running time of a search operation is bounded by the number of horizontal moves performed at each level plus the number of vertical moves to reach the target node from the topmost level to the bottom most. We have that the overall running time of a search procedure starting at the left topmost node in the skip list is $T=h+W_h = \Oo(\log n)$ with probability at least $1-(1/n^2)$, by applying the union bound we have that $T$ is $\Oo(\log n)$ with probability at least $1-(1/n)$ starting at any node in the skip list. To conclude, the expected running time is $\expect{T} =\expect{h}+ \expect{W_h}=\Oo(\log n)$.
\end{proof}

\begin{figure}[htb!]
	\centering
    \includegraphics[scale=0.5]{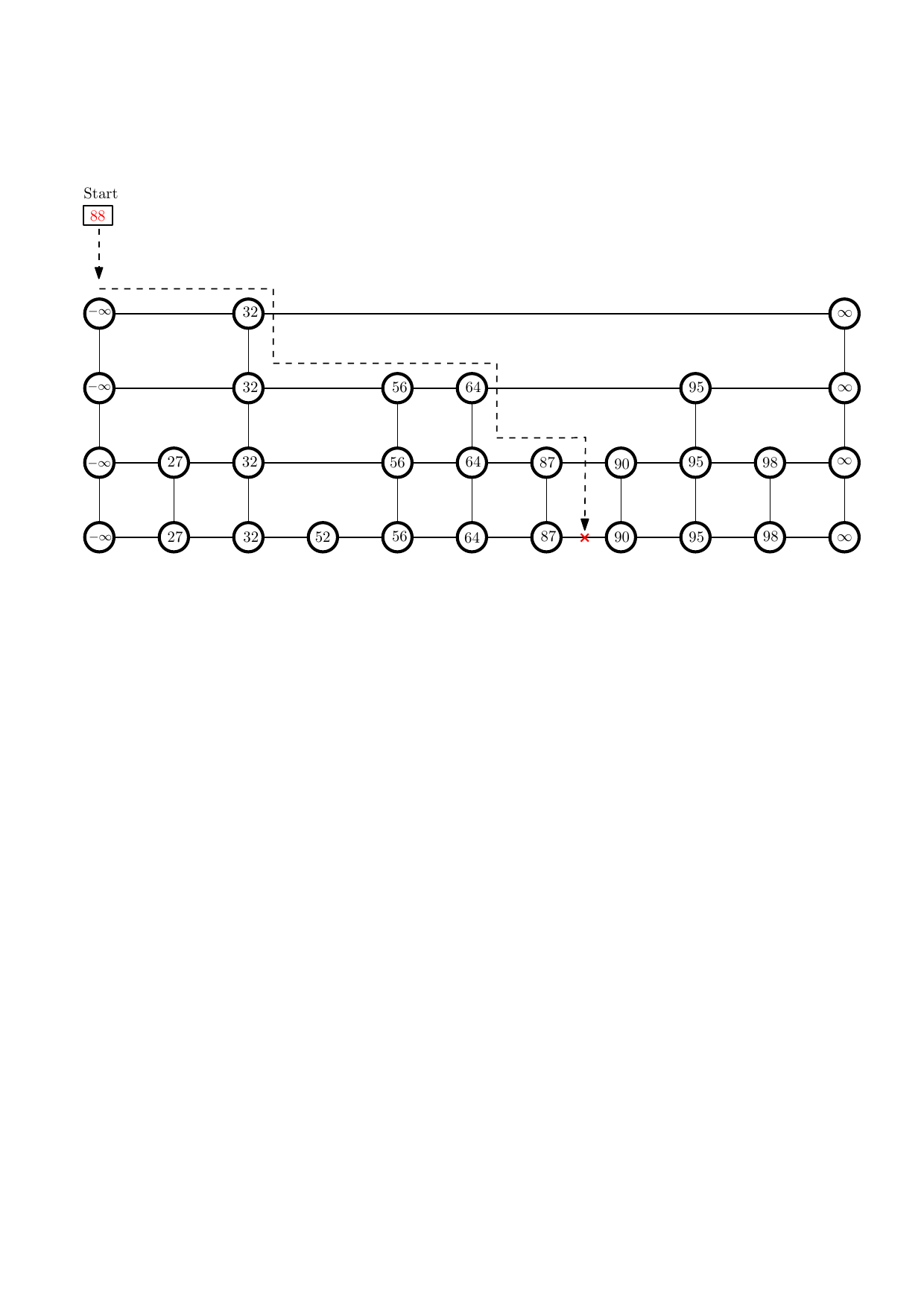}
	\caption{Example of the insertion of the element $88$ in the skip list. Dashed line is the search path. }\label{fig:searc_example}
\end{figure}
\section{Spartan's Reshaping protocol}\label{apx:reshaping_protocol}

\begin{figure}[htb!]
	\centering
	\includegraphics[scale=0.7]{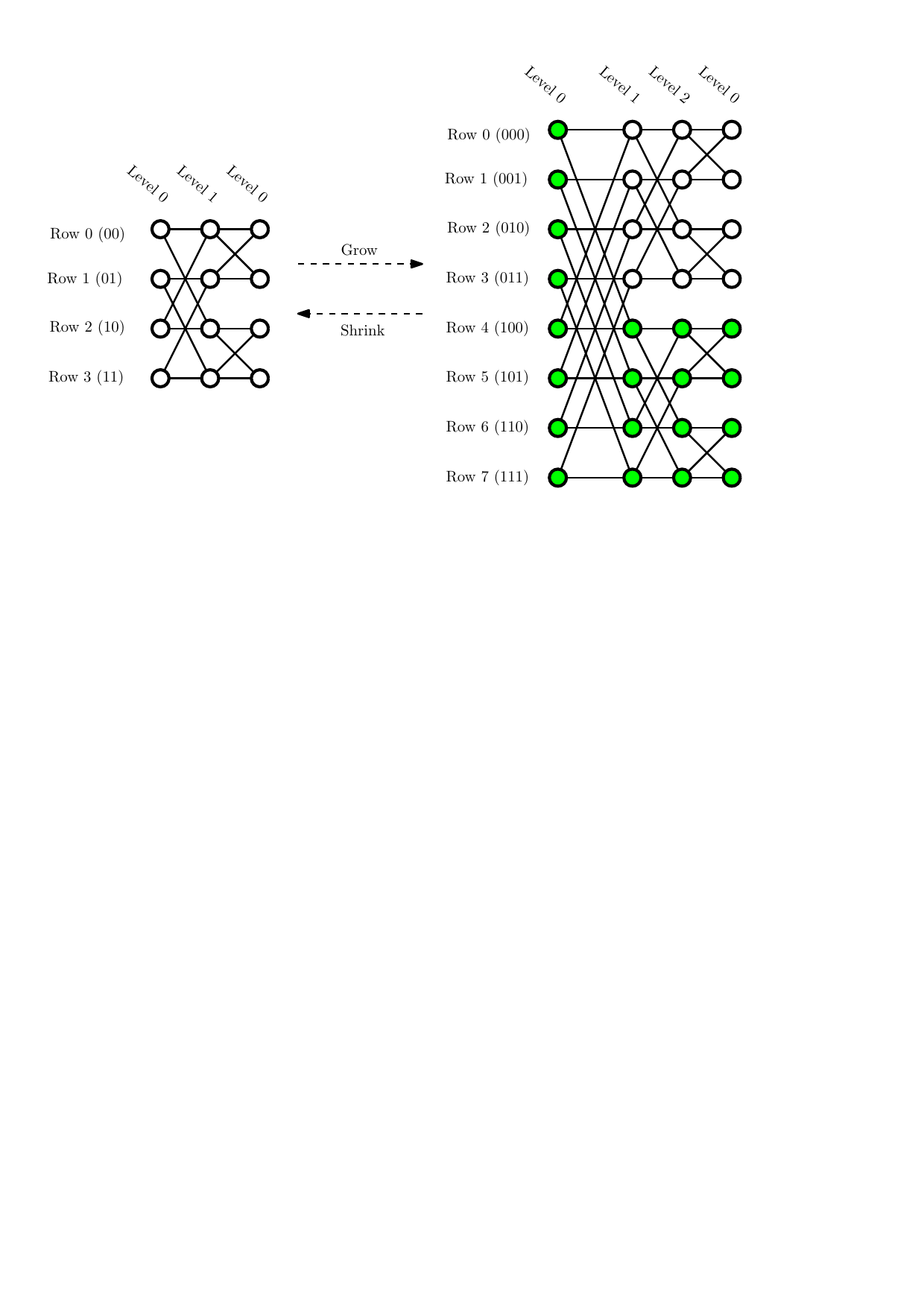}
	\caption{Example of the reshaping procedure. On the left, the \Spartan~structure as a one-dimensional wrapped butterfly of committees. The rows are represented with their binary representations. Every node in the figure is a random committee of $\Theta(\log n)$ nodes and every edge between two committees encodes a complete bipartite graph. On the right, $2$-dimensional butterfly is obtained by increasing the dimensionality of the left-wrapped butterfly by one. Green nodes are the ones that must be added to increase the dimensionality of the $1$-dimensional wrapped butterfly. Moreover, left-to-right execution increases the dimensionality by one, and right-to-left execution decreases the dimensionality by one.}\label{fig:grow}
\end{figure}

In this section, we provide a detailed description of the reshaping protocol mentioned in Section~\ref{sec:churn_resilient_network}.
Let $\alpha,\beta \in \mathbb{R}_{\geq 0}$ such that $\alpha>1$ and $\beta\in (0,1)$ be two constants. We want the Spartan network $\X$ to be enlarged if all the committees have size at most $\alpha c\log n$, to shrink if all the committees have size at most $\beta c\log n$, or stay the same otherwise ($c>0$ is the constant hidden by the asymptotic notation). In other words, we want $\X$ to be enlarged or reduced when the wrapped butterfly network size increases or decreases by one dimension respectively. Moreover, a general technique to increase the dimensionality of a $k$-dimensional butterfly network $\X$ by one is to: (1) create a copy of $\X$ in which each row ID $(i,j)$ is ``shifted'' by $2^{k}$ i.e., $(2^{k}+i,j)$; (2) ``shift'' the index of the levels of the two disjoint butterflies by one; (3) add a new row of $2^{k+1}$ nodes to form the level $0$ of the $k+1$ dimensional butterfly; and, (5) connect the newly created level $0$ to the two disjoint butterflies using the ``standard'' butterfly edge-construction rule~\cite{Leighton_2014}. Furthermore, to reduce the dimensionality of a $k$-dimensional butterfly, we need to remove the level $0$, remove one of the two disjoint $k-1$-dimensional butterflies, and fix the row and levels IDs if needed. Figure~\ref{fig:grow} shows an example on how to perform such operations. 
We start with the description of the growing procedure. A new number of nodes $n' = \alpha n$ is reached such that $2^k k \in \Oo(N)$ must be increased to $2^{k+1} (k+1) \in \Oo(N')$ (the number of committees increases from $N=\Oo(n/\log n)$ to $N' =\Oo(n'/\log n')$)
and update the butterfly structure according to the new size of the network. Moreover, each step of Algorithm~\ref{algo:reshape} requires at most $\Oo(\log n)$ rounds and no node will send or receive more than $\Oo(\log n)$ messages at any round. 
As a first step, the committee $C(0,0)$ is elected as a leader\footnote{We can make this assumption because we do not deal with byzantine nodes.}, and each committee routes a message in which it expresses its opinion (grow, shrink, stay the same) to the leader committee (lines 1-2). This can be done very efficiently on hypercubic networks~\cite{Leighton_2014,Mitzenmacher_2017}, indeed we have that the leader committee will receive all the opinions in $\Oo(\log N)$ rounds~\cite{Leighton_1994} where $N\in\Oo(n/\log n)$ is the number of committees in the wrapped butterfly. If the network agrees on growing, the protocol proceeds with increasing the dimensionality of the wrapped butterfly by one. Moreover, assume that the $k$-dimensional butterfly must be transformed into a $k+1$-dimensional one. To this end, the protocol executes the growing approach described before. Every committee shifts its level ID by one and promotes two random nodes in each committee to be the committee leaders of the copy of the butterfly and of the new level $0$. Next, all the newly elected committee leaders will start recruiting random nodes in the network until they reach a committee size of $\Theta(\log n')$ where $n' = \alpha n$. Finally, each new committee will create edges according to the wrapped butterfly construction algorithm~\cite{Leighton_2014,Augustine_2021} and drop the old ones (if any). The next lemma shows that the growing phase requires $\Oo(\log n')$ rounds \whp~where $n'=\alpha n$.
\begin{lemma}\label{lemma:spartan_growing}
	The network growth process ensures that every committee has $\Theta(\log n')$ members after $\Oo(\log n')$ rounds \whp
\end{lemma}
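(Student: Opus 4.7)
The plan is to model the recruitment phase as a balls-into-bins process, where the $n' = \alpha n$ nodes in the enlarged network are the balls and the $N' = \Theta(2^{k+1}(k+1)) = \Theta(n'/\log n')$ newly created committees are the bins. The target expected load per bin is $n'/N' = \Theta(\log n')$, which matches the committee size we want to guarantee.

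First, I would rely on the well-mixed token primitive already available in the Spartan framework (Section~\ref{sec:churn_resilient_network}): every committee leader, old or newly promoted, has access to a pool of uniformly sampled node IDs. In each round, every newly promoted leader sends out $O(1)$ recruitment invitations to random nodes drawn from its token set; a node accepts the first invitation it receives and is thereby assigned to the inviting committee. This mimics the recruitment step of the original Spartan construction and respects the $O(\polylog n')$ per-round message budget. After $T = c \log n'$ rounds, for a suitably large constant $c$, the expected number of nodes acquired by any fixed leader is $\Theta(\log n')$.

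Next, I would apply a Chernoff concentration argument to a single committee $C$: its size $X_C$ after $T$ rounds is a sum of independent indicator variables with mean $\mu = \Theta(\log n')$, so Theorem~\ref{thm:chernoff_posson_trials} (together with the symmetric lower-tail bound) yields $\prob{X_C \notin [(1-\varepsilon)\mu,\,(1+\varepsilon)\mu]} \leq n'^{-\Omega(1)}$ for any fixed $\varepsilon \in (0,1)$. A union bound over the $N' = O(n')$ new committees then shows that \emph{every} committee simultaneously ends up with $\Theta(\log n')$ members with high probability. The clique and bipartite-edge formation steps take $O(1)$ additional rounds, so the total round count is dominated by the recruitment phase and remains $O(\log n')$.

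The main obstacle will be controlling the congestion on the underlying butterfly while this recruitment proceeds in parallel: up to $\Theta(N') = \Theta(n'/\log n')$ leaders are issuing invitations concurrently, and these invitations must be delivered to their uniformly random destinations without exceeding any node's $O(\polylog n')$ per-round message budget. Because the destination distribution is uniform and hypercubic networks admit congestion-balanced permutation routing~\cite{Leighton_2014,Upfal_1992}, another Chernoff bound establishes that the maximum number of invitations handled by any node in any round is $O(\polylog n')$ with high probability. Combining this with the balls-into-bins analysis above closes the argument and yields the claimed $O(\log n')$ round bound.
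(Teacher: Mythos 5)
Your proposal follows essentially the same route as the paper's proof: each newly promoted leader recruits by contacting uniformly random nodes, succeeds with constant probability per round, and a Chernoff bound plus a union bound over the $N'$ committees finishes the argument. The only structural difference is that you fix the time horizon $T=c\log n'$ and lower-bound the number of recruits (the binomial view), whereas the paper fixes the target $b=\tfrac{3}{4}\log n'$ and upper-bounds the number of rounds needed to reach it, modelling the committee's growth as a pure-birth Markov chain dominated by independent coin flips with success probability $3/16$. These are dual formulations of the same estimate, so the choice buys nothing substantive either way.

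There is, however, one step you need to repair: the claim that the committee size $X_C$ ``is a sum of independent indicator variables'' is not literally true. The per-round recruitment events are dependent both across leaders (two leaders may invite the same node and only one can win it) and across rounds (the pool of unassigned nodes shrinks as recruitment proceeds). The paper handles exactly this point before invoking concentration: because each committee stops at $b=\tfrac{3n'}{4N'}$ recruits, at least $n'/4-N'=\Omega(n')$ nodes remain uninvited at every round, so each leader's per-round success probability is at least $\left(\frac{n'/4-N'}{n'}\right)\left(1-\frac{1}{n'}\right)^{N'}\geq \frac{3}{16}$ \emph{regardless of the history}; the recruitment process is then stochastically dominated by a sequence of independent Bernoulli trials, to which Theorem~\ref{thm:chernoff_posson_trials} legitimately applies. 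You need this domination (or an equivalent martingale argument) to justify the Chernoff step. The congestion issue you single out as the main obstacle is real but secondary, and your treatment of it is fine: invitations go to uniformly random IDs with $\Oo(1)$ per leader per round, so a separate concentration bound keeps the per-node load polylogarithmic.
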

\begin{proof}
	Let $b = \frac{3n'}{4N'} =\frac{3}{4}\log n'$ we show that after the first phase of the process, each new committee with gain $b$ nodes. 
	During each round, there are at least $\frac{n'}{4}-N'\in\Omega(n')$ nodes that did not receive any invitation from one of these new committee leaders. Moreover, define the event $E_v=\text{``The committee leader \textit{v} recruits a node''}$ then setting $N'=\frac{n'}{c\log n'}$ with $c\geq 1$ gives $\prob{E_v}\geq \left(\frac{n'/4-N'}{n'}\right)\left(1-\frac{1}{n'}\right)^{N'}\geq \frac{3}{16}$. To bound the expected time needed by a committee to recruit $b$ elements we define a pure-birth Markov Chain $\{X_t\}_{t\geq 0}$ with state space $\Omega = \{0,1,2,\dots ,b\}$ and initial state $0$ that counts the number of recruited members. At each round, the Markov Chain at state $i<b$ can proceed one step forward to state $i+1$ with probability $p_{i,i+1}=\prob{E_v}$ or loop on $i$ with probability $r_{i,i} = 1-p_{i,i+1}$. Observe that each state $0\leq i<b$ is a transient state, while $b$ is an absorbing state, and that the expected absorption time in state $b$ is $\Oo(b)=\Oo(\log n')$. To give a tail bound on the absorption time, we study a more pessimistic random process in which we toss a p-biased coin with $p = 3/16$ and count the number of rounds $Z_i$ before we get $b$ heads. In this experiment, the expected number of rounds needed to get $b$ heads is $\frac{16}{3}b\in \Theta(\log n')$. By applying a Chernoff bound (Theorem~\ref{thm:chernoff_posson_trials}) we can show that for any value of $n'$ and any fixed $R$, such that $R\geq 6\expect{Z}$ the probability that we will need more than $R\log n'$ rounds to get $b$ number of heads is at most $1/n'^R$. In other words, this means that the Markov Chain will take at most $\Oo(\log n')$ rounds with probability $1-1/n'^R$ to reach state $b$. Finally, by applying the union bound we can show that every committee will need $\Oo(\log n')$ rounds with probability $1-1/n'^{R-1}$ to recruit $b$ members.
\end{proof}
Furthermore, in the case in which the network agreed on shrinking i.e., reducing the butterfly dimensionality from $k$ to $k-1$ a pool of vacated nodes $\V$ is created and all the nodes in the level $0$ and in the committees $C(i,j)$ such that $2^{k-1} \leq i<2^k$ and $1\leq j\leq  k$ join such pool of nodes. Next, each committee in the remaining butterfly network will ``recruit'' $b = \frac{3}{4} \log n'$ (here $n' = \beta n$) random nodes from $\V$. After the recruitment phase, each remaining node $u$ in $\V$ (if any) will probe a random committee in the butterfly. If such a committee has not reached the target size yet it accepts the request, otherwise $u$ tries again with a different committee. It follows that the shrinking phase can be done in $\Oo(\log n')$ rounds \whp~where $n'=\beta n$.
\begin{lemma}\label{lemma:spartan_shrinking}
	The network shrinking process ensures that every committee has $\Theta(\log n')$ members after $\Oo(\log n')$ rounds \whp
\end{lemma}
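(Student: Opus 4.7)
The plan is to follow the same two-ingredient template used in the proof of Lemma~\ref{lemma:spartan_growing} but split along the two sub-phases of the shrinking procedure: directed recruitment from the vacated pool $\V$, and residual probing by leftover nodes.

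For the first sub-phase, after the reshaping we have $|\V|\in \Theta(n)=\Theta(n')$ (since $n'=\beta n$ for constant $\beta\in(0,1)$) while there are only $N'=\Theta(n'/\log n')$ surviving committees, each attempting to pull $b=\tfrac{3}{4}\log n'$ random nodes from $\V$. I would show that in each round the probability that a given committee leader successfully draws an unclaimed node from $\V$ is at least a constant, because the number of requests $N'$ is a vanishing fraction of $|\V|$ and so collisions are resolved favorably with constant probability. From there I set up exactly the pure-birth Markov chain $\{X_t\}_{t\ge 0}$ on state space $\{0,1,\dots,b\}$ used in the growing proof, dominate its absorption time by the number of trials needed to collect $b$ heads from a constant-bias coin, and apply the Chernoff bound in Theorem~\ref{thm:chernoff_posson_trials}. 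This yields that any single committee reaches size $b$ within $\Oo(\log n')$ rounds with probability $1-1/n'^R$ for arbitrarily large $R$, and a union bound over the $N'$ committees gives the high-probability guarantee for the first sub-phase.

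For the second sub-phase, I would argue that after the directed recruitment the number of nodes still in $\V$ is at most $|\V|-N'b$, and that each committee still has $\Theta(\log n')$ free slots up to the target size $\Theta(\log n')$. Treating the residual probing as a balls-into-bins process, where each leftover node independently probes a uniformly random committee each round and is accepted iff that committee is not yet full, I would show that as long as a constant fraction of committees retain free slots the acceptance probability per probe is $\Omega(1)$. A pigeonhole argument based on the target capacity strictly exceeding the residual count ensures that this invariant is maintained throughout, so each leftover node's waiting time is stochastically dominated by a geometric variable with constant parameter; a Chernoff-plus-union-bound argument identical in structure to the first sub-phase then places every residual node within $\Oo(\log n')$ rounds \whp

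The main obstacle I anticipate is the second sub-phase: unlike the growing case, here the ``bins'' have hard capacities and probes are rejected once those capacities are met, so I must carefully rule out the pathological scenario in which nearly all remaining free slots end up concentrated on a shrinking set of committees while leftover nodes keep colliding on the full ones. I expect to handle this by choosing the constant $\beta$ (hence the target committee size) so that the total remaining capacity strictly dominates the residual pool by a constant factor; this reduces the analysis to a balanced balls-into-bins instance where a standard concentration argument keeps the acceptance probability $\Omega(1)$ at every round, and the Markov-chain/Chernoff template from the first sub-phase applies verbatim.
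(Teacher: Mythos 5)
Your first sub-phase is exactly the paper's argument: the paper likewise declares the recruitment of $b=\tfrac{3}{4}\log n'$ nodes per committee "identical to the growing process," i.e., the same pure-birth chain, coin-flip domination, Chernoff bound, and union bound over committees. Where you genuinely diverge is the second sub-phase. You analyze the residual placement as balls-into-bins with hard committee capacities, and you correctly identify the resulting danger (free slots concentrating on a few committees while probes keep landing on full ones), which you then neutralize by a pigeonhole argument: with the target capacity chosen so that the total free capacity exceeds the residual pool by a constant factor, at most a constant fraction of committees can ever fill up, so every probe to a uniformly random committee is accepted with probability $\Omega(1)$, and the geometric-domination template applies. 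The paper avoids this concentration issue altogether by changing the accounting unit: each of the $N'b=\tfrac{3n'}{4}$ nodes that joined a committee during recruitment carries a one-unit \emph{budget}, a leftover node probes a uniformly random \emph{node} (not a committee) and is accepted iff that node's budget is unspent; since at most $n'/4$ nodes remain in $\V$, at least $n'/2$ budgets are never exhausted, giving acceptance probability at least $1/2$ at every round with no spatial argument needed. Your route has the virtue of analyzing the protocol exactly as stated in Algorithm~\ref{algo:reshape} and the surrounding text (which describes probing committees, not nodes), at the cost of a constant-tuning step; the paper's budget device is the cleaner counting argument but quietly substitutes a slightly different acceptance rule than the one the protocol description advertises. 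Both yield the stated $\Oo(\log n')$ bound \whp, so I see no gap in your proposal.
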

\begin{proof}
	Let $b = \frac{3n'}{4N'} =\frac{3}{4}\log n'$ and observe that the analysis of the recruitment phase is identical to the one for the growing process. Moreover, after $\Theta(\log n)$ rounds, there can be $n'/4$ nodes in $\V$ that are not part of any committee. Each node that joined a committee has one unit of budget that can use to induce a new member into its committee. Every remaining node $u\in \V$ probes a random node $v$ asking to join its committee, if $v$'s budget is greater than $0$ then $v$ accepts the request, otherwise $u$ tries again with a different node. Furthermore, observe that even if all the remaining nodes $\V$ found a spot in some committee, there will be $n'/2$ nodes that would not have exhausted their budget and each remaining node can find a committee with probability at least $1/2$. Finally, by using similar arguments to the growing process we can show that each remaining node $u$ succeed in finding a committee with high probability in $\Theta(\log n')$ rounds. And by using the union bound we can guarantee that each remaining node can find and become part of a committee in $\Theta(\log n')$ \whp
\end{proof}
\begin{algorithm}[htb!]
	\caption{Reshape protocol. }
	\label{algo:reshape} 
	\nonl \textbf{Agreement Phase.}
	
	The committee $C(0,0)$ is elected as a leader $\ell$.
	
	Every committee in $\X$ routes its opinion to $\ell$.
	
	\uIf{Agreement on Growing}{
	
	\nonl \textbf{Growing Phase.}\tcp*{Here $n' = \alpha n$}
	
	Each committee $C(i,j)$ for $0\leq i < 2^k$ and $0\leq j <  k$:

		\Indp{\nonl \textbf{(a)} Increases its committee ID from $C(i,j)$ to $C(i,j+1)$.}\\
  
		{\nonl \textbf{(b)} Promotes a random node $u$ among the available committee members to be the leader of the new committee $C(j\cdot 2^{k-1}+i,0)$.}\\
  
		  {\nonl \textbf{(c)} Promotes a random node $v$ among the available committee members to be the leader of the new committee $C(2^{k-1}+i,j+1)$.}\\

	\Indm Each newly generated committee leader actively recruits new nodes in their committee until it gets $\Theta(\log n')$ committee members.
	
	Each new committee creates complete bipartite edges according to the Wrapped Butterfly construction algorithm (see~\cite{Leighton_2014,Augustine_2021}).
}\uElseIf{Agreement on Shrinking}{
	\nonl \textbf{Shrinking Phase.} \tcp*{Here $n' = \beta n$}
	
	Each member in the committees $C(i,0)$ for $0\leq i < 2^k$ and $C(i,j)$ for $2^{k-1}\leq i < 2^k$ $1\leq j \leq k$ ``vacate'' its committee and joins a pool of unassigned nodes.	
	
	Each node in the pool of unassigned nodes will randomly join one of the $N' =\Oo(n'/\log n')$ new committees such that each committee has $\Theta(\log n')$ nodes.
}\Else{

	Do nothing.
}
\end{algorithm}

\section{Detailed execution of the WAVE Protocol}\label{apx:wave}
In this section we show a detailed execution of the WAVE protocol. The figures below describe the behavior of the nodes in the clean network $\C$ and the behavior of the nodes in the buffer network $\B$ during the execution of the protocol. We omit the buffer network description when the states of the nodes in $\B$ are clear from the context.

\begin{figure}[htb!]
	\captionsetup[subfigure]{justification=centering}
	\centering
	\begin{subfigure}{0.45\textwidth}
		\centering
		\includegraphics[scale=0.7]{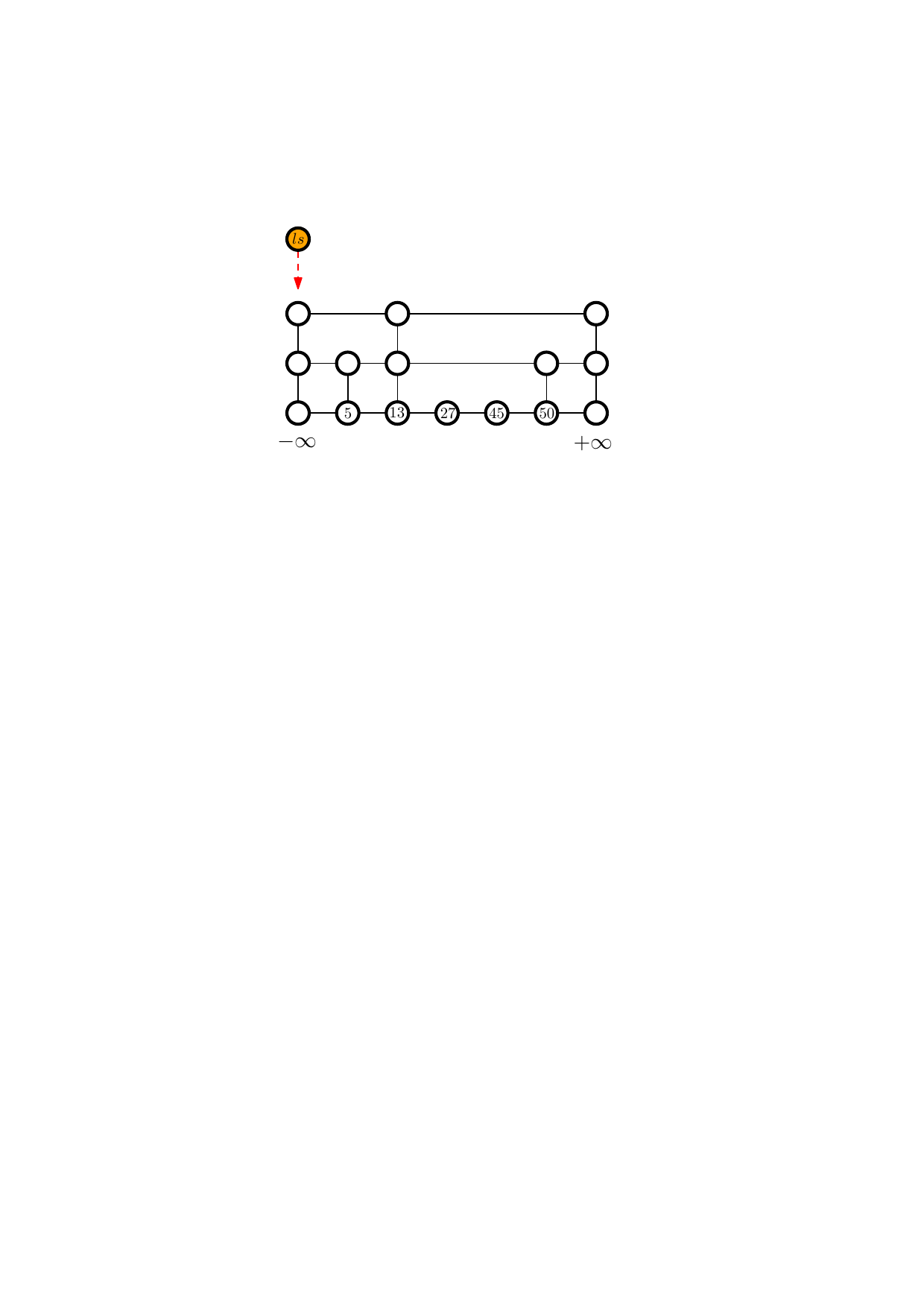}
		\caption{}\label{fig:wave_1_r1}
	\end{subfigure}
	\begin{subfigure}{0.44\textwidth}
		\centering
		\includegraphics[scale=0.7]{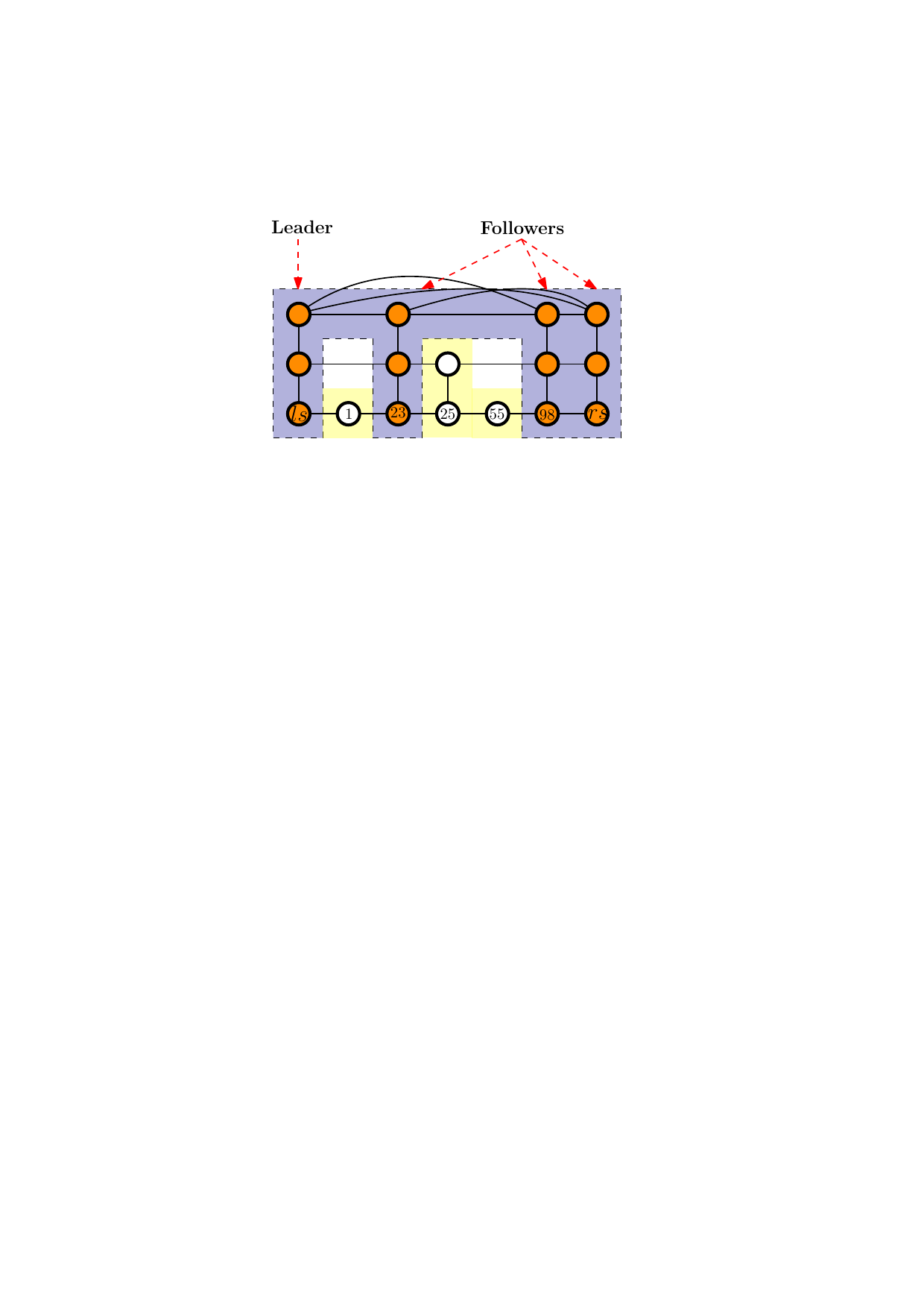}
		\caption{}\label{fig:wave_2_r1}
	\end{subfigure}
    \caption{Figure~\ref{fig:wave_1_r1} shows the Clean Network $\C$ while Figure~\ref{fig:wave_2_r1} shows the Buffer Network $\B$ after the preproccessing phase. At the beginning, in $\B$, the only cohesive group in the merge state is composed by the nodes in $\B$'s top-most level, with $ls$ (that is the $\B$'s left sentinel) as a leader. $ls$ is the node that traverses $\C$ while representing the whole cohesive group $X = \{ls,23,98,rs\}$.}
\end{figure}

\begin{figure}[htb!]
	\captionsetup[subfigure]{justification=centering}
    \centering
	\begin{subfigure}{0.47\textwidth}
		\centering
		\includegraphics[scale=0.7]{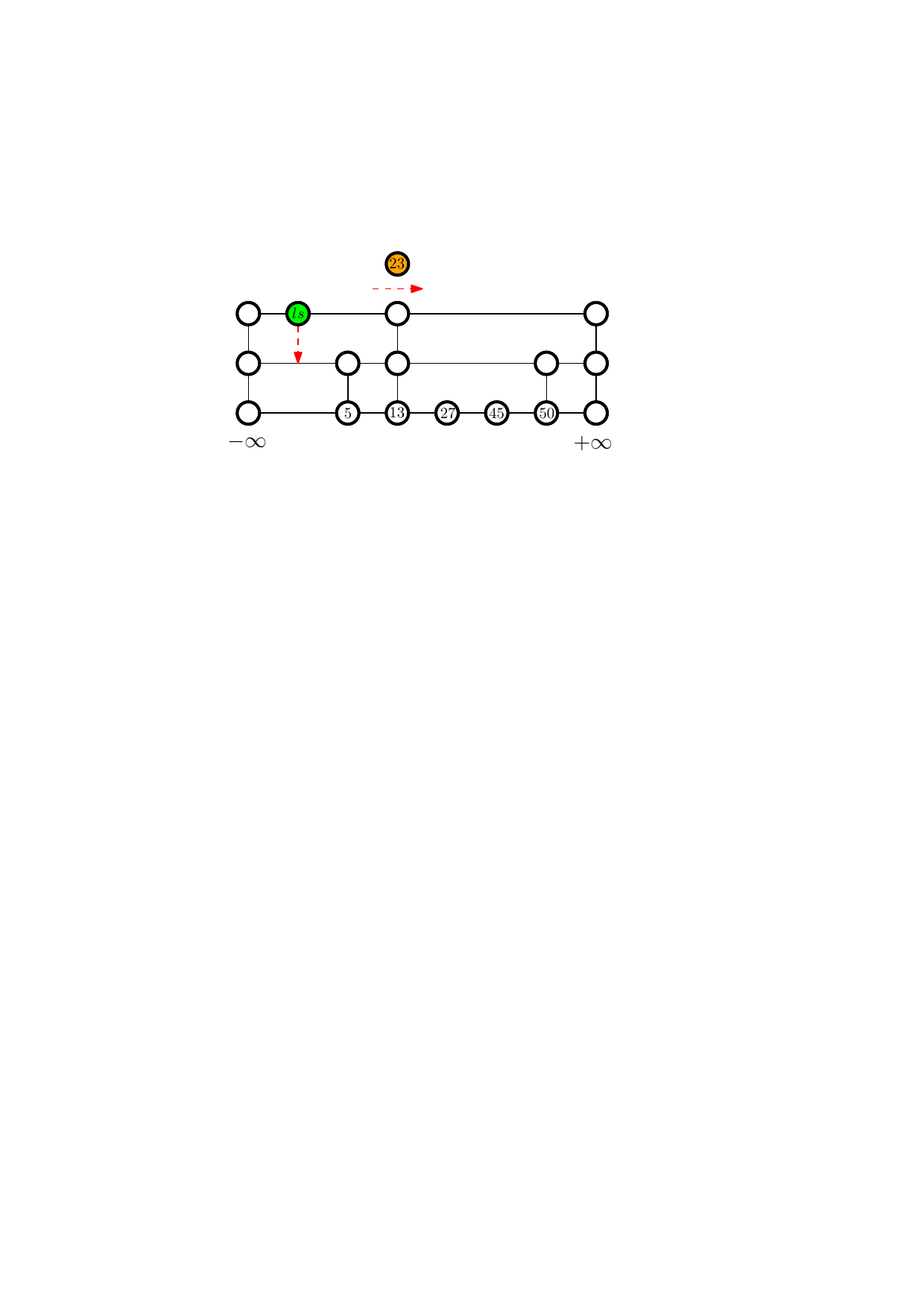}
		\caption{}\label{fig:wave_3_r2}
	\end{subfigure}
	\begin{subfigure}{0.44\textwidth}
		\centering
		\includegraphics[scale=0.7]{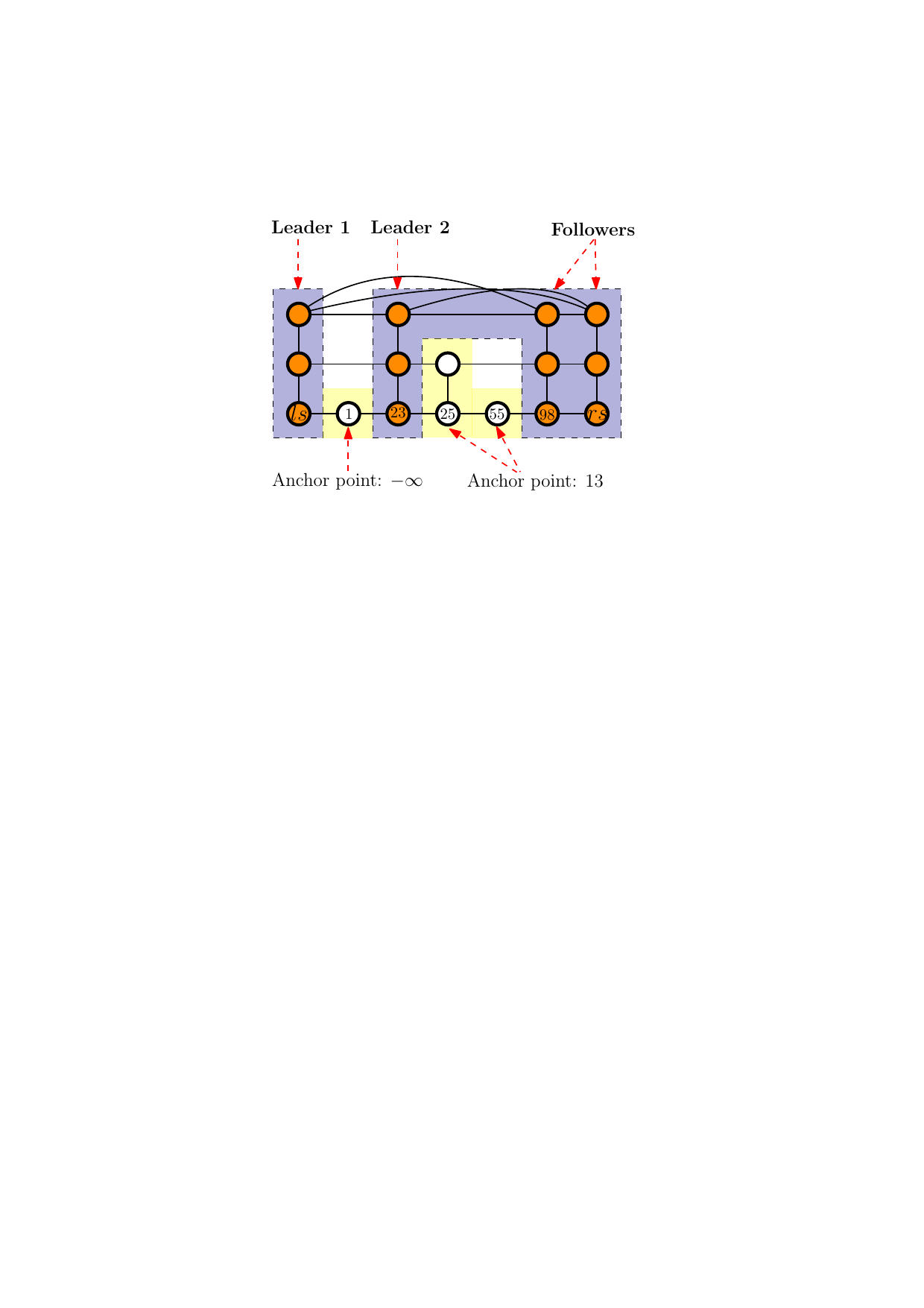}
		\caption{}\label{fig:wave_2_r2}
	\end{subfigure}
    \caption{The \emph{unique} cohesive group $X$ splits into two cohesive groups: $X_{\texttt{down}} = \{ls\}$ and $X_{\texttt{right}} = \{23,98,rs\}$. $X_{\texttt{down}}$ merges in level $3$ and proceeds downwards while $X_{\texttt{right}} $ (lead by the new $23$ as the new leader) moves right to find its designated spot to merge at level $3$. Every step executed by $X_{\texttt{down}}$ and $X_{\texttt{right}}$ on $\C$ is communicated to their respective children that, upon receiving new information, update their anchor points (i.e., their virtual walk position).  }
\end{figure}

\begin{figure}[htb!]
	\captionsetup[subfigure]{justification=centering}
    \centering
	\begin{subfigure}{0.6\textwidth}
		\includegraphics[scale=0.7]{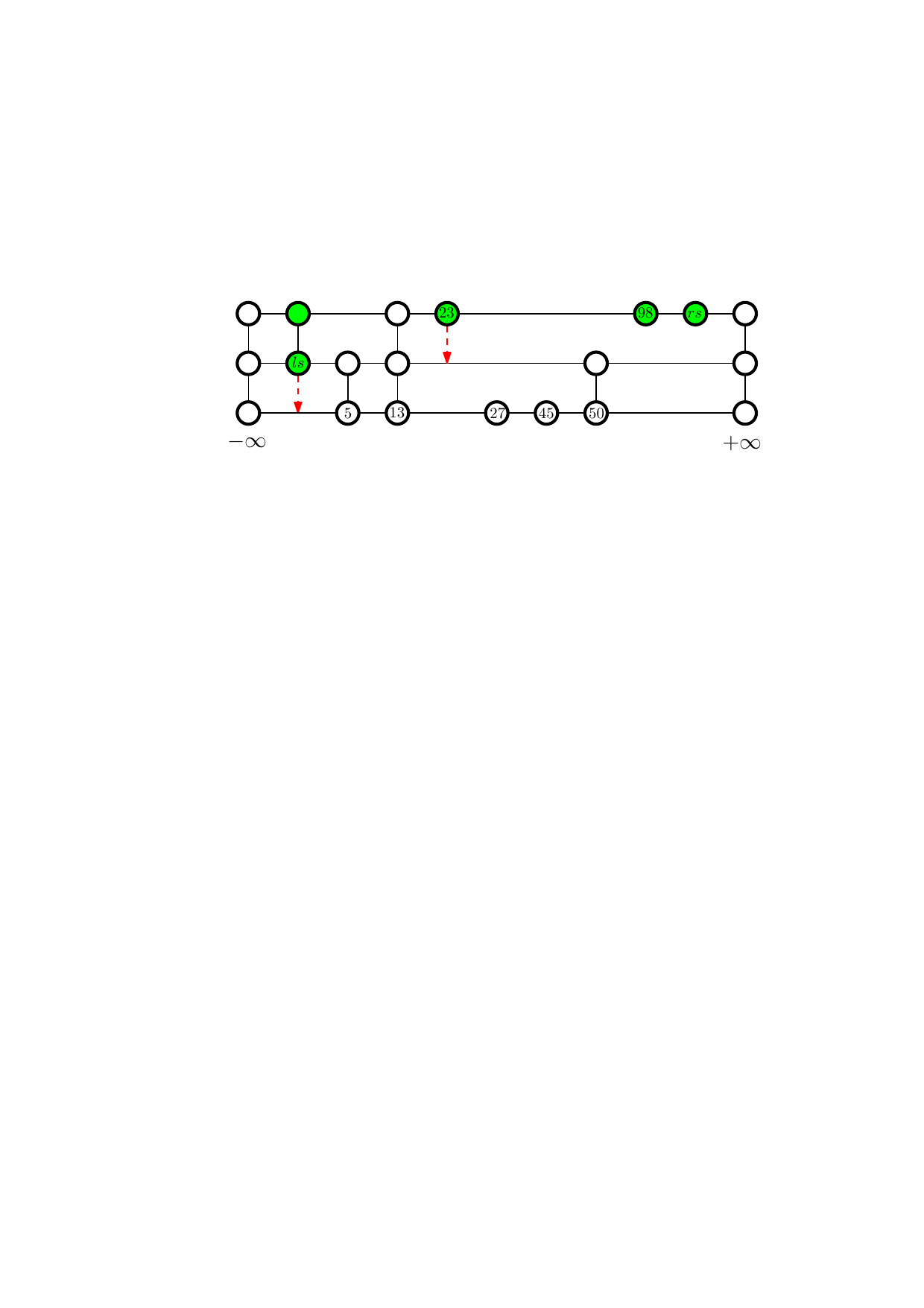}
		\caption{}\label{fig:wave_3_r3}
	\end{subfigure}
	\begin{subfigure}{0.35\textwidth}
		\includegraphics[scale=0.7]{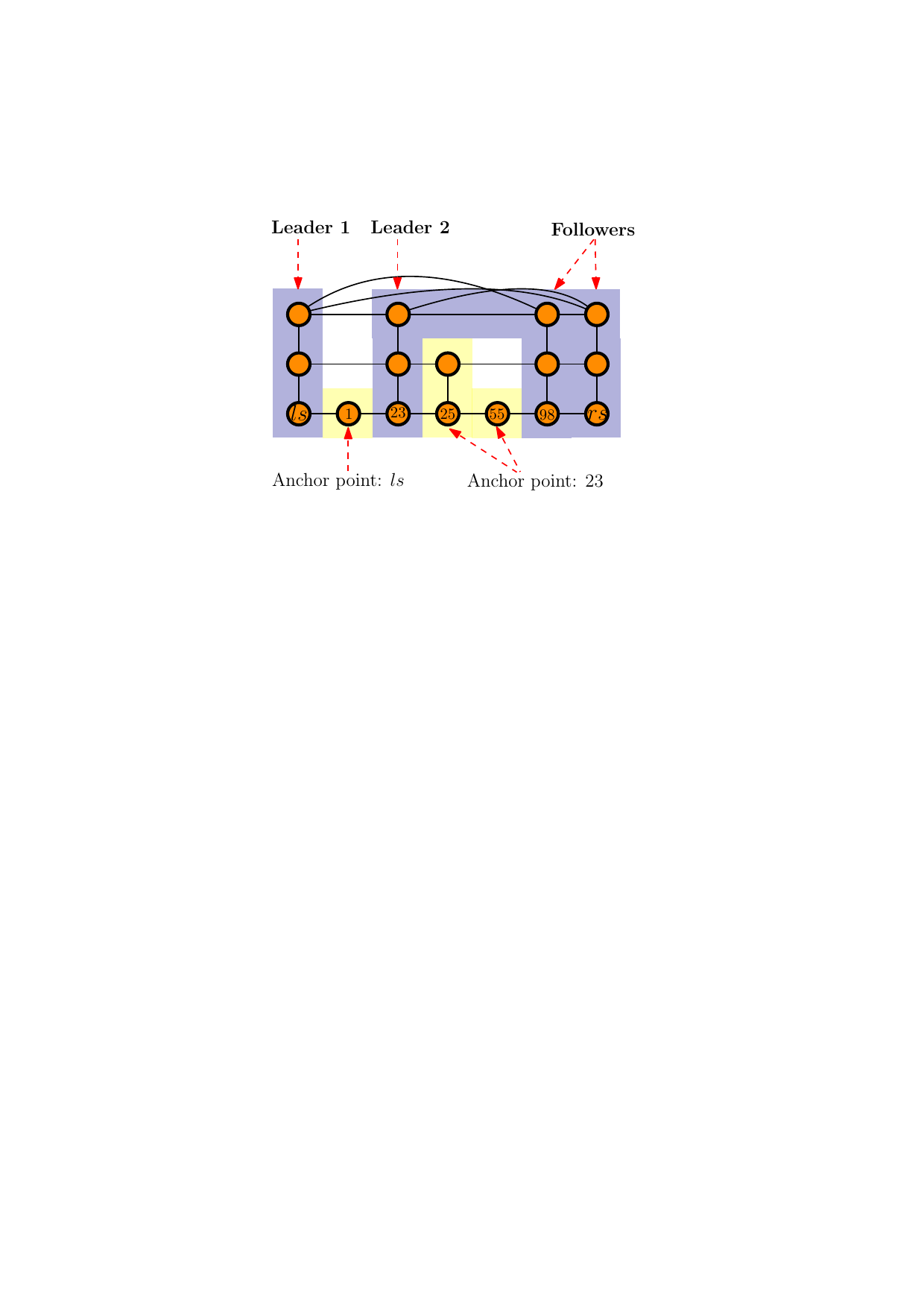}
		\caption{}\label{fig:wave_2_r3}
	\end{subfigure}
        \caption{ $X_{\texttt{down}} = \{ls\}$ keeps merging at level $2$ while $X_{\texttt{right}} = \{23,98,rs\}$ merges at level $3$ between $13$ and $+\infty$ and moves one step down. Their children update their virtual walk status as well.  }
    \end{figure}

\begin{figure}[htb!]
	\captionsetup[subfigure]{justification=centering}
    \centering
	\begin{subfigure}{0.9\textwidth}
		\centering
		\includegraphics[scale=0.8]{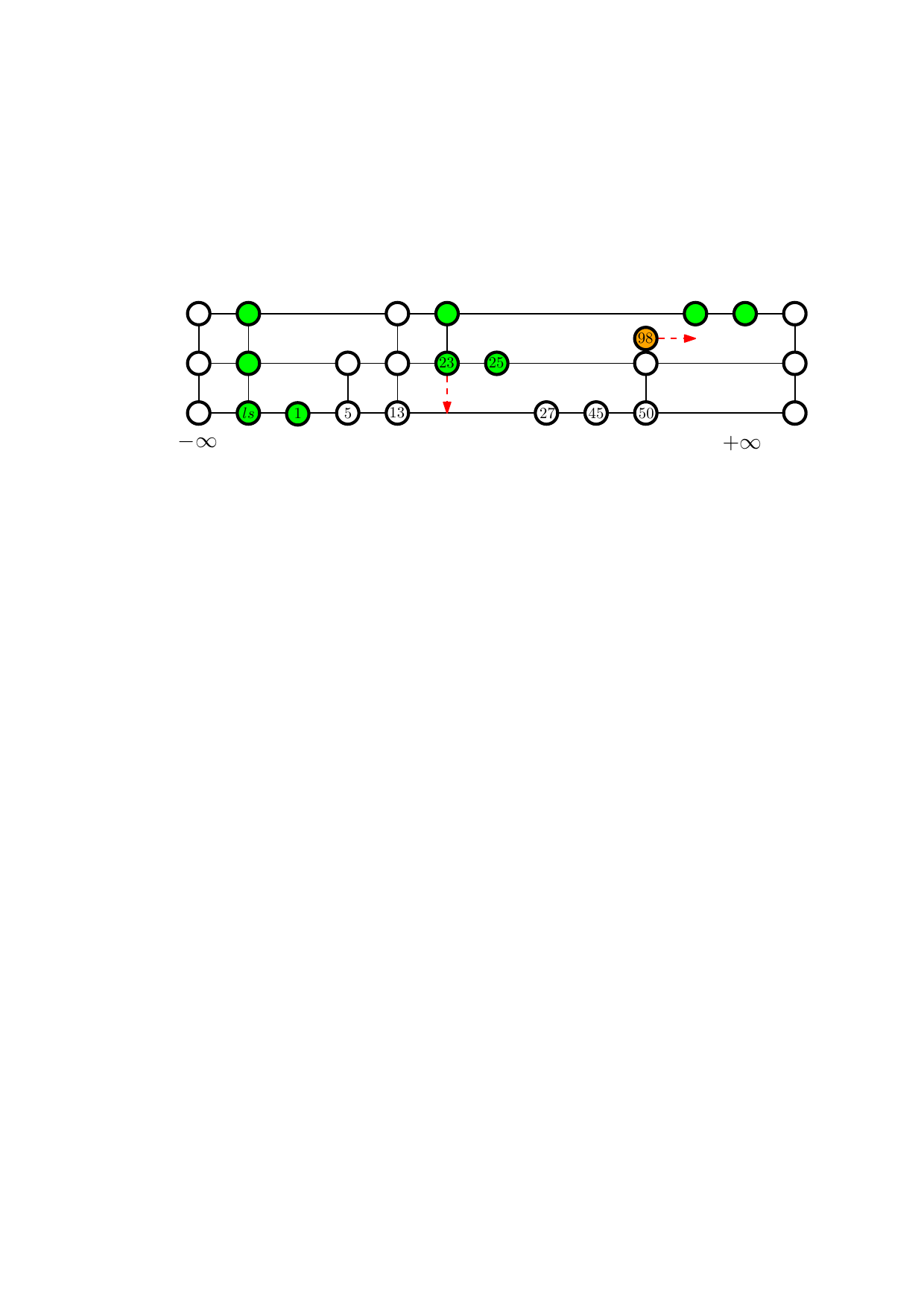}
		\caption{}\label{fig:wave_3_r4}
	\end{subfigure}
    
	\begin{subfigure}{0.44\textwidth}
		\centering
		\includegraphics[scale=0.8]{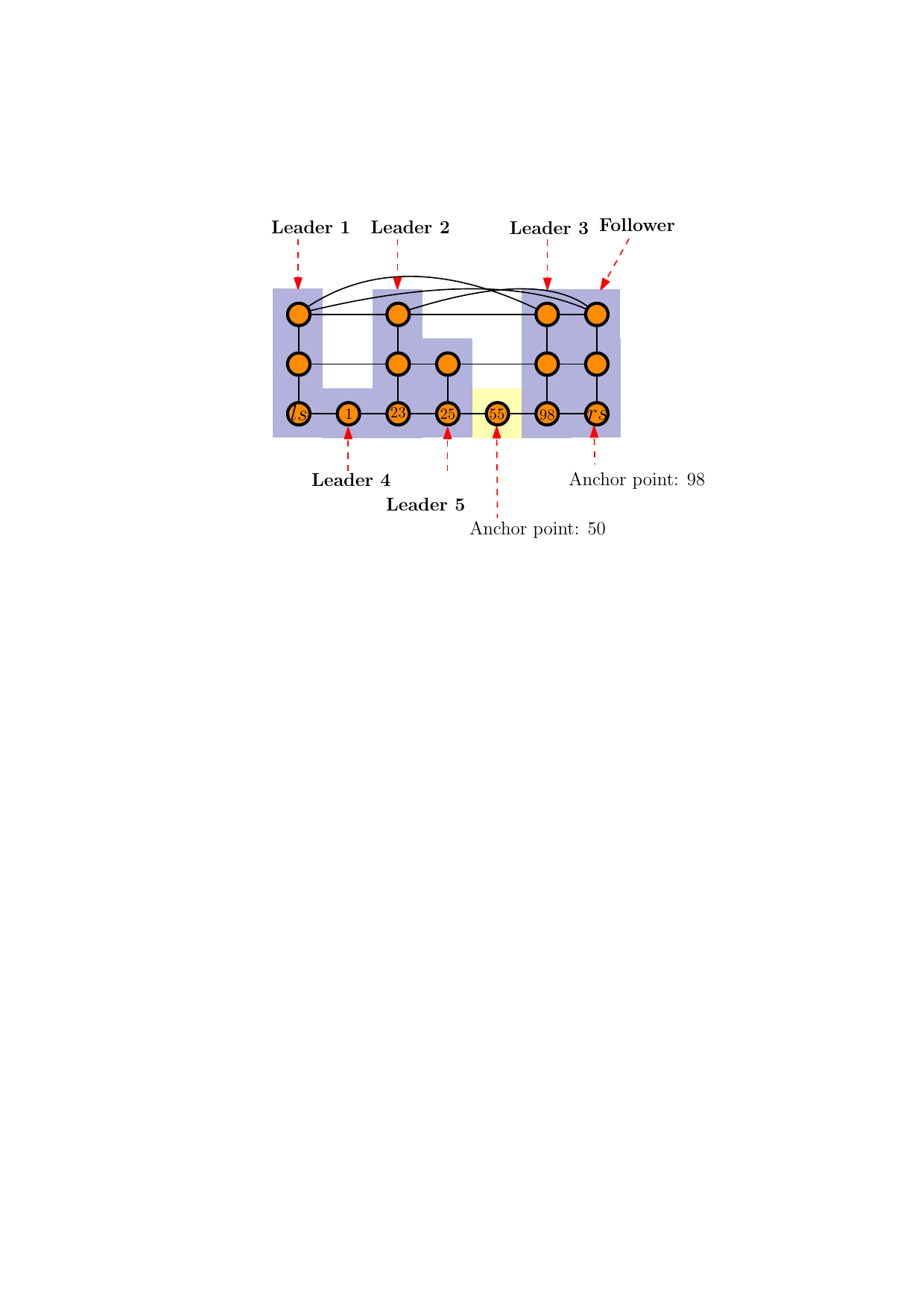}
		\caption{}\label{fig:wave_2_r4}
	\end{subfigure}
    \caption{$X_{\texttt{down}} = \{ls\}$ completed the merging in $\C$, consequently the node $1$ starts its own merge phase at $ls$ in level $0$ and in $\Oo(1)$ rounds finds its spot in the base level of $\C$ completing its own merge phase. $X_{\texttt{right}} = \{23,98,rs\}$ splits in two,  $X_{\texttt{down}}' = \{23\}$ and $X_{\texttt{right}}' = \{98,rs\}$ where $X_{\texttt{down}}'$ continues merging downwards while $X_{\texttt{right}}'$ moves right in level $1$. Finally, node $25$ starts its merge phase at level $1$ starting from $23$ in $\C$, and, being dependent on $23$, waits for $25$ to merge in level $0$ before proceeding. }
       \end{figure}
\begin{figure}[htb!]

\centering
\includegraphics[scale=0.8]{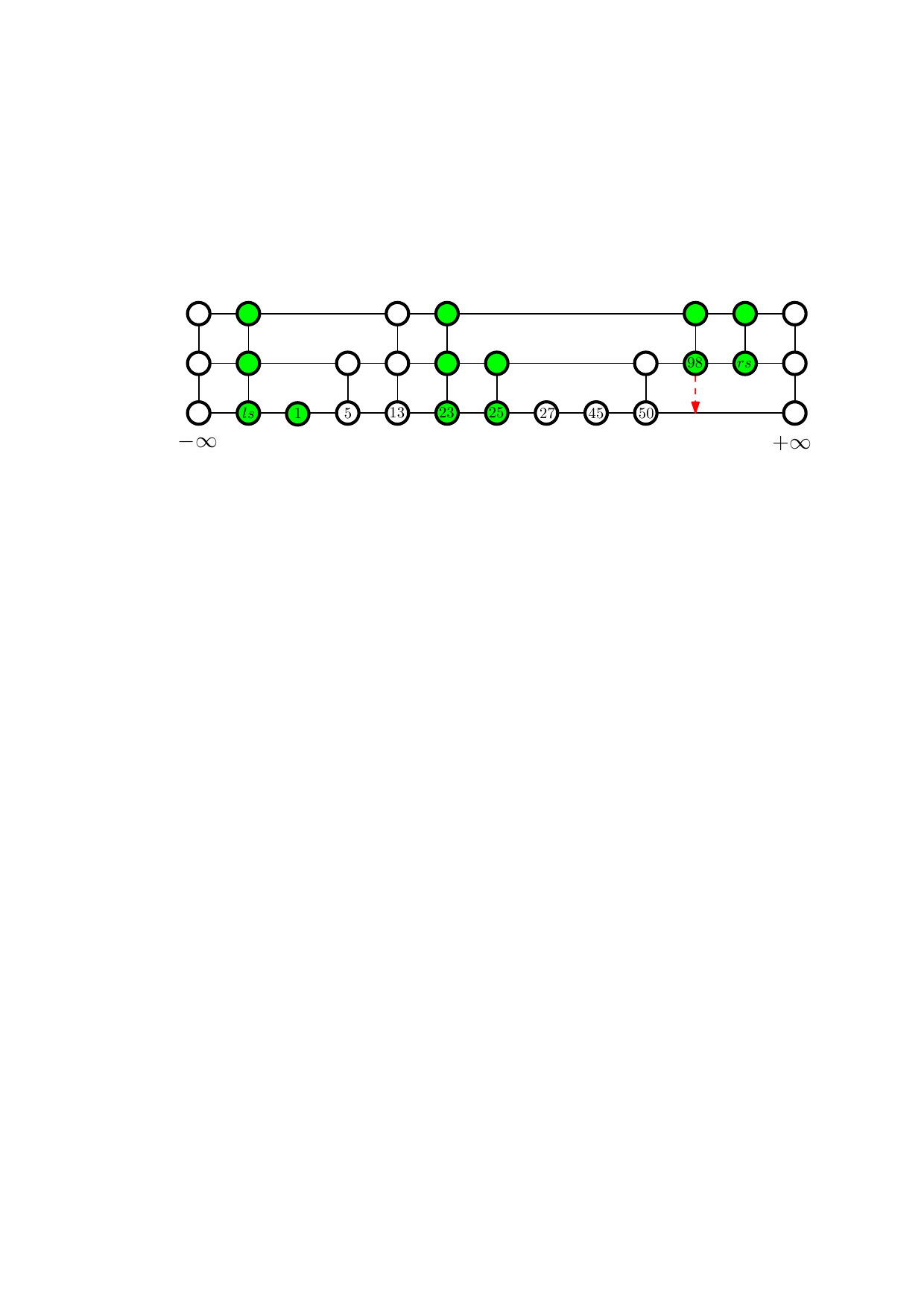}
\caption{$X_{\texttt{down}}' = \{23\}$ complete its own merge phase by merging at level level $0$. Consequently, $25$ merges in level $0$ in $\Oo(1)$ rounds. On the other side of the skip list, $X_{\texttt{right}}' = \{98,rs\}$ merges at level $1$ and jumps one level below, while $5$ ``virtually'' waits for its turn at node $50$ in $\C$.}\label{fig:wave_3_r5}
\end{figure}

\begin{figure}[htb!]
\centering
\includegraphics[scale=0.8]{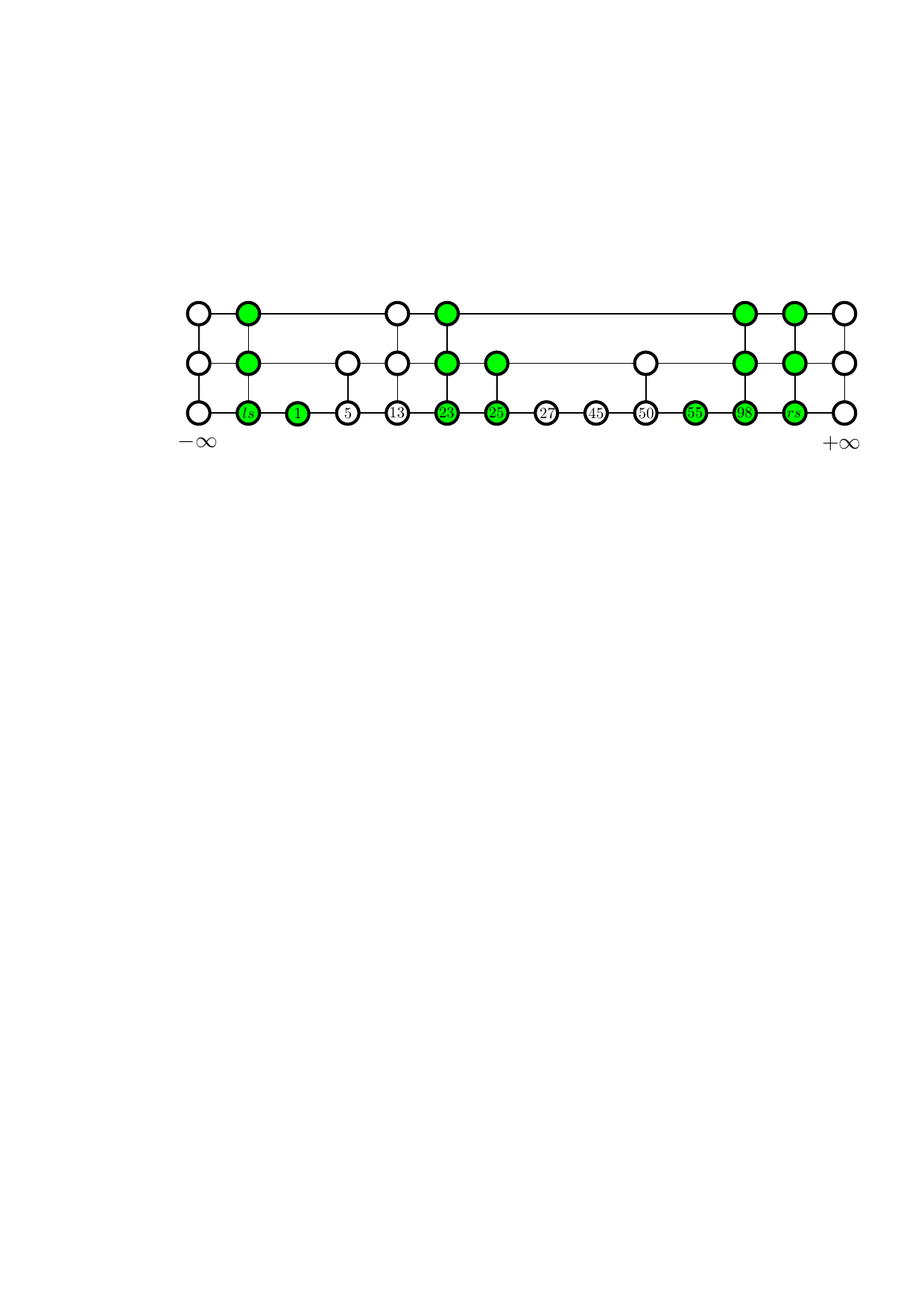}
\caption{$X_{\texttt{right}}' = \{98,rs\}$ completes its merge phase by merging in the bottom most level,  $5$ enters the merge phase starting at $50$ and successfully merges in $\Oo(1)$ rounds in level $0$.}\label{fig:wave_3_r6}
\end{figure}

\end{document}